\documentclass[a4paper,11pt, fleqn]{article}

\usepackage[a4paper, total={6.5in, 10in}]{geometry}
\usepackage{setspace} 

\usepackage{enumitem}
\newcommand{\E}{\mathbb{E}}
\newcommand{\abs}[1]{\left\lvert#1\right\rvert}

\usepackage[dvipsnames,svgnames]{xcolor}

\usepackage{amsmath,amsfonts,amssymb,amsthm}
\usepackage{dsfont}
\usepackage{booktabs}
\DeclareMathOperator*{\argmin}{arg\,min}
\usepackage{algorithm}
\usepackage{algpseudocode}
\usepackage{witharrows}
\usepackage{tikz}
\usetikzlibrary{arrows.meta, positioning, fit, backgrounds}
\usepackage{float}
\usepackage{authblk}
\usepackage{comment}
\usepackage{natbib}
\usepackage{longtable}
\usepackage{threeparttable}
\usepackage{tabularx}
\usepackage{ragged2e}
\usepackage{endnotes}
\usepackage{array}
\usepackage[hyphens]{url}
\usepackage[
	breaklinks=true,	
	colorlinks=true,	
	citecolor=blue, 
	linkcolor=blue,
	urlcolor=blue,	
	extension=pdf,		
	ngerman]{hyperref}
\usepackage{graphicx}
\usepackage{subcaption}

\theoremstyle{definition}
\newtheorem{example}{Example}
\theoremstyle{remark}
\newtheorem{remark}{Remark}
\theoremstyle{plain}
\newtheorem{theorem}{Theorem}
\newtheorem{lemma}{Lemma}
\newtheorem{corollary}{Corollary}
\newtheorem{assumption}{Assumption}

\title{Sparse Tree-Based Aggregation for Time Series Regressions}

\author[1]{Marie Corillon}
\author[1]{Stephan Smeekes}
\author[1]{Ines Wilms}
\affil[1]{Department of Quantitative Economics, Maastricht University, The Netherlands}

\date{July 1, 2026}

\begin{document}

\maketitle

\onehalfspacing

\begin{abstract}
High-dimensional time series regressions are often regularized to produce sparse coefficients. We show that temporal aggregation provides a powerful alternative to reduce dimensionality in high-order autoregressions and mixed-frequency regressions. To this end, we propose StarTime (Sparse Tree-based Aggregation for Time Series), a convex penalization method that uses a temporal tree to arrange lags hierarchically from high to low frequency. StarTime then flexibly selects coefficients to be aggregated at possibly varying frequencies, sparse or a combination thereof. We provide new  error bounds for StarTime, demonstrate improved estimation accuracy and recovery of aggregation and sparsity in simulations relative to benchmarks, and illustrate StarTime's relevance for financial and macroeconomic applications.
\end{abstract} 

\paragraph{\textbf{Keywords.}} Aggregation, forecasting, mixed-frequency data, penalization, time series

\paragraph{\textbf{JEL-Classification.}} C22, C53, C55

\doublespacing
\section{Introduction}

Forecasting financial and macroeconomic variables increasingly requires modeling dynamics across multiple time periods and temporal resolutions.
Daily realized volatility, for example, displays pronounced serial correlation; current outcomes depend on a long history of past observations (see, e.g., \citealp{Corsi2009}). Besides, empirical research increasingly requires mixed-frequency settings where higher-frequency predictors forecast a lower-frequency response; the MIxed DAta Sampling (MIDAS) framework \citep{MIDAStouch} is a prominent example. In both examples, the persistence and frequency mismatch naturally lead to high-dimensional models with many highly correlated regressors. In this paper, we propose the Sparse Tree-Based Aggregation for Time Series Regressions (\textit{StarTime}) penalized estimator that addresses the dimensionality in such models through data-driven temporal aggregation. It determines the sampling frequency at which lagged predictors enter the model, allowing for coefficients to be sparse, aggregated at varying lower frequencies, or both. 

Time series are nowadays recorded at relatively fine temporal resolutions (e.g., daily financial or weekly macroeconomic series). Modeling such series with high-order autoregressions or mixed-frequency regressions quickly gives rise to the curse of dimensionality. First, strong serial dependence implies that predictive information may be distributed across many, potentially noisy lags. Second, in mixed-frequency regressions, the problem is further amplified by the frequency mismatches: A single lower-frequency observation corresponds to numerous higher-frequency predictors, each entering as a separate regressor. As the frequency mismatch widens (e.g., daily-to-quarterly versus monthly-to-quarterly set-up), the dimensionality can increase substantially. 

Penalized estimators such as the Lasso and extensions have become popular to address dimensionality in time series regressions (see e.g.; \citealp{luo2025midas, babii2024high, masini2023machine, TernesHierarchical, Babii2022, smeekes2021automated, mogliani2021bayesian}). These are particularly effective when the underlying model is sparse. Persistent and mixed-frequency environments, however, differ in important ways. When dynamics are smooth and distributed over time, the true lag structure may be dense rather than sparse with highly correlated adjacent lags. As more lags are included, the design matrix becomes increasingly collinear, undermining variable selection methods that operate on individual coefficients. Penalized methods may then overshrink lags or select unstable subsets of lags, failing to capture the underlying temporal structure (see, e.g., \citealp{Zou2005}).

Temporal aggregation offers an appealing alternative when dynamics are smooth and densely distributed over time. Temporal aggregation in time series has long been studied (e.g., \citealp{Silvestrini2008, Rossana1995, Brewer1973, Amemiya1972,Tiao1972}). In finance, the Heterogeneous Autoregressive (HAR) model \citep{Corsi2009} offers a prominent, popular example: It aggregates daily realized volatility into weekly and monthly components, motivated by investor heterogeneity across trading horizons. It demonstrates strong forecasting performance, often rivaling or outperforming more complex machine learning methods such as Lasso \citep{HARdToBeat}. Similarly, aggregation-based measures also successfully  mitigate microstructure noise in high-frequency financial data (e.g., \citealp{Bollerslev2006, Todorov2012} ). 

In macroeconomics, mixed-frequency models also commonly rely on aggregation schemes to align predictors sampled at different rates. MIDAS regressions \citep{MIDAStouch}  employ polynomial distributed lags, while restricted MIDAS \citep{Andreou2016} and bridge equations \citep{Schumacher2016}  impose shape restrictions or fixed weighting schemes to simplify estimation. They are widely adopted in macroeconomic and financial research \citep{ghysels2007midas}. While these methods effectively reduce noise and dimensionality, they typically require the aggregation scheme to be specified a priori. In practice, however, the appropriate temporal resolution may not be known in advance. This motivates the need for a flexible method that determines the temporal aggregation structure in a data-driven way; we fill this gap.

In this paper, we propose \textit{StarTime} (Sparse Tree-Based Aggregation for Time Series Regressions), a penalized estimation procedure designed to reduce dimensionality in autoregressive and mixed-frequency regressions through data-driven temporal aggregation. StarTime allows adjacent lagged covariates to share a common coefficient, thereby selecting the sampling frequency at which lagged effects enter the model. It can produce temporally aggregated coefficients at varying lower frequencies, sparse coefficients, or a combination of both, enabling fully data-driven selection of the effective temporal resolution. To implement this idea, we leverage a temporal tree that reflects natural temporal aggregation. For each predictor, lags are arranged hierarchically from its highest available frequency (leaf nodes) toward progressively lower sampling frequencies (internal  and root nodes). We embed the tree  into a convex penalized optimization problem, encouraging nearby lags to fuse when supported by the data. When the underlying lag pattern varies smoothly, the procedure can aggregate adjacent lags and estimate identical coefficients. When finer resolution is warranted, it retains disaggregated lag-specific effects. Tree-based parameterizations have proven successful for high-dimensional regressions \citep{Yan2021, Fu2025}  and graphical models \citep{Wilms2022}; we adapt and extend these ideas to high-order autoregressions and mixed-frequency time-series regressions.

\begin{figure}[t]
    \centering
\includegraphics[width=0.85\textwidth]{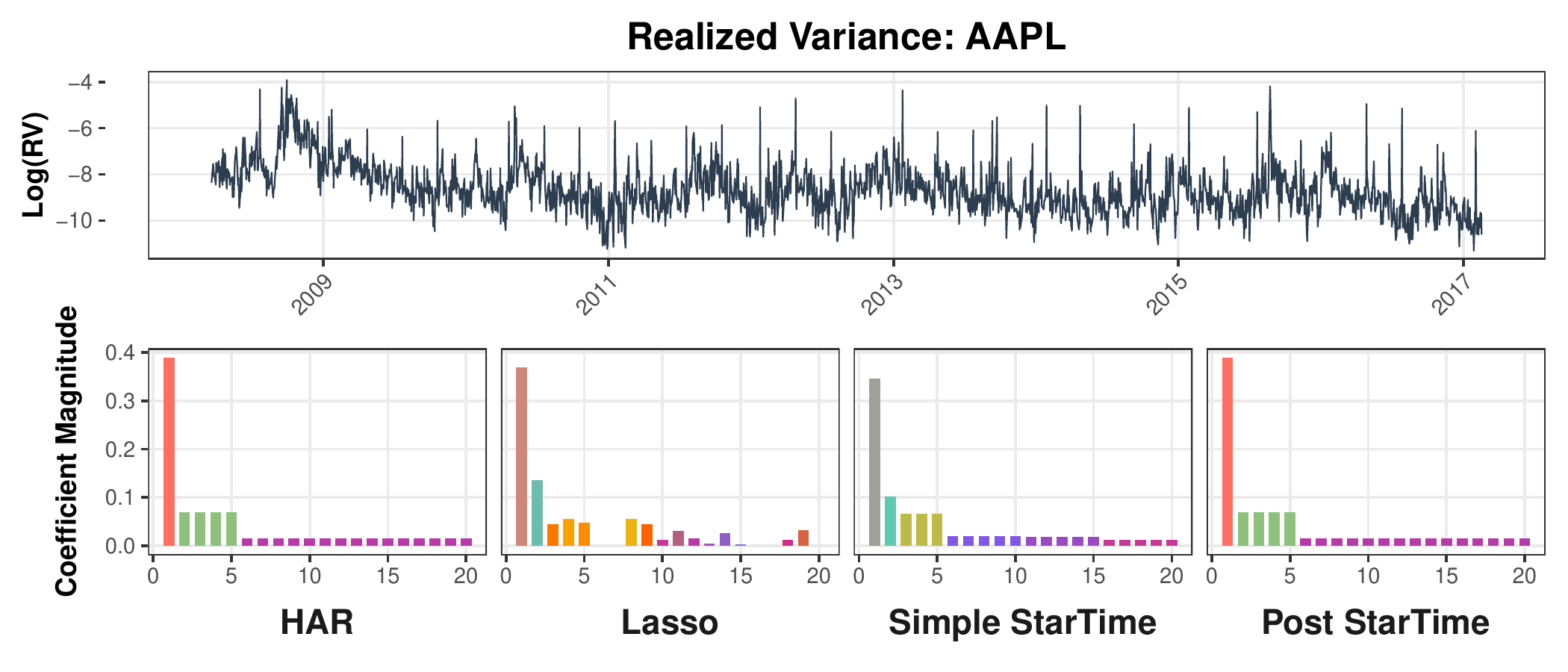}{}
    \caption{Top: Log-transformed realized variance for Apple Inc.\ (AAPL) from March 20, 2008 to February 17, 2017. Bottom: Estimated coefficients for the HAR, Lasso, and StarTime. Identical coefficient values are displayed in the same color.}
    \label{fig:intro_graph}
\end{figure}

Figure \ref{fig:intro_graph} illustrates  StarTime's ability to recover temporally aggregated lag structures in a data-driven manner. We consider an autoregressive model of order 20 for the log-transformed realized variance of Apple Inc.\ (AAPL) and  compare the estimated coefficients returned by the HAR, Lasso, and StarTime. The HAR returns a parsimonious coefficient structure by \textit{a priori} specifying aggregation at the daily, weekly and monthly lag frequency. Lasso selects individual lags in a scattered and difficult-to-interpret manner; its  performance relative to the HAR model has been mixed (e.g., \citealp{zhang2024volatility, christensen2023machine, audrino2020impact, audrino2016Lassoing}). StarTime uncovers an aggregation structure that closely aligns with the economically-motivated HAR specification, particularly when combined with a `post-selection' de-biasing step (Post versus Simple StarTime; see Section \ref{sec:startime}). In contrast to the HAR, StarTime uncovers this temporal aggregation structure fully data-driven, making it broadly applicable across diverse empirical settings.

We derive new error bounds for StarTime to guarantee prediction and estimation consistency  under near-epoch dependence assumptions. We extend the results for independent and identically distributed (IID) data in \cite{ Yan2021} to typical time series settings in  econometric applications characterized by non-Gaussianity, serial dependence, heteroskedasticity, and heavy tails. We hereby leverage the theoretical framework developed in \cite{Adamek2023}. Simulations demonstrate good performance of StarTime in  high-order autoregressive and mixed-frequency settings, particularly when the true lag structure is dense but smooth. We also illustrate StarTime's empirical relevance  in financial and macroeconomic applications. 

The remainder of the paper is organized as follows. Section \ref{sec:methodology} introduces the model and the tree-based aggregation mechanism. Section \ref{sec:startime} presents our penalized estimator and its algorithmic implementation. Section \ref{sec:performance}  establishes the theoretical properties of our estimator. Section \ref{sec:simulations} reports simulation results for autoregressive and mixed-frequency settings, and Section \ref{sec:application}  provides empirical applications. Section \ref{sec:conclusion} concludes.

\section{Temporal Aggregation in Time Series Regressions}
\label{sec:methodology}
We introduce our model set-up for mixed-frequency regressions and then propose a tree-based parametrization structure for temporal aggregation in such regressions.

\subsection{The Mixed-Frequency Regression Model} \label{sec:model}

We consider a general linear regression model for mixed-frequency time series with covariates observed at higher (or equal) frequencies than the response. 
Let $y_{mt}$ ($t=1, \ldots, T$) be the response at time $mt$ and $x^{[i]}_\tau$ the $i$th ($i=1,\ldots, D$) covariate at time $\tau$. For convenience, the first variable is observed at the highest frequency with $m$ recording its frequency mismatch with the response;  high-frequency variable $i=1$  is observed $m$ times per low-frequency
period of the response (e.g., for quarterly/monthly data $m = 3$). Similarly, $m_i \geq 1$  denotes the frequency mismatch of  variable $i$ relative to the first (by construction $m_1=1)$.
The model is given by
\begin{equation}
    y_{mt} = \sum_{i = 1}^{D} \sum_{j = 1}^{P_i} \beta_{j}^{[i]} x_{m(t-1) - m_i(j-1)}^{[i]} + \varepsilon_{mt}, \quad t = 1, \dots, T,    \label{eq:BaseEquation}
\end{equation}
where for each variable $i$, we include $P_i$ lags with parameter $\beta^{[i]}_j$ at lag $j$. The variables are assumed
to be mean-centered such that no intercept is included and $\varepsilon_{mt}$ denotes an error term. The total number of parameters to estimate is  $N := \sum_{i=1}^{D}P_i$. Model \eqref{eq:BaseEquation} includes autoregressive and mixed-frequency regressions, see Examples \ref{ex:AR} and \ref{ex:lowhigh}. 

\begin{example}[AR Models]
The autoregressive model of order $P$, AR($P$), given by
\begin{equation} \label{eq:AR}
y_{t} = \sum_{j = 1}^{P} \beta_{j} y_{t-j} + \varepsilon_{t}, \quad t = 1, \ldots, T,
\end{equation}
is a special case of model \eqref{eq:BaseEquation}, with $D=1, m=m_1=1$, $P_1=P, x^{[1]}_t = y_t$. 
We focus on high-order autoregressive models where adequately capturing the dynamics requires the inclusion of many lags.  The model dimensionality then increases with 
$P$.  
\label{ex:AR}
\end{example}

\begin{example}[Mixed-Frequency Models]
Model \eqref{eq:BaseEquation} accommodates unrestricted MIDAS regressions \citep{foroni2015umidas}. For simplicity, consider the model
\begin{equation} 
    y_{mt} = \sum_{j = 1}^{P_1} \beta^{[1]}_{j}\, x_{m(t-1) - (j - 1)} + \sum_{j = 1}^{P_2} \beta^{[2]}_{j}\, y_{m(t-1) - m(j - 1)} +\varepsilon_{mt}, \quad t = 1, \ldots, T,  \label{eq:MIDAS}
\end{equation}
with  low-frequency response,  high-frequency covariate $x^{[1]}_t = x_t$  with $P_1$ lags and $P_2$ lags of the response ($x^{[2]}_t = y_t$). A typical MIDAS set-up would consider a quarterly-monthly mismatch with $m=3$. While equation \eqref{eq:MIDAS}  permits forecasting, nowcasting can be achieved by including 
$x_{mt - (j - 1)}$
 rather than $x_{m(t-1) - (j - 1)}$.\endnote{Model \eqref{eq:MIDAS} uses different notation than the standard mixed-frequency literature; this is done for convenience; to facilitate the use of temporal aggregation in our modeling set-up.}

Model \eqref{eq:MIDAS} can easily be extended to multiple covariates with their own sampling frequency and lag structure. We focus on modeling mixed-frequency regressions where the curse of dimensionality arises due to two sources. First, the number of variables $D$ may become large relative to the sample size $T$. Secondly, and specific to mixed-frequency models, we allow for the inclusion of high-frequency covariates where the difference in sampling frequencies between the highest- and lowest-frequency ones can be substantial. In Section \ref{sec:macro}, for example, we model GDP growth using daily, weekly, monthly and quarterly variables, the frequency mismatches then range from $m=60$ (assuming 60 trading days in a quarter) over 20 (for day/month) to 5 (for day/week).  Large frequency mismatches further increase the dimensionality of the model, since the higher frequency variables typically require including more lags.
\label{ex:lowhigh}
\end{example}

Examples \ref{ex:AR} and \ref{ex:lowhigh} illustrate that the key drivers of the dimensionality in model \eqref{eq:BaseEquation} are the number of covariates ($D$), the number of lags ($P_i$s), and the frequency mismatches ($m, m_i$s).
We address this curse of dimensionality through penalized estimation. A typical choice would consist of inducing sparsity, for instance via the Lasso \citep{TibshiraniLasso}. In many practical applications with mixed-frequency time series, the lags are often highly correlated, but this is typically not handled well by the Lasso (e.g., Chapter 4 in \citealp{hastie2015sparsity}). Moreover, it is likely that  no single past observation affects the present but rather longer-run patterns, such as sums (or averages) across lags as in the HAR model.
We therefore address the curse of dimensionality in model \eqref{eq:BaseEquation} through temporal lag aggregation (in addition to sparsity). Our penalized estimation routine determines in a data-driven way how to aggregate over past observations in a longer period, thereby reducing the dimensionality by estimating adjacent parameters identically.  We do not impose aggregation structures on the lags a priori. Instead, we employ temporal trees to guide the aggregation in a natural, data-driven way, as discussed next.

\subsection{Temporal Aggregation with Trees}
\label{sec:NotationTreeAggregation}
We first introduce temporal aggregation for autoregressive models, as in equation \eqref{eq:AR}, using a single temporal tree.
We then discuss how  the idea of temporal aggregation can be extended for the general mixed-frequency model \eqref{eq:BaseEquation} with multiple trees.

\subsubsection{A Temporal Tree for AutoRegressive Models} \label{sec:tree-AR}
Consider AR models where the data are observed at high frequency (for instance daily). It is not always suitable to include each lag individually at that finest scale in model \eqref{eq:AR}, as high-frequency measurements can be noisy. Capturing effects at lower temporal resolutions may help reduce noise and amplify the underlying signal by revealing longer-run patterns. We leverage a temporal tree as side information to guide such lag aggregation.

\textit{Temporal Tree.} Time series naturally lend themselves to a tree structure, since finer time units can be grouped into coarser ones: Seconds form minutes, minutes form hours, hours form days, and so on. While our focus is on AR processes observed at relatively high sampling frequency, we illustrate the main idea of temporal aggregation using a smaller-scale toy example to keep the visualizations concise. In particular, consider an AR process where data are observed monthly. Figure \ref{fig:TreeParam} displays a temporal tree where monthly lags (six leaves) are aggregated naturally into two quarterly lags (internal nodes) and then into a single semester lag (root node). Moving up the tree thus corresponds to a natural transition from higher- to lower-frequency observations.

\begin{figure}[t]
   \centering
    \begin{subfigure}[t]{0.42\textwidth}
        \centering
        \begin{tikzpicture}
          [baseline=(root.base),
           scale=1, transform shape,
           level distance=14mm,
           every node/.style={circle,draw,inner sep=1pt},
           level 1/.style={sibling distance=30mm},
           level 2/.style={sibling distance=10mm}]
          \node[fill=cyan!90] (root) {$\gamma_{1,3}$} 
             child {node[fill=cyan!90] {$\gamma_{1,2}$}
               child {node (gamma11) {$\gamma_{1,1}$}}
               child {node (gamma21) {$\gamma_{2,1}$}}
               child {node[fill=cyan!90] (gamma31) {$\gamma_{3,1}$}
               edge from parent
          node[pos=0.35,right=2pt,draw=none] {\textcolor{cyan!90}{\(\mathbf{+}\)}}} edge from parent
          node[pos=0.65,right=2pt,draw=none] {\textcolor{cyan!90}{\(\mathbf{+}\)}}
               }
             child {node {$\gamma_{2,2}$}
               child {node (gamma41) {$\gamma_{4,1}$}}
               child {node (gamma51) {$\gamma_{5,1}$}}
               child {node (gamma61) {$\gamma_{6,1}$}}
               };
\node[draw=none, below=0.4cm of gamma11] (beta1) {$\beta_1$};
\node[draw=none, below=0.4cm of gamma21] (beta2) {$\beta_2$};
\node[draw=none, below=0.15cm of gamma31, rotate = 90, anchor = center, xshift = -0.3ex] (beta3) {\textcolor{cyan!90}{$=$}};
\node[fill=cyan!90, draw=none, below=0.4cm of gamma31] (beta3) {$\beta_3$};
\node[draw=none, below=0.4cm of gamma41] (beta4) {$\beta_4$};
\node[draw=none, below=0.4cm of gamma51] (beta5) {$\beta_5$};
\node[draw=none, below=0.4cm of gamma61] (beta6) {$\beta_6$};

        \end{tikzpicture}
    \end{subfigure}
    \hfill
    \begin{subfigure}[t]{0.12\textwidth}
        \centering
        \begin{tikzpicture}[
            baseline=(root.base),
            scale=1, transform shape,
            level distance=14mm,
            every node/.style={inner sep=1pt, draw=none, font=\small}
        ]
        \node (root) {Semester} 
            child[edge from parent/.style={draw=none}] {node {Quarters}
                child[edge from parent/.style={draw=none}] {node {Months}}
            };
        \end{tikzpicture}
    \end{subfigure}
    \hfill
    \begin{subfigure}[t]{0.42\textwidth}
        \centering
        \begin{tikzpicture}
          [baseline=(root.base),
           scale=1, transform shape,
           level distance=14mm,
           every node/.style={circle,draw,inner sep=1pt},
           level 1/.style={sibling distance=30mm},
           level 2/.style={sibling distance=10mm}]
          \node (root) {$\gamma_{1,3}$} 
             child {node {$\gamma_{1,2}$}
               child {node (gamma11) {$\gamma_{1,1}$}}
               child {node[fill=gray] (gamma21) {$\gamma_{2,1}$}}
               child {node[fill=gray] (gamma31) {$\gamma_{3,1}$}}
               }
             child {node {$\gamma_{2,2}$}
               child {node[fill=gray] (gamma41) {$\gamma_{4,1}$}}
               child {node[fill=gray] (gamma51) {$\gamma_{5,1}$}}
               child {node[fill=gray] (gamma61) {$\gamma_{6,1}$}}
               };
\node[draw=none, below=0.4cm of gamma11] (beta1) {$\beta_1$};
\node[draw=none, below=0.4cm of gamma21] (beta2) {$\beta_2$};
\node[draw=none, below=0.4cm of gamma31] (beta3) {$\beta_3$};
\node[draw=none, below=0.4cm of gamma41] (beta4) {$\beta_4$};
\node[draw=none, below=0.4cm of gamma51] (beta5) {$\beta_5$};
\node[draw=none, below=0.4cm of gamma61] (beta6) {$\beta_6$};

\begin{scope}
    \draw[densely dashed, rounded corners, thin]
      (beta2.south west) rectangle (beta3.north east)
      (beta4.south west) rectangle (beta6.north east);
\end{scope}
        \end{tikzpicture}
    \end{subfigure}
    \caption{Toy example of a temporal tree aggregating months into quarters and a semester. Left panel: Re-parametrization of the original parameters $\beta_j$ in terms of the $\gamma$s. The path from leaf 3 to the root, 
   highlighted in blue, presents the re-parametrization of its corresponding coefficient: 
   $\beta_3 = \gamma_{3,1} + \gamma_{1,2} + \gamma_{1,3}$. 
   Similarly, the other descendants of $\gamma_{1,2}$ are expressed with $\beta_i = \gamma_{i,1} + \gamma_{1,2} + \gamma_{1,3}$ 
   for $i = 1,2,3$, while the descendants of $\gamma_{2,2}$ are $\beta_j = \gamma_{j,1} + \gamma_{2,2} + \gamma_{1,3}$ 
   for $j = 4,5,6$. Right panel: Effect of penalization where the nodes in gray are zeroed out, thereby giving 
   $\beta_1 = \gamma_{1,1} + \gamma_{1,2}+\gamma_{1,3}$, $\beta_i = \gamma_{1,2}+\gamma_{1,3}$ for $i=2,3$ and  $\beta_j = \gamma_{2,2} + \gamma_{1,3}$ for $j=4,5,6$.} 
   \label{fig:TreeParam}
\end{figure}

While the  tree provides an intuitive way to perform temporal aggregation, in practice, the most suitable temporal aggregation level for the lags in model \eqref{eq:AR}  is often unknown a priori. In our toy example, lags may be incorporated at monthly, quarterly, semiannual frequencies, or even a  combination thereof. Our objective is  to let the data  determine the appropriate temporal resolution at which lags should enter the model. To this end, we employ a reparametrization strategy on the parameters $\beta_j$ of model \eqref{eq:AR}, extending the approach of \cite{Yan2021} to our time series setting. The temporal tree is central to help select, in a data-driven way, an appropriate temporal frequency at which lags should be included  in model \eqref{eq:AR}.

\textit{Tree-Based Model Parameterization.} Let $\mathcal{T}$ denote
the tree  that encodes the temporal aggregation structure across $L$ different frequency levels. At each level $l$, there are $p_l$ nodes, with $p_1$ 
denoting the number of leaves at the bottom level and $p_L = 1$ (the single root node). At each level $l \in \{1, \dots, L-1\}$, the $p_l$ nodes are grouped into $p_{l+1}$ parent nodes. Each parent node aggregates a fixed number $K_l$ of consecutive child nodes, so that $p_{l+1} = p_l/K_l$. This defines the branching structure of the tree, with each group at level $l$ aggregating $K_l$ nodes from the previous level.

Consider the toy example in Figure \ref{fig:TreeParam}. The tree has $L=3$ levels.  There are  $p_1 = 6$ leaf nodes, corresponding to the monthly lags. Those are grouped in triplets ($K_1=3$) leading to the $p_2 = 2$ quarterly internal nodes on level 2 which are, in turn, grouped into a single root node  ($K_2 = 2$) representing a  full semester of monthly observations. 

The tree  can now be used to naturally guide temporal aggregation of the lags in the AR model. We hereby aim to encourage aggregation of lags within a branch to form new, lower-frequency lags.
To implement such tree-guided aggregation, we associate a coefficient $\beta_j$ with every leaf of the tree ($p_1=P$ by construction) and encourage fusion of lags within a branch so that the descendant lags of that branch are aggregated when their effects are similar.

To enable fusion, each node in the tree (leaf, internal, root) gets assigned a new auxiliary parameter $\gamma_{k,l}$, where $k = 1, \dots, p_l$ indexes the nodes on level $l=1,\ldots,L$. We represent each original $\beta$ parameter as the sum of the $\gamma$ parameters along the path from its leaf to the root. The relation between the $\beta_j$  and $\gamma$s is expressed as 
\begin{equation*}\beta_{j} = \sum_{l=1}^L \gamma_{\theta_{\mathcal{T}}(j,l)}, 
\end{equation*}
where the mapping function  $\theta: (\mathcal{T},j,l) \rightarrow \theta_{\mathcal{T}}(j,l):= \big( k, l \big)$ uses the tree $\mathcal{T}$, the position of leaf $j$, and the level $l$ to uniquely identify 
each node in the tree. The left panel of Figure \ref{fig:TreeParam} highlights such a path in blue, namely for parameter $\beta_3$ corresponding to leaf $j=3$. We then have $\beta_3 = \gamma_{3,1} + \gamma_{1,2} + \gamma_{1,3}$, using that at level 1, $\theta_{\mathcal{T}}(3,1) = (3,1)$, at level 2, $\theta_{\mathcal{T}}(3,2) = (1,2)$, and at level 3, $\theta_{\mathcal{T}}(3,3) = (1,3)$.

Finally, stacking all $N$ parameters in $\boldsymbol\beta$ (for the AR model $N=P)$, we can write $\boldsymbol{\beta} = \boldsymbol{A} \boldsymbol{\gamma}$, where $\boldsymbol{A} \in \{0,1\}^{N \times |\mathcal{T}|}$ encodes the tree structure and $\boldsymbol{\gamma} \in \mathbb{R}^{|\mathcal{T}|\times 1}$ stacks the parameters from each level of the tree sequentially, starting with all level-1 (leaf) nodes ordered from left to right, followed by the level-2 nodes in the same order, and proceeding upward until the root node at level $L$. The entries of $\boldsymbol{A}$ are given by 
\begin{equation*} 
    a_{j,p} = 
    \begin{cases}
        1 & \text{if the node at position $p$ lies on the path from $\gamma_{j,1}$ to $\gamma_{1,L}$}, \\
        0 & \text{otherwise.}
    \end{cases}
\end{equation*}

\textit{Temporal Aggregation Through Penalization.}
We reparametrize the $\beta_j$s  in terms of the $\gamma$s since aggregation can now be achieved by zeroing out $\gamma$s. Each branch in the tree reflects the aggregation of its descendants.
Zeroing out $\gamma$s for all descendants of a branch fuses the coefficients in this branch, leading to  temporal aggregation. 

In the toy example, zeroing out all monthly leaves results in the inclusion of its corresponding quarterly effect. In Figure \ref{fig:TreeParam} (right panel), we illustrate the effect of zeroing out $\gamma$s (their nodes are grayed out). For example, months 4 to 6 are no longer included individually in the model. Instead, their coefficients are equal ($\beta_4 = \beta_5 = \beta_6 = \gamma_{2,2} + \gamma_{1,3}$), so that their aggregate quarterly effect is captured in the model.

By allowing each $\gamma$ in the tree to be set to zero, we allow the model to flexibly determine the appropriate level of aggregation for each part of the lag space. 
In the same figure, the first three monthly lags display distinct behavior either remaining separate (not fused) in the model like the first monthly lag $\beta_1$ or entering partially fused like the second and third monthly lags ($\beta_2 = \beta_3 = \gamma_{1,2}+\gamma_{1,3}$). As a result, we reduce the dimensionality of the AR model \eqref{eq:AR} by fusing certain lags, hence performing temporal aggregation, while still allowing for meaningful differences in the temporal dynamics when economically relevant and supported by the data. 

\begin{remark}\label{rem:non-unique-gamma}
Multiple configurations of the zeroed out $\boldsymbol\gamma$s may induce the same aggregation in the target parameter $\boldsymbol{\beta}$. In the toy example, zeroing out $\gamma_{2,2}$ in addition to the other zeroed out $\gamma$s leads to the same fusion as before (then $\beta_j = \gamma_{1,3}$ for $j=4,5,6$). The  non-uniqueness in terms of the $\gamma$s is not problematic, since the $\beta$s are the parameters of  interest. From a computational perspective, this non-uniqueness does not hinder estimation: The optimization problem, introduced in Section \ref{sec:startime}, is convex in $\boldsymbol\beta$ and $\boldsymbol\gamma$ and both can be estimated in a computationally efficient way. 
\end{remark}

\begin{example}[HAR Model]
The HAR model by \cite{Corsi2009} can be recovered through our temporal aggregation framework. Consider AR model \eqref{eq:AR} with $P=20$ daily lags. The temporal tree in Figure \ref{fig:HARTree} encourages aggregation of the daily lags (20 leaves) into four weekly lags (internal nodes) and one single monthly lag (root). The sparsity structure on the $\gamma$s in Figure \ref{fig:HARTree} induces the fusion constraints  $\beta_1 = \gamma_{1,1} + \gamma_{1,2} + \gamma_{1,3}, \beta_{i} =\gamma_{1,2} + \gamma_{1,3}$ for $i=2, \dots, 5$ and $\beta_{j} = \gamma_{1,3}$ for $j=6, \dots, 20$. Incorporating these constraints into the AR(20) leads to the HAR model, namely 
\begin{align*}
    y_{t} &= \sum_{j=1}^{20} \beta_j y_{t-j} + \varepsilon_{t} \\
        &= \left (\gamma_{1,1} + \gamma_{1,2} + \gamma_{1,3} \right ) y_{t-1} + \left ( \gamma_{1,2} + \gamma_{1,3} \right ) \sum_{j=2}^{5} y_{t-j} + \gamma_{1,3}\sum_{j=6}^{20} y_{t-j} + \varepsilon_{t} \\
        &= \gamma_{1,1} y_{t-1} + \gamma_{1,2} \sum_{j=1}^{5} y_{t-j} + \gamma_{1,3} \sum_{j=1}^{20} y_{t-j} + \varepsilon_{t}, 
\end{align*}
where line 2 follows from substituting the sparsity constraints on $\gamma$ depicted in Figure \ref{fig:HARTree} and line 3 rearranges terms to express the HAR model according to its daily ($\beta^{(d)} = \gamma_{1,1}$), weekly ($\beta^{(w)} = 5\gamma_{1,2}$), and monthly ($\beta^{(m)} = 20\gamma_{1,3}$) effects. 
\end{example}

Unlike the HAR model, we do not impose the aggregation structure a priori. Instead, the temporal tree serves as guidance for potential temporal aggregations. The latter are determined in a data-driven manner through tree-based penalized estimation (Section \ref{sec:startime}). In particular, we add an $\ell_1$ penalty on $\boldsymbol\gamma$ to the objective to enable selection of a suitable temporal aggregation structure for the problem at hand.

 \begin{figure}[t]
    \centering
        \begin{tikzpicture}
          [scale=1, transform shape,
          level distance=15mm,
            every node/.style={
            circle,
            draw,
            inner sep=0pt,
            minimum size=20pt,
            align=center,
            font=\scriptsize},
           level 1/.style={sibling distance=40mm},
           level 2/.style={sibling distance=8mm}]
          \node {$\gamma_{1,3}$}
             child {node {$\gamma_{1,2}$}
               child {node (gamma11) {$\gamma_{1,1}$}}
               child {node[fill=gray] (gamma21) {$\gamma_{2,1}$}}
               child {node[fill=gray] (gamma31) {$\gamma_{3,1}$}}
               child {node[fill=gray] (gamma41) {$\gamma_{4,1}$}}
               child {node[fill=gray] (gamma51) {$\gamma_{5,1}$}}
             }
              child {node[fill=gray] {$\gamma_{2,2}$}
               child {node[fill=gray] (gamma61) {$\gamma_{6,1}$}}
               child {node[fill=gray] (gamma71) {$\gamma_{7,1}$}}
               child {node[fill=gray] (gamma81){$\gamma_{8,1}$}}
               child {node[fill=gray] (gamma91){$\gamma_{9,1}$}}
               child {node[fill=gray] (gamma101){$\gamma_{10,1}$}}
             }
             child {node[fill=gray] {$\gamma_{3,2}$}
               child {node[fill=gray] (gamma111){$\gamma_{11,1}$}}
               child {node[fill=gray] (gamma121){$\gamma_{12,1}$}}
               child {node[fill=gray] (gamma131){$\gamma_{13,1}$}}
               child {node[fill=gray] (gamma141){$\gamma_{14,1}$}}
               child {node[fill=gray] (gamma151){$\gamma_{15,1}$}}
             }
              child {node[fill=gray] {$\gamma_{4,2}$}
               child {node[fill=gray] (gamma161){$\gamma_{16,1}$}}
               child {node[fill=gray] (gamma171){$\gamma_{17,1}$}}
               child {node[fill=gray] (gamma181){$\gamma_{18,1}$}}
               child {node[fill=gray] (gamma191){$\gamma_{19,1}$}}
               child {node[fill=gray] (gamma201){$\gamma_{20,1}$}}
             };

\node[draw=none, below=0.3cm of gamma11] (beta1) {$\beta_1 $};
\node[draw=none, below=0.3cm of gamma21] (beta2) {$\beta_2 $};
\node[draw=none, below=0.3cm of gamma31] (beta3) {$\beta_3 $};
\node[draw=none, below=0.3cm of gamma41] (beta4) {$\beta_4 $};
\node[draw=none, below=0.3cm of gamma51] (beta5) {$\beta_5 $};
\node[draw=none, below=0.3cm of gamma61] (beta6) {$\beta_6 $};
\node[draw=none, below=0.3cm of gamma71] (beta7) {$\beta_7 $};
\node[draw=none, below=0.3cm of gamma81] (beta8) {$\beta_8 $};
\node[draw=none, below=0.3cm of gamma91] (beta9) {$\beta_9 $};
\node[draw=none, below=0.3cm of gamma101] (beta10) {$\beta_{10} $};
\node[draw=none, below=0.3cm of gamma111] (beta11) {$\beta_{11} $};
\node[draw=none, below=0.3cm of gamma121] (beta12) {$\beta_{12} $};
\node[draw=none, below=0.3cm of gamma131] (beta13) {$\beta_{13} $};
\node[draw=none, below=0.3cm of gamma141] (beta14) {$\beta_{14} $};
\node[draw=none, below=0.3cm of gamma151] (beta15) {$\beta_{15} $};
\node[draw=none, below=0.3cm of gamma161] (beta16) {$\beta_{16} $};
\node[draw=none, below=0.3cm of gamma171] (beta17) {$\beta_{17} $};
\node[draw=none, below=0.3cm of gamma181] (beta18) {$\beta_{18} $};
\node[draw=none, below=0.3cm of gamma191] (beta19) {$\beta_{19} $};
\node[draw=none, below=0.3cm of gamma201] (beta20) {$\beta_{20} $};

\begin{scope}
    \draw[densely dashed, rounded corners, thin]
      (beta2.south west) rectangle (beta5.north east)
      (beta6.south west) rectangle (beta20.north east);
\end{scope}
        \end{tikzpicture}
        \caption{Temporal tree aggregating 20 days into four weeks and one month. The sparsity pattern on the $\gamma$s (gray nodes are zeroed out) leads to recovery of the HAR model.}
        \label{fig:HARTree}
\end{figure}

\subsubsection{Temporal Trees for Mixed-Frequency Models} \label{sec:tree-general}
Temporal aggregation introduced in Section \ref{sec:tree-AR} for AR models can easily be extended to the general mixed-frequency model \eqref{eq:BaseEquation}. 
We assign a separate temporal tree $\mathcal{T}^{[i]}$ for each covariate $i$ to govern the temporal aggregation of its  lags. 
The leaves represent the highest recorded frequency for that variable, the root represents the lowest temporal frequency of interest. Each variable $i$ has its auxiliary parameter vector $\boldsymbol{\gamma}^{[i]}$ and matrix $\boldsymbol{A}^{[i]}$ that relates the parameters $\beta^{[i]}_j$ to the $\boldsymbol\gamma^{[i]}$.  The vector $\boldsymbol{\gamma}$ stacks the individual vectors $\boldsymbol{\gamma}^{[i]}$ vertically, ordered from variable $1$ to $D$. Similarly, the matrix $\boldsymbol{A}$ is constructed by placing the individual matrices $\boldsymbol{A}^{[i]}$ on its diagonal, so that 
$\boldsymbol{A} \in \{0,1\}^{N \times M}$, with $M = \sum_{i=1}^{D} |\mathcal{T}^{[i]}|$.
Aggregation and fusion for variable $i$ are performed  within its tree,  independently of all other variables. Temporal aggregation of one variable does not constrain that of another, allowing full flexibility.

\begin{figure}[t]
    \centering
    \begin{subfigure}[t]{0.65\textwidth}
        \centering
        \begin{tikzpicture}
          [scale=1, transform shape,
          baseline=(beta1.base),
          level distance=10mm,
            every node/.style={
            circle,
            draw,
            inner sep=0pt,
            minimum size=20pt,
            align=center,
            font=\scriptsize},
           level 1/.style={sibling distance=27mm},
           level 2/.style={sibling distance=9mm}]
          \node (gamma13) {$\gamma_{1,3}^{[1]}$}
               child {node (gamma12) {$\gamma_{1,2}^{[1]}$}
                 child {node (gamma11) {$\gamma_{1,1}^{[1]}$}}
                 child {node (gamma21) {$\gamma_{2,1}^{[1]}$}}
                 child {node (gamma31) {$\gamma_{3,1}^{[1]}$}}
                }
               child {node (gamma22) {$\gamma_{2,2}^{[1]}$}
                 child {node (gamma41) {$\gamma_{4,1}^{[1]}$}}
                 child {node (gamma51) {$\gamma_{5,1}^{[1]}$}}
                 child {node (gamma61) {$\gamma_{6,1}^{[1]}$}}
               }
               child {node (gamma32) {$\gamma_{3,2}^{[1]}$}
                 child {node  (gamma71) {$\gamma_{7,1}^{[1]}$}}
                 child {node  (gamma81) {$\gamma_{8,1}^{[1]}$}}
                 child {node  (gamma91) {$\gamma_{9,1}^{[1]}$}}
                }
               child {node (gamma42) {$\gamma_{4,2}^{[1]}$}
                 child {node (gamma101) {$\gamma_{10,1}^{[1]}$}}
                 child {node (gamma111) {$\gamma_{11,1}^{[1]}$}}
                 child {node (gamma121) {$\gamma_{12,1}^{[1]}$}}
               };
               \node[draw=none, below=0.3cm of gamma11] (beta1) {$\beta^{[1]}_1$};
\node[draw=none, below=0.3cm of gamma21] (beta2) {$\beta^{[1]}_2$};
\node[draw=none, below=0.3cm of gamma31] (beta3) {$\beta^{[1]}_3$};
\node[draw=none, below=0.3cm of gamma41] (beta4) {$\beta^{[1]}_4$};
\node[draw=none, below=0.3cm of gamma51] (beta5) {$\beta^{[1]}_5$};
\node[draw=none, below=0.3cm of gamma61] (beta6) {$\beta^{[1]}_6$};
\node[draw=none, below=0.3cm of gamma71] (beta7) {$\beta^{[1]}_7$};
\node[draw=none, below=0.3cm of gamma81] (beta8) {$\beta^{[1]}_8$};
\node[draw=none, below=0.3cm of gamma91] (beta9) {$\beta^{[1]}_9$};
\node[draw=none, below=0.3cm of gamma101] (beta10) {$\beta^{[1]}_{10}$};
\node[draw=none, below=0.3cm of gamma111] (beta11) {$\beta^{[1]}_{11}$};
\node[draw=none, below=0.3cm of gamma121] (beta12) {$\beta^{[1]}_{12}$};

        \end{tikzpicture}
        \caption{Monthly Tree}
        \label{fig:multiexample_tree1}
    \end{subfigure}
    \hspace*{\fill}
    \begin{subfigure}[t]{0.3\textwidth}
        \centering
        \begin{tikzpicture}
          [scale=1, transform shape,
          baseline=(beta21.base),
          level distance=10mm,
            every node/.style={
            circle,
            draw,
            inner sep=0pt,
            minimum size=20pt,
            align=center,
            font=\scriptsize},
           level 1/.style={sibling distance=8mm}]
          \node (gamma13) {$\gamma_{1,2}^{[2]}$}
               child {node (gamma11) {$\gamma_{1,1}^{[2]}$}
                }
               child {node(gamma21) {$\gamma_{2,1}^{[2]}$}
               }
               child {node (gamma31) {$\gamma_{3,1}^{[2]}$}
                }
               child {node (gamma41) {$\gamma_{4,1}^{[2]}$}
               };
\node[draw=none, below=0.3cm of gamma11] (beta21) {$\beta^{[2]}_1$};
\node[draw=none, below=0.3cm of gamma21] (beta22) {$\beta^{[2]}_2$};
\node[draw=none, below=0.3cm of gamma31] (beta23) {$\beta^{[2]}_3$};
\node[draw=none, below=0.3cm of gamma41] (beta24) {$\beta^{[2]}_4$};
        \end{tikzpicture}
        \caption{Quarterly Tree}
        \label{fig:multiexample_tree2}
    \end{subfigure}
    \caption{Trees for a mixed-frequency regression with a quarterly response and monthly covariate.}
  \label{fig:MultiTree}  
\end{figure}

\begin{example}[Mixed-Frequency Regressions with Multiple Trees]
\label{ex:MultiTrees}
Consider model \eqref{eq:MIDAS} with a quarterly response and monthly covariate (hence $m=3$). When including $P_1 = 12$ monthly lags and $P_2=4$ quarterly lags, 
the model takes the form
\begin{equation*} 
    y_{3t} = \sum_{j=1}^{12} \beta_{j}^{[1]} x^{[1]}_{3(t-1) - (j - 1)} + \sum_{j=1}^{4} \beta_{j}^{[2]} y_{3(t-1) - 3j} + \varepsilon_{3t}, \quad t = 1, \ldots, T.
\end{equation*} 
Figure \ref{fig:MultiTree} displays the  temporal trees for the monthly variable and the quarterly variable, each aggregating to a yearly frequency at the root. 
The relationship between the $N=16$-dimensional $\boldsymbol{\beta}$-vector and the $M=22$-dimensional $\boldsymbol{\gamma}$-vector is given by
\begin{equation*}
\boldsymbol{\beta} = 
    \begin{bmatrix}
        \boldsymbol{\beta}^{[1]} \\
        \boldsymbol{\beta}^{[2]}
    \end{bmatrix}
    = \begin{bmatrix}
        \boldsymbol{A}^{[1]} & \boldsymbol{O}_{12 
        \times 5} \\
        \boldsymbol{O}_{4
        \times 17} & \boldsymbol{A}^{[2]}
    \end{bmatrix}
    \begin{bmatrix}
        \boldsymbol{\gamma}^{[1]} \\
        \boldsymbol{\gamma}^{[2]}
    \end{bmatrix} 
    = \boldsymbol{A} \boldsymbol{\gamma},
\end{equation*}
where $\boldsymbol{\beta}^{[1]} \in \mathbb{R}^{12\times 1}$ 
collects the monthly  and $\boldsymbol{\beta}^{[2]} \in \mathbb{R}^{4\times 1}$ the quarterly parameters, similarly for $\boldsymbol{\gamma}^{[1]} \in \mathbb{R}^{17\times 1}$ 
and $\boldsymbol{\gamma}^{[2]} \in \mathbb{R}^{5\times 1}$, 
$\boldsymbol{A}^{[1]} \in \{0,1\}^{12 \times 17}$, $\boldsymbol{A}^{[2]} \in \{0,1\}^{4 \times 5}$, $\boldsymbol{A} \in \{0,1\}^{16 \times 22}$ and $\boldsymbol{O}$ is a conformable matrix of zeros.
\end{example}

\section{Sparse Tree-Based Aggregation through Penalized Estimation}\label{sec:startime}
We first introduce our penalized estimation procedure. Details on our algorithmic implementation and the hyperparameter tuning are discussed subsequently. 

\subsection{Estimator} \label{sec:estimators}
Consider model \eqref{eq:BaseEquation} in compact matrix form $\boldsymbol{y} = \boldsymbol{X}\boldsymbol{\beta} + \boldsymbol{\varepsilon},$ where $\boldsymbol{y} \in \mathbb{R}^{T\times 1}$ is the response vector, $\boldsymbol{X} \in \mathbb{R}^{T \times N}$ is the matrix of covariates, containing the lags of $x_t^{[i]}$ in its columns for $i = 1, \dots, D$, $\boldsymbol{\beta} \in \mathbb{R}^{N \times 1}$ is the parameter vector, and $\boldsymbol{\varepsilon} \in \mathbb{R}^{T\times 1}$ contains the errors. 
To estimate the model parameters, we propose the \textit{StarTime} estimator, which is defined as the solution to the following objective
\begin{equation}
\boldsymbol{\Tilde{\beta}}_{\lambda_1,\lambda_2},\boldsymbol{\Tilde{\gamma}}_{\lambda_1, \lambda_2} = \argmin_{\boldsymbol{\beta}, \boldsymbol{\gamma}} \bigg\{ \frac{1}{2T} \| \boldsymbol{y} - \boldsymbol{X} \boldsymbol{\beta} \|^2_2 
+ \lambda_1  \|\boldsymbol{\gamma}\|_1 
+ \lambda_2 \|\boldsymbol{\beta}\|_1 \; \; \text{s.t.} \; \boldsymbol{\beta} = \boldsymbol{A \gamma}  
\bigg\},
\label{eq:objectivefunction}
\end{equation}
where the first part in the objective is the usual sum of squared residuals, the second part consists of the penalty terms on $\boldsymbol\gamma$ and $\boldsymbol\beta$, and the constraint $\boldsymbol{\beta} = \boldsymbol{A \gamma}$ imposes the tree correspondence between the $\gamma$'s and $\beta$'s, as explained in Section \ref{sec:NotationTreeAggregation}. The penalty $\lambda_1 \|\boldsymbol{\gamma}\|_1$ induces sparsity on the tree nodes, encouraging temporal aggregation of the lags where appropriate.
The  $\lambda_2 \|\boldsymbol{\beta}\|_1$ encourages direct sparsity on the lagged coefficients. The tuning parameter $\lambda_1$ controls the level of aggregation: Larger values of $\lambda_1$ result in a higher degree of aggregation, whereas $\lambda_2$ controls the degree of sparsity. StarTime can thus perform joint estimation, temporal aggregation and lag selection.
When $\lambda_1 = 0$, StarTime reduces to the classical Lasso. 
We refer to the StarTime estimator in equation \eqref{eq:objectivefunction} as `Simple StarTime' from here on.

We also consider a `post-selection' 
variant referred to as `Post StarTime'. Given the sparsity and aggregation pattern in $\boldsymbol{\Tilde{\beta}}$, we refit the model using OLS on the reduced set of selected, potentially temporally aggregated variables. 
This two-stage procedure can improve estimation accuracy by removing the bias induced by the penalties. 
Formally, let $\boldsymbol{\Tilde{\beta}} \in \mathbb{R}^{N\times 1}$ denote the vector encoding the selected and aggregated variables. 
Denote the vector collecting the unique, non-zero entries of $\boldsymbol{\Tilde{\beta}}$  by $\boldsymbol{\Tilde{\phi}} \in \mathbb{R}^{Q\times 1}$, where $Q$ denotes the number of unique, non-zero values in $\boldsymbol{\Tilde{\beta}}$. The aggregation matrix $\Tilde{\boldsymbol{W}} \in \mathbb{R}^{N \times Q}$ is given by
\begin{equation*}
    \Tilde{w}_{j,\ell} = 
\begin{cases}
1 & \text{if } \Tilde{\beta}_j = \Tilde{\phi}_\ell \\
0 & \text{otherwise}
\end{cases}
\qquad\text{for } j = 1, \ldots, N, \ell = 1, \ldots, Q.
\end{equation*}
We can then write $\boldsymbol{\Tilde{\beta}} = \boldsymbol{\Tilde{W}} \boldsymbol{\Tilde{\phi}}$ and  
$\boldsymbol{X} \boldsymbol{\Tilde{\beta}} = \boldsymbol{X} \boldsymbol{\Tilde{W}} \boldsymbol{\Tilde{\phi}}$ with aggregated matrix  $\boldsymbol{\Tilde{X}} = \boldsymbol{X} \boldsymbol{\Tilde{W}},$ whose $\ell$-th column is the sum of the columns of $\boldsymbol{X}$ corresponding to indices $j$ for which $\Tilde{\beta}_j = \Tilde{\phi}_\ell$, and the columns with $\Tilde{\beta}_j = 0$ are omitted.
We then perform OLS estimation with $\boldsymbol{y}$ as response and $\boldsymbol{\Tilde{X}}$ as design matrix to obtain $ \boldsymbol{\hat{\phi}} = (\boldsymbol{\Tilde{X}}^\top\boldsymbol{\Tilde{X}})^{-1}\boldsymbol{\Tilde{X}}^\top\boldsymbol{y}.$ Finally, we map $\boldsymbol{\hat{\phi}}$ onto $\boldsymbol{\hat{\beta}} \in \mathbb{R}^{N\times 1}$ 
to obtain the Post StarTime estimator $\boldsymbol{\hat{\beta}} =  \boldsymbol{\Tilde{W}} \boldsymbol{\hat{\phi}}.$

\subsection{Algorithm}\label{sec:algorithm}

To solve \eqref{eq:objectivefunction}, we develop an alternating direction method of multipliers (ADMM) algorithm \citep{boyd2011admm} based on solving the
equivalent formulation of \eqref{eq:objectivefunction}:
\begin{equation}
\begin{aligned}
&\min_{\substack{
    \boldsymbol\beta^{(1)},\, \boldsymbol\beta^{(2)},\, \boldsymbol\beta^{(3)},\, \boldsymbol\beta \\
    \boldsymbol\gamma^{(1)},\, \boldsymbol\gamma^{(2)},\, \boldsymbol\gamma 
}}
\left\{
\frac{1}{2T} \| \boldsymbol{y} - \boldsymbol{X}\boldsymbol{\beta}^{(1)} \|_2^2
+ \lambda_1 \| \boldsymbol{\gamma}^{(1)} \|_1
+ \lambda_2 \| \boldsymbol{\beta}^{(2)} \|_1
\right. \\[2ex]
&\hspace{7.5em} \left.
\text{s.t. }~
\boldsymbol{\beta}^{(3)} = \boldsymbol{A}\boldsymbol{\gamma}^{(2)},~
\boldsymbol{\beta} = \boldsymbol{\beta}^{(1)} = \boldsymbol{\beta}^{(2)} = \boldsymbol{\beta}^{(3)},
\boldsymbol{\gamma} = \boldsymbol{\gamma}^{(1)} = \boldsymbol{\gamma}^{(2)}
\right\},
\end{aligned} \nonumber
\end{equation}
where additional copies of $\boldsymbol\beta$ ($\boldsymbol\beta^{(1)}, \boldsymbol\beta^{(2)}, \boldsymbol\beta^{(3)}$) and $\boldsymbol\gamma$ ($\boldsymbol\gamma^{(1)}, \boldsymbol\gamma^{(2)}$)  efficiently decouple the optimization problem, whereas the global  $\boldsymbol\beta$ and $\boldsymbol\gamma$ ensure the different subproblems are in agreement. The subproblems of our ADMM routine are identical to the subproblems of \cite{Yan2021}; their Appendix D.1 contains full details.

Regarding convergence of the ADMM algorithm, 
we evaluate the maximum absolute change between the estimated $\boldsymbol{\beta}$ at iterations $i$ and $i+1$ and terminate the ADMM procedure  when this difference falls below a threshold of $10^{-5}$. We fix the maximum number of iterations at 1000 and set the ADMM parameter to $\rho = 1$. 

\subsection{Tuning Parameters} \label{sec:tuning}
We select the tuning parameters $\lambda_1$ and $\lambda_2$ by using grid search combined with a Bayesian Information Criterion (BIC). We construct a $10 \times 10$ grid of candidate $(\lambda_1, \lambda_2)$ values. For each pair on the grid, we estimate the model and compute the following information criterion 
\begin{equation}
    \text{BIC}_{\lambda_1, \lambda_2} =  
    \tilde{T}\ln\left( \frac{\hat{\sigma}^2_{\lambda_1, \lambda_2}}{\tilde{T}} \right) 
     + 
     Q_{\lambda_1, \lambda_2} \ln(\tilde{T}) 
    + \infty \mathds{1}\left\{ \frac{Q_{\lambda_1, \lambda_2}}{\tilde{T}} > c \right\},
    \label{eq:BIC}
\end{equation}
where $\hat{\sigma}^2_{\lambda_1,\lambda_2} = || \boldsymbol{y} - \boldsymbol{X} \hat{\boldsymbol{\beta}}||_2^2$, with $\hat{\boldsymbol{\beta}}$  the Post StarTime estimate, $\tilde{T} = T - P_{\max}$ the effective sample size with $P_{\max}$ the number of lags we maximally lose, and $Q_{\lambda_1, \lambda_2}$  the number of unique, non-zero components of $\hat{\boldsymbol{\beta}}$ for a given $(\lambda_1, \lambda_2)$ pair. The extra penalty on $Q_{\lambda_1, \lambda_2}/\tilde{T}$ helps prevent the model from selecting pairs that would result in the inclusion of too many parameters, which could lead to overfitting.\endnote{The intuition comes from the EBIC of \cite{ChenChen2012} which also contains an extra penalty for model complexity in case $N \geq \tilde{T}$.} If the ratio exceeds the threshold $c \in [0,1]$, we add a large penalty to the BIC, which prevents the selection of that pair. 
For $c=1$, the usual BIC applies and no extra penalty is used. 
The threshold $c$ can be adjusted based on user preference and the complexity of the problem.  
After evaluating the model for all $100$ combinations of $(\lambda_1, \lambda_2)$ on the grid, we select the pair that yields the lowest BIC value. 

Finally, to construct the $\lambda$-grid, we determine maximum values such that respectively all $\hat{\boldsymbol\gamma}$ and $\hat{\boldsymbol\beta}$ coefficients are zero. We take $\lambda_1^{\max} = \max \big| (\boldsymbol{X} \boldsymbol{A})^\top \boldsymbol{y} \big| / \tilde{T}$ and $\lambda_2^{\max} = \max \big| \boldsymbol{X}^\top \boldsymbol{y} \big|/\tilde{T}$. The minimum value is set as a fixed fraction of this maximum: $\text{ If } \tilde{T} > N, \text{ then we set } \lambda_i^{\min} = \lambda_i^{\max} \times 10^{-4}$ and $\text{if } N \geq \tilde{T}, \text{ then } \lambda_i^{\min} = \lambda_i^{\max} \times 10^{-7}$ (for $i =1,2$) to cover a broader range in case of high dimensionality. The sequence of $\lambda$ values is spaced evenly on a logarithmic scale between these endpoints. 

\section{Theoretical Properties}
\label{sec:performance}

We derive an error bound for  StarTime  (Theorem \ref{theorem:oracle}) which is then used to establish prediction and estimation consistency (Corollary \ref{corollary:separatebounds}), allowing for the number of parameters to grow faster than the sample size. We adopt the general framework of Chapter 6 in \cite{buhlmann2011statistics}. 
We first introduce the assumptions underlying our analysis, and then present the main results. All auxiliary lemmas and proofs are collected in Appendix A.1 and A.2 respectively.

We use the following notation. 
For any vector $\boldsymbol{u} \in \mathbb{R}^N$ and $r \geq 1$, the $\ell_r$-norm is defined as $\left \lVert \boldsymbol{u} \right \rVert_r = \left ( \sum_{i=1}^N |u_i|^r \right )^{1/r}$. For $r=0$, we use the convention $\left \lVert \boldsymbol{u}\right \rVert_0 = \sum_{i=1}^N \mathbb{I}(u_i \neq 0)$, with $\mathbb{I}(\cdot)$ the indicator function. The $\ell_\infty$-norm is defined as $\left \lVert \boldsymbol{u}\right \rVert_{\infty} = \max_i |u_i|$ for vectors, and element-wise as $\left \lVert \boldsymbol{A}\right \rVert_{\infty} = \max_{i,j} |A_{i,j}|$ for matrices. In this section and accompanying proofs, $C$ denotes a generic positive finite constant. Its value may change from line to line, absorbing other constants to streamline exposition, but it is always independent of time and the cross-sectional dimension.

We consider the linear model in matrix notation
    $\boldsymbol{y} = \boldsymbol{X} \boldsymbol{\beta}^0 + \boldsymbol{\varepsilon},$
where $\boldsymbol{y}\in\mathbb{R}^{T\times 1}$, $\boldsymbol{X}\in\mathbb{R}^{T\times N}$,   $\boldsymbol{\varepsilon}\in\mathbb{R}^{T \times 1}$ and $\boldsymbol{\beta}^0\in\mathbb{R}^{N\times 1}$ denotes the true parameter value.
Since our analysis concerns dependent processes, we adopt a general time series framework where the covariates $\boldsymbol{x}_t$ (denoting the $t$-th row of $\boldsymbol{X}$) and the errors $\varepsilon_t$ are permitted to be non-Gaussian, serially correlated and heteroskedastic. To accommodate this, we adopt the following assumption on ${\boldsymbol z}_t:=(\boldsymbol{x}_t^\top,\varepsilon_t)^\top$.

\begin{assumption}[Assumption 1 in \citealp{Adamek2023}]
\label{assumption:ned}
Let $\boldsymbol{z}_t = (\boldsymbol{x}_t^\top, \varepsilon_t)^\top$,
and let there exist constants $\bar m>\tilde{m}>2$, and $d\geq \max\{1,(\bar m/\tilde{m}-1)/(\bar m-2)\}$ such that
\begin{enumerate}[label=(\roman*)]
\item\label{ass:dgpStationary}  Let $\E\left[{\boldsymbol  z}_t\right]=\bf{0}$, $\E\left[\boldsymbol{x}_t \varepsilon_t\right]=\boldsymbol{0}$, and $\max\limits_{1\leq j\leq N+1,\ 1\leq t\leq T}\E\abs{z_{j,t}}^{2\bar m} \leq C$.
\item\label{ass:dgpNED}Let $\boldsymbol{s}_{T,t}$ denote a $k(T)$-dimensional triangular array that is $\alpha$-mixing of size $-d/(1/\tilde{m}-1/\bar{m})$ with $\sigma\text{-field}$ $\mathcal{F}^{\boldsymbol{s}}_t:=\sigma\left\lbrace\boldsymbol{s}_{T,t},\boldsymbol{s}_{T,t-1},\dots\right\rbrace$ such that $\boldsymbol{z}_t$ is $\mathcal{F}^{\boldsymbol{s}}_t$-measurable. The process $\left\lbrace z_{j,t}\right\rbrace$ is $L_{2\tilde{m}}$-near-epoch-dependent (NED) of size 
$-d$ on $\boldsymbol{s}_{T,t}$ with positive bounded NED constants, uniformly over $j=1,\ldots,N + 1$.
\end{enumerate}
\end{assumption}

The near-epoch dependence (NED) framework in Assumption \ref{assumption:ned} encompasses a broad class of commonly encountered time series processes. These include, but are not limited to, linear processes including ARMA models, various  stochastic volatility and GARCH
specifications  and strong mixing
processes. For further discussion and formal definitions, we refer the reader to \citet{Adamek2023}.

To derive an error bound for StarTime in a high-dimensional time series set-up, we first rewrite the objective function such that the penalties are uniquely expressed in terms of ${\boldsymbol\gamma}$.
We do this to streamline notation and since we can express the effective sparsity in terms of the non-zero elements of ${\boldsymbol\gamma}^0$ (see Assumption \ref{assumption:sparsity}).
Using $\boldsymbol{\beta}=\boldsymbol{A}\boldsymbol{\gamma}$, we define $w(\boldsymbol{\gamma}) \in\mathbb{R}^{(N+M) \times 1}$ as given by
\begin{equation*}
    w(\boldsymbol{\gamma})
:=
\begin{pmatrix}
\boldsymbol{A}\\
\boldsymbol{I}_M
\end{pmatrix}\boldsymbol{\gamma} := {\boldsymbol D}\boldsymbol{\gamma}.
\end{equation*} 
We also set $\lambda_1=\lambda_2=\lambda$  to avoid notational clutter. This restriction can be relaxed at the cost of more cumbersome notation. The objective function is then given by

\begin{align*}
\frac{1}{2T}\left\lVert \boldsymbol{y}-\boldsymbol{X}\boldsymbol{\beta}\right\rVert_2^2
+\lambda\left\lVert \boldsymbol{\gamma}\right\rVert_1
+\lambda\left\lVert \boldsymbol{A}\boldsymbol{\gamma}\right\rVert_1
&=
\frac{1}{2T}\left\lVert \boldsymbol{y}-\boldsymbol{X}\boldsymbol{\beta}\right\rVert_2^2
+\lambda\left\lVert w(\boldsymbol{\gamma})\right\rVert_1. 
\end{align*}
To apply StarTime successfully, $\boldsymbol{\beta}^0$ must be `sparse' under the chosen penalty. 
\begin{assumption}[Sparsity]\label{assumption:sparsity}
Assume ${\boldsymbol\beta}^0$ and ${\boldsymbol\gamma}^0$ admit the  representation
\begin{equation} \nonumber
{\boldsymbol\beta}^0 = \boldsymbol{W} {\boldsymbol\phi}^0 = {\boldsymbol A}{\boldsymbol S}{\boldsymbol\phi}^0 \quad \text{and} \quad 
{\boldsymbol\gamma}^0 = {\boldsymbol S}{\boldsymbol\phi}^0,
\end{equation}
where $\boldsymbol{W}\in \{0,1\}^{N \times Q}$ 
is an aggregation matrix describing the coarsest aggregation structure of ${\boldsymbol\beta}^0$.
We assume that this aggregation structure is contained in
${\boldsymbol A}$: There exists a column-selection matrix ${\boldsymbol S}\in \mathbb{R}^{M\times Q}$, with exactly one non-zero entry in each column, such that 
$\boldsymbol{W} = {\boldsymbol A}{\boldsymbol S}$.
The parameter ${\boldsymbol\gamma}^0$ is uniquely determined by ${\boldsymbol S}$ and ${\boldsymbol\phi}^0$.
We then define the active set on ${\boldsymbol\gamma}^0$ as 
\begin{equation*}
    S_0 := \Bigl\{l\in\{1,\dots,M\}:{\gamma}^0_l\neq 0\Bigr\}, \qquad s_0:=|S_0| = Q.
\end{equation*}
\end{assumption}

Multiple $\boldsymbol\gamma$'s may induce the same aggregation in  $\boldsymbol\beta$ (see Remark \ref{rem:non-unique-gamma}). 
Following \cite{Yan2021}, we therefore adopt their coarsest aggregating representation to uniquely define $\boldsymbol\gamma^0$ in terms of ${\boldsymbol S}$ and ${\boldsymbol\phi}^0$.
The effective sparsity in our error bound is $s_0 =  \left \lVert \boldsymbol{\gamma}^0\right \rVert_0$. The parameter of interest $\boldsymbol\beta^0$ inherits its aggregation structure from $\boldsymbol{\gamma}^0$ and does not introduce additional free parameters.
In the remainder,
let $w(\boldsymbol{\gamma})_{S_0}:= \boldsymbol D_{S_0}\boldsymbol{\gamma}_{S_0}$ denote the subvector restricted to the non-zero indices in $\boldsymbol{\gamma}^0$, where $\boldsymbol D_{S_0}$ subsets the columns  of $\boldsymbol D$ in accordance.
Let $w(\boldsymbol{\gamma})_{S_0^c}:= \boldsymbol D_{S_0^c}\boldsymbol{\gamma}_{S_0^c}$ denote the subvector restricted to the zero indices in $\boldsymbol{\gamma}^0$; by construction  $w({\boldsymbol\gamma}^0)_{S_0^c} = {\bf 0}$. 

\begin{remark}
Sparsity in $\boldsymbol\gamma^0$ encourages aggregation in $\boldsymbol\beta^0$  by limiting the number of unique (non-zero) components in $\boldsymbol{\beta}^0$. While Assumption \ref{assumption:sparsity} restricts the total number of non-zero components in $\boldsymbol{\gamma}^0$, setting an internal node in the tree to zero produces effective aggregation in practice only if several of its descendants are also zero.
\end{remark}

To obtain our desired bounds, we require a regularity condition on $\boldsymbol{X}$ that rules out excessive collinearity in sparse directions. The estimation error can be restricted to
vectors that satisfy the cone condition
$\left \lVert w(\boldsymbol{\gamma})_{S_0^c}\right \rVert_1 \leq
3\left \lVert w(\boldsymbol{\gamma})_{S_0}\right \rVert_1$
(see Lemma 4 in Appendix A.1),
so that deviations outside $S_0$ are controlled by deviations on $S_0$. Let $
\boldsymbol{\Sigma} := \sum_{t=1}^T\mathbb{E}[ \boldsymbol{x}_t\boldsymbol{x}_t^\top]/T$
denote the population covariance matrix of $\boldsymbol{X}$. The compatibility condition formalizes that, on this cone, $\boldsymbol{\beta}^\top\boldsymbol{\Sigma}\boldsymbol{\beta}$ is sufficiently
large relative to the $\ell_1$-norm of the coordinates on $S_0$.

\begin{assumption}[Compatibility Condition]
\label{assumption:compatibility}
Assume the compatibility condition holds for the set $S_0$: There exists a $\rho_{\Sigma} > 0$ such that for every $\boldsymbol{\gamma}$ satisfying $\left \lVert w(\boldsymbol{\gamma})_{S_0^c}\right \rVert_1 \leq
    3\left \lVert w(\boldsymbol{\gamma})_{S_0}\right \rVert_1 $, we have that
    
    $\rho^2_{{\Sigma}} \left \lVert w(\boldsymbol{\gamma})_{S_0} \right \rVert_1^2 \leq s_0 \boldsymbol{\beta}^\top \boldsymbol{\Sigma}\boldsymbol{\beta}.$
    
\end{assumption}
This assumption is standard in Lasso theory and weaker than requiring the
full Gram matrix to be well-conditioned;  it only imposes control on
$\boldsymbol{\Sigma}$ along a restricted set of directions. We impose the compatibility condition on the population covariance rather than on the sample covariance $\hat{\boldsymbol{\Sigma}} :=  \boldsymbol{X}^\top\boldsymbol{X}/T$, because verifying conditions on the former is generally easier than doing so  on the sample matrix (e.g., \citealp{MedeirosMendes16, Adamek2023}). The proof then proceeds by showing that, under the conditions of Assumption \ref{assumption:compatibility}, $\hat{\boldsymbol{\Sigma}}$ is sufficiently close to $\boldsymbol{\Sigma}$ with high probability, allowing the compatibility condition to carry over to the sample-based version.

While Assumption \ref{assumption:compatibility} addresses the deterministic part of the argument, we also need to control the stochastic term $\left \lVert \boldsymbol{X}^\top\boldsymbol{\varepsilon}\right \rVert_\infty/T$, which determines the choice of the tuning parameter $\lambda$. We therefore introduce a high-probability event under which choosing $\lambda$ sufficiently large ensures that the penalty dominates the stochastic term.
Define the set
$\mathcal{J} = \left\{ \|\boldsymbol{X}^\top \boldsymbol{\varepsilon}\|_\infty/T \leq \lambda_0 \right\}$
where $\lambda_0>0$ is a bound for the empirical process. 
On $\mathcal{J}$, choosing $\lambda \ge 2\lambda_0$ ensures that the penalty dominates the empirical process term in the basic inequality of Lemma 1 in Appendix A.1. 

Lemma 2 provides a high-probability bound for the empirical process and thus for the event $\mathcal{J}$, which motivates the choice of $\lambda$. In addition, since Assumption \ref{assumption:compatibility} is imposed on the population covariance matrix $\boldsymbol\Sigma$, we introduce the event $\mathcal{C}(S_0):=\Bigl\{\|\hat{\boldsymbol\Sigma}-\boldsymbol\Sigma\|_\infty \le C/s_0\Bigr\}$, under which Lemma 6 implies that the compatibility condition carries over to the sample Gram matrix $\hat{\boldsymbol\Sigma}$ on the relevant cone. Conditional on $\mathcal{J}\cap \mathcal{C}(S_0)$, combining the basic inequality in Lemma 1, the cone restriction (Lemma 4) and sample compatibility then yields the bound in Theorem \ref{theorem:oracle}.

\begin{theorem}
\label{theorem:oracle}
Suppose that Assumptions \ref{assumption:ned}, \ref{assumption:sparsity}, \ref{assumption:compatibility} hold for the active set $S_0$ with constant $\rho_0 > 0$. 
Then, on the events $\mathcal{J}$ and $\mathcal{C}(S_0)$, for $\lambda \geq 2 \lambda_0$, we have: 
\begin{equation}
\label{eq:oracle}
    \frac{1}{2T} \left \lVert\boldsymbol{X}(\boldsymbol{\hat{\beta}} - \boldsymbol{\beta}^0) \right \rVert_2^2 + \lambda \left \lVert w(\boldsymbol{\hat{\gamma}} - \boldsymbol{\gamma}^0) \right \rVert_1 \leq \frac{C \lambda^2 s_0}{\rho_0^2}.
\end{equation}
Setting $\lambda := 2C\,\frac{N^{1/\tilde{m}}(\ln(\ln(T)))^{1/\tilde{m}}}{\sqrt{T}}$, the bound in \eqref{eq:oracle} holds with probability
$\mathbb{P}(\mathcal{J} \cap \mathcal{C}(S_0)) \ge 1- C(\ln(\ln(T)))^{-1}$
for $N,T$ sufficiently large.
\end{theorem}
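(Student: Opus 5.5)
The proof has two parts: a deterministic oracle inequality on $\mathcal{J}\cap\mathcal{C}(S_0)$, followed by a probability bound for that event. Throughout, $\hat{\boldsymbol\beta}=\boldsymbol A\hat{\boldsymbol\gamma}$ is the penalized estimator.

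\textbf{Basic inequality.} Since $(\hat{\boldsymbol\beta},\hat{\boldsymbol\gamma})$ minimizes the objective, its value is no larger than at $(\boldsymbol\beta^0,\boldsymbol\gamma^0)$. Substituting $\boldsymbol y=\boldsymbol X\boldsymbol\beta^0+\boldsymbol\varepsilon$ gives (Lemma 1)
\[
\frac{1}{2T}\|\boldsymbol X(\hat{\boldsymbol\beta}-\boldsymbol\beta^0)\|_2^2+\lambda\|w(\hat{\boldsymbol\gamma})\|_1\le \frac{1}{T}\boldsymbol\varepsilon^\top\boldsymbol X(\hat{\boldsymbol\beta}-\boldsymbol\beta^0)+\lambda\|w(\boldsymbol\gamma^0)\|_1 .
\]
Write $\boldsymbol\delta=\hat{\boldsymbol\gamma}-\boldsymbol\gamma^0$. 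The empirical process term satisfies
\[
\frac{1}{T}\boldsymbol\varepsilon^\top\boldsymbol X\boldsymbol A\boldsymbol\delta\le \lambda_0\|\boldsymbol A\boldsymbol\delta\|_1\le\lambda_0\|w(\boldsymbol\delta)\|_1 \quad\text{on }\mathcal{J},
\]
because $\|w(\boldsymbol\delta)\|_1=\|\boldsymbol A\boldsymbol\delta\|_1+\|\boldsymbol\delta\|_1$. With $\lambda\ge2\lambda_0$ this term is at most $\frac{\lambda}{2}\|w(\boldsymbol\delta)\|_1$.

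Next, decompose $w(\cdot)$ according to $S_0$ and $S_0^c$, using $w(\boldsymbol\gamma^0)_{S_0^c}=\mathbf 0$. The triangle inequality then gives
\[
\|w(\hat{\boldsymbol\gamma})\|_1\ge\|w(\boldsymbol\gamma^0)\|_1-\|w(\boldsymbol\delta)_{S_0}\|_1+\|w(\boldsymbol\delta)_{S_0^c}\|_1 .
\]
This step is the main obstacle, because $\boldsymbol D=(\boldsymbol A^\top,\boldsymbol I_M)^\top$ mixes coordinates: the map $\boldsymbol\gamma\mapsto w(\boldsymbol\gamma)$ is not a coordinate-wise norm, so splitting $\|w(\cdot)\|_1$ into an $S_0$ part and an $S_0^c$ part is not automatically additive. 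I would rely on the decomposition $w(\boldsymbol\gamma)=\boldsymbol D_{S_0}\boldsymbol\gamma_{S_0}+\boldsymbol D_{S_0^c}\boldsymbol\gamma_{S_0^c}$ and the decomposability argument packaged in Lemma 4. If exact decomposability fails, I would work with the weaker inequality $\|w(\boldsymbol\gamma)\|_1\ge\|w(\boldsymbol\gamma)_{S_0^c}\|_1-\|w(\boldsymbol\gamma)_{S_0}\|_1$, adjusting constants as needed.

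\textbf{Cone condition and compatibility.} Combining the pieces yields
\[
\frac{1}{T}\|\boldsymbol X\boldsymbol A\boldsymbol\delta\|_2^2+\lambda\|w(\boldsymbol\delta)_{S_0^c}\|_1\le3\lambda\|w(\boldsymbol\delta)_{S_0}\|_1 ,
\]
so $\boldsymbol\delta$ lies in the cone of Assumption \ref{assumption:compatibility}. On $\mathcal{C}(S_0)$, Lemma 6 transfers compatibility from $\boldsymbol\Sigma$ to $\hat{\boldsymbol\Sigma}$ with constant $\rho_0^2\ge\rho_\Sigma^2/2$. The transfer works because
\[
|\boldsymbol\beta^\top(\hat{\boldsymbol\Sigma}-\boldsymbol\Sigma)\boldsymbol\beta|\le\|\hat{\boldsymbol\Sigma}-\boldsymbol\Sigma\|_\infty\|\boldsymbol A\boldsymbol\delta\|_1^2\le\frac{C}{s_0}\,16\|w(\boldsymbol\delta)_{S_0}\|_1^2 .
\]
Sample compatibility then gives $\|w(\boldsymbol\delta)_{S_0}\|_1\le\sqrt{s_0}\,\|\boldsymbol X\boldsymbol A\boldsymbol\delta\|_2/(\sqrt T\rho_0)$. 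Adding $\lambda\|w(\boldsymbol\delta)_{S_0}\|_1$ to both sides of the cone inequality and applying $4\lambda ab\le a^2+4\lambda^2b^2$ produces \eqref{eq:oracle} after absorbing constants into $C$.

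\textbf{Probability of the events.} For $\mathcal{J}$, Lemma 2 adapts the NED-based maximal inequality of \cite{Adamek2023}. Under Assumption \ref{assumption:ned}, $x_{j,t}\varepsilon_t$ is a mean-zero NED process with $\tilde m$ moments, so a Markov-type bound gives
\[
\mathbb P\bigl(\|\boldsymbol X^\top\boldsymbol\varepsilon\|_\infty/T>\lambda_0\bigr)\le C N\,(\sqrt T\lambda_0)^{-\tilde m}.
\]
Choosing $\lambda_0=C N^{1/\tilde m}(\ln\ln T)^{1/\tilde m}/\sqrt T$ makes this at most $C(\ln\ln T)^{-1}$, and $\lambda=2\lambda_0$ matches the stated choice. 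For $\mathcal{C}(S_0)$, I would apply the same maximal inequality to the $N^2$ mean-zero processes $x_{i,t}x_{j,t}-\E x_{i,t}x_{j,t}$, which are NED by products of NED processes with $2\bar m$ moments. This gives $\mathbb P(\|\hat{\boldsymbol\Sigma}-\boldsymbol\Sigma\|_\infty>C/s_0)\le C N^2 s_0^{\tilde m}T^{-\tilde m/2}$, which is $o((\ln\ln T)^{-1})$ under an implicit rate condition on $s_0,N,T$ absorbed into ``$N,T$ sufficiently large''. A union bound then completes the proof.
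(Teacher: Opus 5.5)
You follow the paper's route essentially line for line: basic inequality, H\"older on $\mathcal J$, the cone of Lemma 4, the transfer to $\hat{\boldsymbol\Sigma}$ with the factor $16$, the step $4uv\le u^2+4v^2$, and the Adamek et al.\ bounds for $\mathcal J$ and $\mathcal C(S_0)$. You rightly single out the one step that does not work, but neither of your two ways of handling it closes the gap.

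Deferring to Lemma 4 does not help, because Lemma 4 contains no decomposability argument. It simply writes $\|w(\hat{\boldsymbol\gamma})\|_1=\|w(\hat{\boldsymbol\gamma})_{S_0}\|_1+\|w(\hat{\boldsymbol\gamma})_{S_0^c}\|_1$ ``since $w(\boldsymbol\gamma^0)_{S_0^c}=\mathbf 0$''. That does not follow: $\boldsymbol D_{S_0}\hat{\boldsymbol\gamma}_{S_0}$ and $\boldsymbol D_{S_0^c}\hat{\boldsymbol\gamma}_{S_0^c}$ overlap in the $\boldsymbol A$-block, since each leaf coefficient loads on all of its ancestors. The upper bound $\|w(\boldsymbol\delta)\|_1\le\|w(\boldsymbol\delta)_{S_0}\|_1+\|w(\boldsymbol\delta)_{S_0^c}\|_1$ used elsewhere in the argument is fine by the triangle inequality. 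What fails is the lower bound
\[
\|w(\hat{\boldsymbol\gamma})\|_1\ge\|w(\boldsymbol\gamma^0)\|_1-\|w(\boldsymbol\delta)_{S_0}\|_1+\|w(\boldsymbol\delta)_{S_0^c}\|_1 .
\]
Here is a counterexample.
\begin{itemize}
\item Take one root over $P\ge 4$ leaves, and let $\boldsymbol\gamma^0$ be $1$ at the root and $0$ elsewhere, so $\boldsymbol\beta^0=\mathbf 1_P$ and $S_0$ is the root.
\item Let $\hat{\boldsymbol\gamma}$ agree with $\boldsymbol\gamma^0$ except that the first leaf's entry is $-\kappa$, with $0<\kappa<1$. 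This is the unique minimizer of $\|\boldsymbol\gamma\|_1$ subject to $\boldsymbol A\boldsymbol\gamma=\hat{\boldsymbol\beta}=(1-\kappa,1,\dots,1)^\top$, so it is not an artefact of the non-uniqueness of $\boldsymbol\gamma$.
\item Then $\|w(\hat{\boldsymbol\gamma})\|_1=(1+\kappa)+(P-\kappa)=\|w(\boldsymbol\gamma^0)\|_1$, $w(\boldsymbol\delta)_{S_0}=\mathbf 0$ and $\|w(\boldsymbol\delta)_{S_0^c}\|_1=2\kappa$. The display therefore reads $P+1\ge P+1+2\kappa$, which is false.
\end{itemize}
The penalty is flat in this direction. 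The basic inequality reduces to $\kappa\|\boldsymbol X\boldsymbol e_1\|_2^2\le-2\boldsymbol\varepsilon^\top\boldsymbol X\boldsymbol e_1$, with $\boldsymbol X\boldsymbol e_1$ the first column of $\boldsymbol X$. This is perfectly compatible with $\mathcal J$, yet $\|w(\boldsymbol\delta)_{S_0^c}\|_1>0=\|w(\boldsymbol\delta)_{S_0}\|_1$. Hence no cone inequality, with any constant, can be deduced from the basic inequality on $\mathcal J$.

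Your fallback cannot fix this either. The reverse triangle inequality $\|w(\boldsymbol\gamma)\|_1\ge\|w(\boldsymbol\gamma)_{S_0^c}\|_1-\|w(\boldsymbol\gamma)_{S_0}\|_1$ yields a single positive term. The argument needs two at once: $+\|w(\boldsymbol\gamma^0)\|_1$ to cancel the right-hand side of the basic inequality, and $+\|w(\boldsymbol\delta)_{S_0^c}\|_1$ to create the cone. The example shows the failure is not a matter of constants.

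What does decompose is the coordinate split of $w(\cdot)\in\mathbb R^{N+M}$. Let $\mathcal S:=\operatorname{supp} w(\boldsymbol\gamma^0)$ and let $[\cdot]_{\mathcal S}$ denote restriction to those coordinates. Then $\|w(\hat{\boldsymbol\gamma})\|_1\ge\|w(\boldsymbol\gamma^0)\|_1-\|[w(\boldsymbol\delta)]_{\mathcal S}\|_1+\|[w(\boldsymbol\delta)]_{\mathcal S^c}\|_1$ does hold, and your remaining steps go through with constant $3$. The price is that the sparsity index becomes $|\mathcal S|=s_0+\|\boldsymbol\beta^0\|_0$, and compatibility must be imposed on the new cone. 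That is a different result from the one stated, and for dense aggregated $\boldsymbol\beta^0$ a much weaker one. A bound in terms of $s_0$ alone would require a genuinely different argument, which neither your proposal nor the paper's Lemma 4 provides.

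Relatedly, any repaired compatibility condition must be formulated modulo $\ker\boldsymbol A$. As stated, Assumption 3 cannot hold when $S_0$ contains an internal node $u$. Take the vector $\boldsymbol\gamma$ equal to $1$ at $u$ and $-1$ at each child of $u$; under the coarsest representation these children lie in $S_0^c$. This $\boldsymbol\gamma$ lies in $\ker\boldsymbol A$ and in the cone, so $\boldsymbol\beta^\top\boldsymbol\Sigma\boldsymbol\beta=0$ while $\|w(\boldsymbol\gamma)_{S_0}\|_1>0$.
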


\begin{corollary}
\label{corollary:separatebounds}
 Under Assumptions \ref{assumption:ned}, \ref{assumption:sparsity} and \ref{assumption:compatibility}, Theorem \ref{theorem:oracle} gives the following bounds with probability at least $\mathbb{P}(\mathcal{J} \cap \mathcal{C}(S_0)) \ge 1- C(\ln(\ln(T)))^{-1}$ for $N,T$ large enough,
 \begin{enumerate}[label=(\roman*)]
     \item $\frac{1}{T} \left \lVert\boldsymbol{X}(\boldsymbol{\hat{\beta}} - \boldsymbol{\beta}^0) \right \rVert_2^2 \leq \frac{C \lambda^2 s_0}{\rho_0^2}$
     \item $\left \lVert w(\boldsymbol{\hat{\gamma}} - \boldsymbol{\gamma}^0) \right \rVert_1 \leq \frac{C \lambda s_0}{\rho_0^2}$.
 \end{enumerate}
\end{corollary}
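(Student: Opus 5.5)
Corollary \ref{corollary:separatebounds} follows by splitting the two nonnegative terms on the left-hand side of \eqref{eq:oracle}. I work throughout on the event $\mathcal{J}\cap\mathcal{C}(S_0)$, with $\lambda$ chosen as in Theorem \ref{theorem:oracle}. On this event inequality \eqref{eq:oracle} holds deterministically, and Theorem \ref{theorem:oracle} already guarantees
\[
\mathbb{P}(\mathcal{J}\cap\mathcal{C}(S_0))\ge 1-C(\ln(\ln(T)))^{-1}
\]
for $N,T$ large enough. So both parts hold with the stated probability as soon as each is derived deterministically from \eqref{eq:oracle}.

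For part (i), I drop the penalty term $\lambda\lVert w(\hat{\boldsymbol\gamma}-\boldsymbol\gamma^0)\rVert_1\ge 0$ from the left-hand side of \eqref{eq:oracle}. This gives $\frac{1}{2T}\lVert\boldsymbol X(\hat{\boldsymbol\beta}-\boldsymbol\beta^0)\rVert_2^2\le C\lambda^2 s_0/\rho_0^2$. Multiplying by $2$ and absorbing the factor into the generic constant $C$ yields (i).

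For part (ii), I instead drop the prediction term $\frac{1}{2T}\lVert\boldsymbol X(\hat{\boldsymbol\beta}-\boldsymbol\beta^0)\rVert_2^2\ge 0$. This leaves $\lambda\lVert w(\hat{\boldsymbol\gamma}-\boldsymbol\gamma^0)\rVert_1\le C\lambda^2 s_0/\rho_0^2$. Dividing by $\lambda>0$ gives (ii).

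A small point to check is that $w$ is linear, $w(\boldsymbol\gamma)=\boldsymbol D\boldsymbol\gamma$. Hence $w(\hat{\boldsymbol\gamma}-\boldsymbol\gamma^0)=w(\hat{\boldsymbol\gamma})-w(\boldsymbol\gamma^0)$, so the $\ell_1$ bound in (ii) controls both $\lVert\hat{\boldsymbol\beta}-\boldsymbol\beta^0\rVert_1$ and $\lVert\hat{\boldsymbol\gamma}-\boldsymbol\gamma^0\rVert_1$, since these are the two blocks of $w$. The bound needs no separate justification, however.

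There is no real obstacle in this argument. The only thing to be careful about is that the constant $C$ is generic and may change between lines, as the paper's convention allows. All probabilistic content is inherited from Theorem \ref{theorem:oracle}, so no further concentration argument is needed.
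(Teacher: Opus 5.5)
Your proposal is correct and follows the paper's own argument: on $\mathcal{J}\cap\mathcal{C}(S_0)$ you drop one of the two nonnegative terms on the left-hand side of \eqref{eq:oracle}, and the probability bound is inherited from Theorem~\ref{theorem:oracle}. You are slightly more explicit than the paper, which only says that each component is bounded by $8\lambda^2 s_0/\rho_0^2$ and leaves implicit the factor $2$ in (i) and the division by $\lambda$ in (ii).
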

Corollary \ref{corollary:separatebounds} follows directly from the oracle inequality  in Theorem \ref{theorem:oracle}. On the event $\mathcal{J} \cap \mathcal{C}(S_0)$, the bound in \eqref{eq:oracle} controls the sum of two non-negative terms.
Consequently, each term is individually bounded by the same right-hand side.
This yields the stated prediction and $\ell_1$-type error bounds with probability at least $\mathbb{P}(\mathcal{J}\cap\mathcal{C}(S_0))$.

\section{Simulations}
\label{sec:simulations}

We assess the performance of StarTime against suitable benchmarks in autoregressive (Section \ref{sec:AR-sim}) and mixed-frequency data generating processes (DGPs, Section \ref{sec:MF-sim}). 

We evaluate the performance of the estimators on three performance metrics related to estimation accuracy, aggregation  and sparsity recovery. Regarding estimation accuracy, we compute the
\textit{Mean Squared Error} (the lower, the better):
\begin{equation*} 
\mathrm{MSE} = \frac{1}{N} \sum_{i=1}^{D} \sum_{j=1}^{P_i} \left(\hat{\beta}^{[i]}_j - \beta^{[i]}_j \right)^2.
\end{equation*}
For aggregation recovery, we let each set of identical (zero or non-zero) components of $\boldsymbol\beta$ form a group and compare this `true' grouping structure to the estimated one based on $\hat{\boldsymbol\beta}$ using the \textit{Adjusted Rand Index} \citep{hubert1985comparing}:
\begin{equation*} 
\mathrm{ARI} = \frac{ \displaystyle
          \sum_{ij} \binom{n_{ij}}{2}
          - \frac{ \sum_i \binom{a_i}{2} \sum_j \binom{b_j}{2} }{ \binom{N}{2} }
        }{
          \frac{1}{2} \left[ \sum_i \binom{a_i}{2} + \sum_j \binom{b_j}{2} \right]
          - \frac{ \sum_i \binom{a_i}{2} \sum_j \binom{b_j}{2} }{ \binom{N}{2} }
        },
\end{equation*}        
where $n_{ij}$ is the number of elements in both group $i$ of 
$\boldsymbol\beta$ and group $j$ of $\hat{\boldsymbol\beta}$, $a_i$ and $b_j$ are the totals for the $i$th true and $j$th estimated group, and $N$ is the  number of elements in $\boldsymbol\beta$. The closer the ARI to one, the better the aggregation recovery. 

For sparsity recovery, we compute the 
\textit{F1 score}:
\begin{equation}\mathrm{F1} = \frac{2 \cdot \mathrm{TP}}{2 \cdot \mathrm{TP} + \mathrm{FP} + \mathrm{FN}}, \nonumber
\end{equation} 
where $\mathrm{TP}$ denotes the number of true positives (both $\beta_j^{[i]}$ and $\hat{\beta}_j^{[i]}$ are non-zero), 
$\mathrm{FP}$ the number of false positives ($\hat{\beta}_j^{[i]}$ is non-zero, $\beta_j^{[i]}$  is zero), and
$\mathrm{FN}$ the number of false negatives ($\hat{\beta}_j^{[i]}$ is zero, $\beta_j^{[i]}$ is non-zero). The closer the F1 score to one, the better the sparsity recovery.
We compute these metrics in each simulation run, and report results averaged across all simulation runs. We take 500 simulation runs.
All simulations were conducted in \texttt{R} \citep{Rcoreteam2025}  using the \texttt{StarTime} package. 

\subsection{Autoregressive DGPs} \label{sec:AR-sim}

\subsubsection{Data Generating Processes}
We consider three  autoregressive DGPs, inspired by the HAR model.

\noindent
\textbf{DGP 1} covers the case of aggregation but no sparsity:
\begin{equation*} 
y_{t} = 0.3 y_{t-1} - 0.2 \sum_{j=2}^5 y_{t-j} + 0.02 \sum_{j=6}^{15} y_{t-j} + 0.01 \sum_{j=16}^{20} y_{t-j} + \varepsilon_{t}.
\end{equation*}
\textbf{DGP 2} covers the case of aggregation and sparsity:
\begin{equation*} y_{t} = 0.5 y_{t-1} - 0.1 \sum_{j=2}^5 y_{t-j} + 0 \sum_{j=6}^{20} y_{t-j} + \varepsilon_{t}.
\end{equation*}
\textbf{DGP 3} covers the case of sparsity without aggregation:
\begin{equation*} y_{t} = 0.5 y_{t-1} - 0.1 y_{t-2} + 0.2 y_{t-3} + 0.05 y_{t-4} -0.3 y_{t-5} + 0 \sum_{j=6}^{20} y_{t-j} + \varepsilon_{t}.
\end{equation*}
For all three cases, we take $\varepsilon_{t} \sim \mathcal{N}(0,1)$, $T=100$ and $T=200$ with a burn-in of 200 observations. To facilitate comparisons of sparsity and aggregation structures across DGPs, we explicitly display covariates with zero coefficients. While these covariates do not contribute to the data generating process, they are included during estimation.

\subsubsection{Estimators}
We compare StarTime to three benchmarks: \textit{OLS}, \textit{Lasso} \citep{TibshiraniLasso} and  \textit{Ridge} \citep{HoerlRidge}.
OLS serves as a baseline estimator without penalization. Lasso serves as a sparse benchmark that does not perform temporal aggregation. Lasso corresponds to StarTime with $\lambda_1=0$. For fair comparison, we compute the Lasso with the same ADMM algorithm as StarTime and also include a post-selection variant. The
tuning parameter is selected using the BIC. 
Ridge serves as a regularized benchmark without sparsity or aggregation. We use the implementation in the \texttt{glmnet} package \citep{glmnet}.
The tuning parameter is selected using the BIC. 

For StarTime, we consider the Simple  and Post version (Section \ref{sec:estimators}) 
with a tree consisting of 20 (daily) leaves, aggregated into four (weekly) internal nodes and one (monthly) root. While this tree structure is inspired by the HAR model, we do not impose any temporal aggregation constraints a priori. Instead, StarTime explores the lag space in a computationally efficient manner, imposing aggregation and sparsity only when supported by the data.
To select the tuning parameters, we use the BIC with $c=1$. For Simple StarTime, we use $\tilde{\boldsymbol\beta}$ instead of $\hat{\boldsymbol\beta}$ when computing  the BIC. We report `Oracle' results for selected tuning parameters that minimize the MSE against which the BIC results can be compared.  

\subsubsection{Results}
\begin{figure}[t]
  \centering

  \begin{subfigure}[t]{0.45\textwidth}
    \centering
    \includegraphics[width=\textwidth]{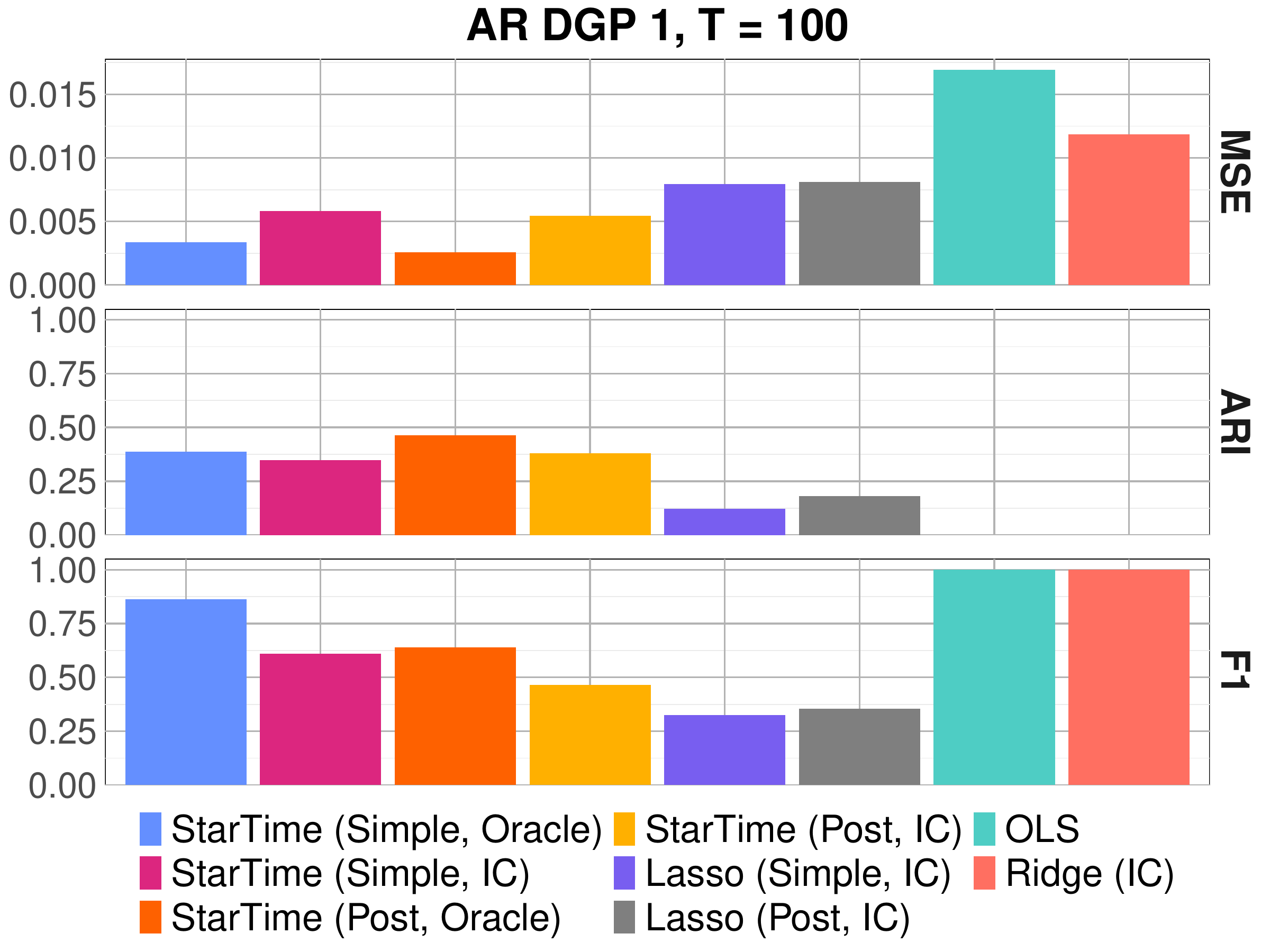}
    \caption{DGP 1, $T=100$}
    \label{fig:ARdgp1_n100}
  \end{subfigure}\hfill

  \begin{subfigure}[t]{0.45\textwidth}
    \centering
    \includegraphics[width=\textwidth]{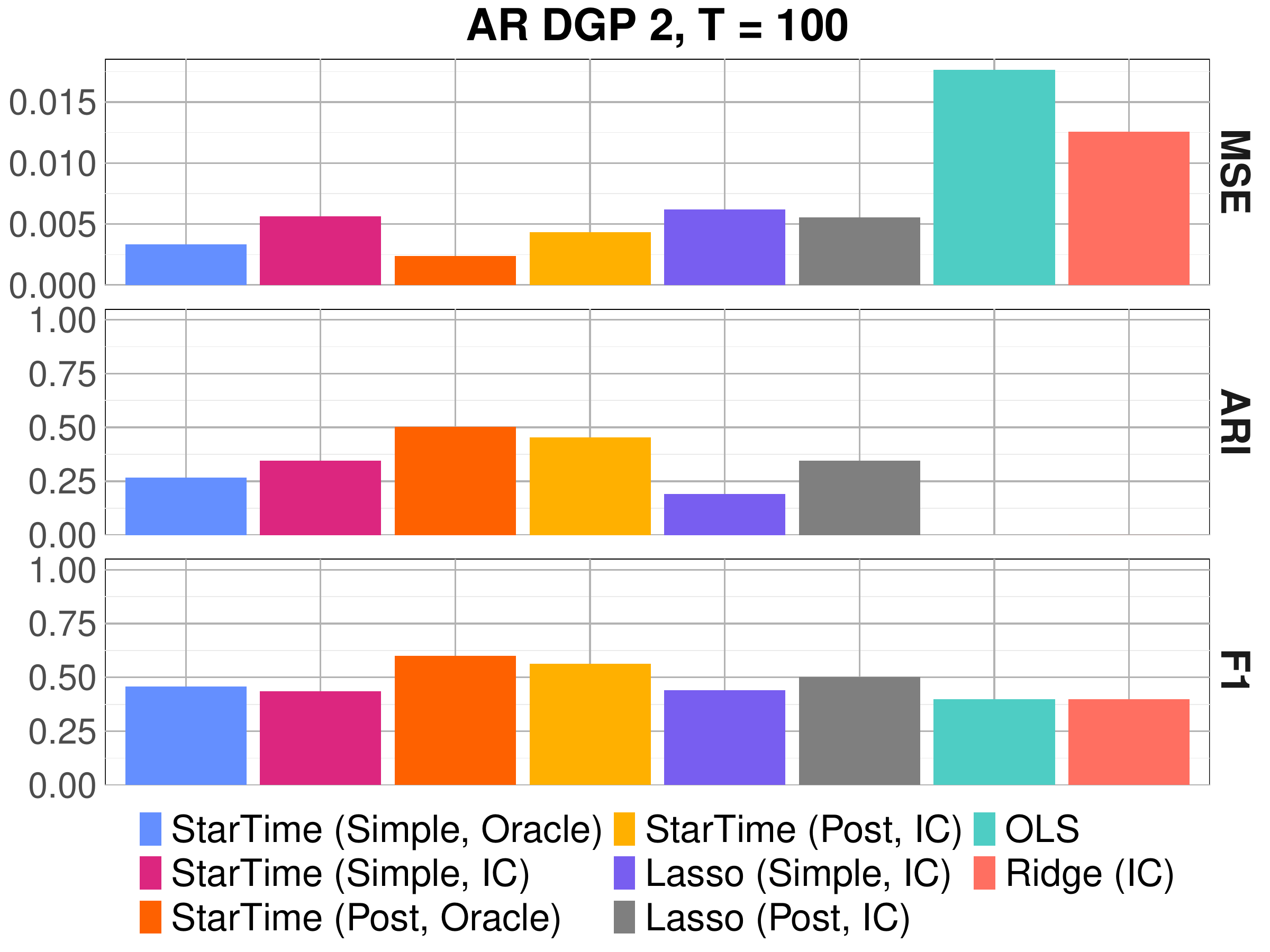}
    \caption{DGP 2, $T=100$}
    \label{fig:ARdgp2_n100}
  \end{subfigure}\hfill
  
  \begin{subfigure}[t]{0.45\textwidth}
    \centering
    \includegraphics[width=\textwidth]{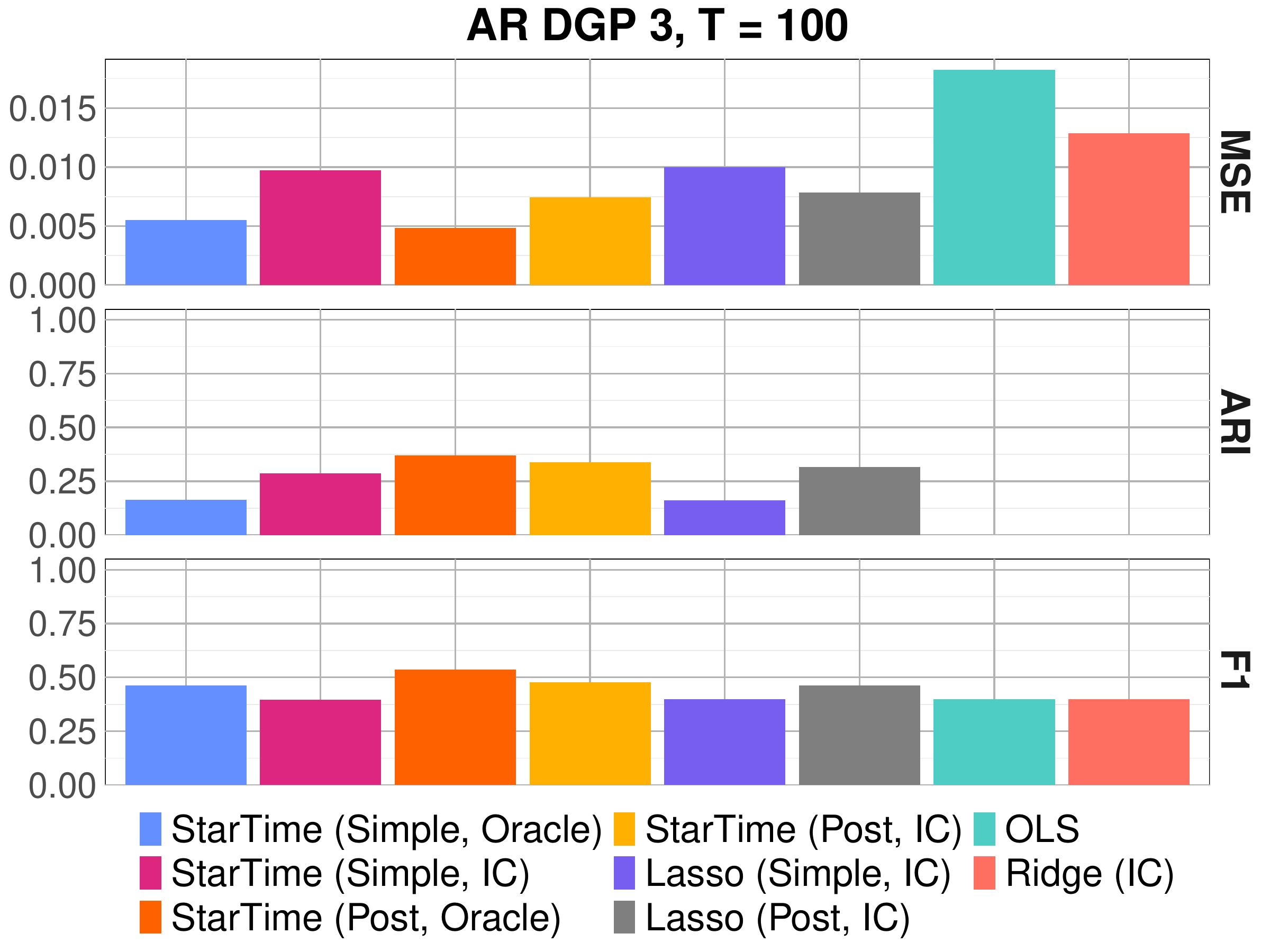}
    \caption{DGP 3, $T=100$}
    \label{fig:ARdgp3_n100}
  \end{subfigure}\hfill

  \caption{Performance metrics of the estimators across the three AR-based DGPs. 
  }
  \label{fig:simulation_results_AR}
\end{figure}

The results for the AR simulations are shown in Figure \ref{fig:simulation_results_AR} for $T=100$, Appendix B contains the results for $T=200$. 
Across all DGPs and sample sizes, the StarTime estimators, particularly its post-selection variant, demonstrate strong performance in estimation accuracy, aggregation and sparsity recovery.

With respect to MSE, StarTime consistently outperforms OLS, Ridge, and Lasso, particularly in settings where the true DGP exhibits aggregation (DGP 1 and 2). In DGP 1 (aggregation without sparsity), Simple  StarTime achieves lower MSE than competing methods, with Post StarTime displaying further improvements as the sample size increases. The performance gaps between the BIC and Oracle versions tend to reduce as the sample size increases. Both OLS and Ridge perform poorly. 
Lasso, despite its inability to exploit the underlying aggregation structure, is the most competitive benchmark. 
When aggregation is combined with sparsity (DGP 2), the StarTime estimators maintain their advantage, achieving the lowest MSE values, with Post StarTime  delivering the most accurate estimates.  
The performance of Lasso compared to StarTime improves, which is to be expected as it can now leverage the DGP's sparsity.
DGP 3 (sparsity without aggregation), naturally favors the Lasso; it performs very well.  Still, Post StarTime (IC)  matches the performance of Lasso (Post, IC), demonstrating StarTime's  flexibility in successfully identifying that sparsity, rather than aggregation,  is the relevant simplicity structure to leverage. 

Turning to aggregation recovery, StarTime, and especially Post StarTime, recovers the underlying aggregation structure in DGP 1 and DGP 2 most accurately. The low ARI is primarily attributable to StarTime’s tendency to merge the two most distant lagged groups into a single group instead. The ARI penalizes this type of clustering discrepancy severely; however, its effect on MSE remains limited.
In DGP 3, where no true aggregation exists apart from the zero group, Post StarTime occasionally wrongly identifies groups, yet it still achieves the highest ARI, competitive to that of Post Lasso.  
As expected, OLS, Ridge, and  Lasso yield an ARI of zero across all DGPs as none of them are designed to induce a grouping structure. 

Finally, regarding sparsity recognition, the F1 scores align with our expectations. 
In DGP 1, where all coefficients are non-zero, OLS and Ridge correctly refrain from penalizing parameters. Simple StarTime outperforms its post-selection variant, as the latter tends to over-penalize, introducing unnecessary shrinkage. However, this over-penalization is concentrated on more distant rather than recent lags, where coefficients are already close to zero, so the impact on the MSE remains limited. 
In DGP 2, both StarTime and Post Lasso effectively identify zero coefficients, with Post StarTime demonstrating improved recovery as $T$ increases. Finally, in DGP 3, the StarTime variants and Lasso variants exhibit very similar performance.

\subsection{Mixed-Frequency DGPs}   \label{sec:MF-sim}

\subsubsection{Data Generating Processes}
We consider three mixed-frequency DGPs, similar to \cite{Babii2022}. The response $y_t$ is observed quarterly, the covariates $x_h$ monthly; hence $m=3$. 

\textbf{DGP 1} is given by
\begin{equation}
y_{mt} = 0.3 y_{m(t-1)} + 0.01 y_{m(t-2)} + 0 y_{m(t-3)} + 0 y_{m(t-4)} 
+ \sum_{i=1}^{3} \sum_{j=1}^{12} \omega^{[i]}_{j} 
x_{mt-(j-1)}^{[i]}
+ \varepsilon_{mt}, \nonumber
\end{equation}
where we include an autoregressive component with two effective (non-zero) lags and three covariates with twelve monthly lags each. 
Each lag weight $\omega^{[i]}_{j}$ is generated from a Beta distribution with parameters Beta(1,3), Beta(2,3), and Beta(2,2), respectively. 

\textbf{DGP 2} is given by 
\begin{equation}
y_{mt} = 0.3 y_{m(t-1)} +  0.01 y_{m(t-2)} + 0 y_{m(t-3)} + 0 y_{m(t-4)} + \sum_{i=1}^{10} \sum_{j=1}^{12} \omega^{[i]}_{j}
x_{mt-(j-1)}^{[i]}
+ \varepsilon_{mt}, \nonumber
\end{equation}
where $\omega^{[i]}_{j}$ is generated as in DGP 1 for the first three covariates and zero otherwise. DGP 2 is thus the same as DGP 1, but when estimating, we include the seven additional irrelevant covariates.

\textbf{DGP 3} is given by
\begin{equation}
y_{mt} = 0.3 y_{m(t-1)} + 0.01 y_{m(t-2)} + 0 y_{m(t-3)} + 0 y_{m(t-4)} + \sum_{i=1}^{3} \bar{\omega}^{[i]}\sum_{j=1}^{12} x_{mt-(j-1)}^{[i]}
+ \varepsilon_{mt}, \nonumber
\end{equation}
with $\bar{\omega}^{[i]} = \frac{1}{12} \sum_{j=1}^{12} \omega_{j}^{[i]}$ the average of the monthly Beta weights for covariate $i$.

Across all DGPs, all covariates are generated as AR(1) processes $x_h = 0.2 x_{h-1} + \varepsilon_h$ with $\varepsilon_h \sim \mathcal{N}(0,1)$, where $h$ denotes the monthly frequency, and the errors  $\varepsilon_{mt} \sim \mathcal{N}(0,1)$. For DGP 3, we retain the same Beta distribution parameters but replace the unit-sum normalization with covariate-specific scaling, so that the weights of each covariate sum to different values.

\subsubsection{Estimators}
We compare StarTime to three benchmarks: \textit{OLS}, \textit{Restricted MIDAS} \citep{MIDAStouch} and \textit{MIDAS-ML} \citep{Babii2022}.
OLS serves as a baseline estimator without penalization.
Restricted MIDAS serves as a baseline estimator  for mixed-frequency regressions.
We use the implementation in the \texttt{midasr} package \citep{midasr} with a normalized Almon lag polynomial using parameters $(1, -0.5)$.
MIDAS-ML serves as a sparse benchmark estimator  for mixed-frequency regressions, which allows for variable selection in high-dimensional mixed-frequency regressions. We use the default implementation in the \texttt{midasml} package \citep{midasml}.

For StarTime, we use the trees in Figure \ref{fig:MultiTree} for the monthly covariates and the quarterly one.
For DGP 1 and DGP 2, the tree structure is misspecified; the natural groupings implied by the Beta-distributed weights do not align perfectly with the tree hierarchy. In contrast, 
DGP 3 constructs covariate weights by averaging the Beta-generated lag weights within each covariate, assigning a common coefficient to all lags of a given covariate. The tree structure can capture such temporal aggregation.
We report results for the Oracle and BIC-based selection of the tuning parameters. For  the BIC, we set $c=1$ in DGP 1 and 3 but lower it to $c=0.5$ 
in DGP 2 with $T=100$ 
to reduce the risk of overfitting since the number of parameters ($N=124$) exceeds the sample size. This adjustment limits the maximum number of non-zero estimated parameters to 47, which still allows the model to capture key signals.

\subsubsection{Results}
The results for the mixed-frequency simulations are shown in  
Figure \ref{fig:simulation_results_Mixed} for $T=100$, Appendix B  contains the results for $T=200$.
In terms of MSE, distinct performance profiles emerge depending on the structure of the lag weights. In DGP 1,  MIDAS  and the Oracle-tuned StarTime perform best. The smooth Beta constraints of MIDAS are well-suited to this structure. The BIC-tuned StarTime slightly outperforms MIDAS-ML, suggesting that StarTime remains competitive even when the true DGP lacks the tree-based structure it is designed to exploit.  Post StarTime  exhibits a slightly higher MSE than the Simple StarTime, due to the more pronounced imposition of temporal aggregation where none exists. 

\begin{figure}
  \centering
  \begin{subfigure}[t]{0.45\textwidth}
    \centering
    \includegraphics[width=\textwidth]{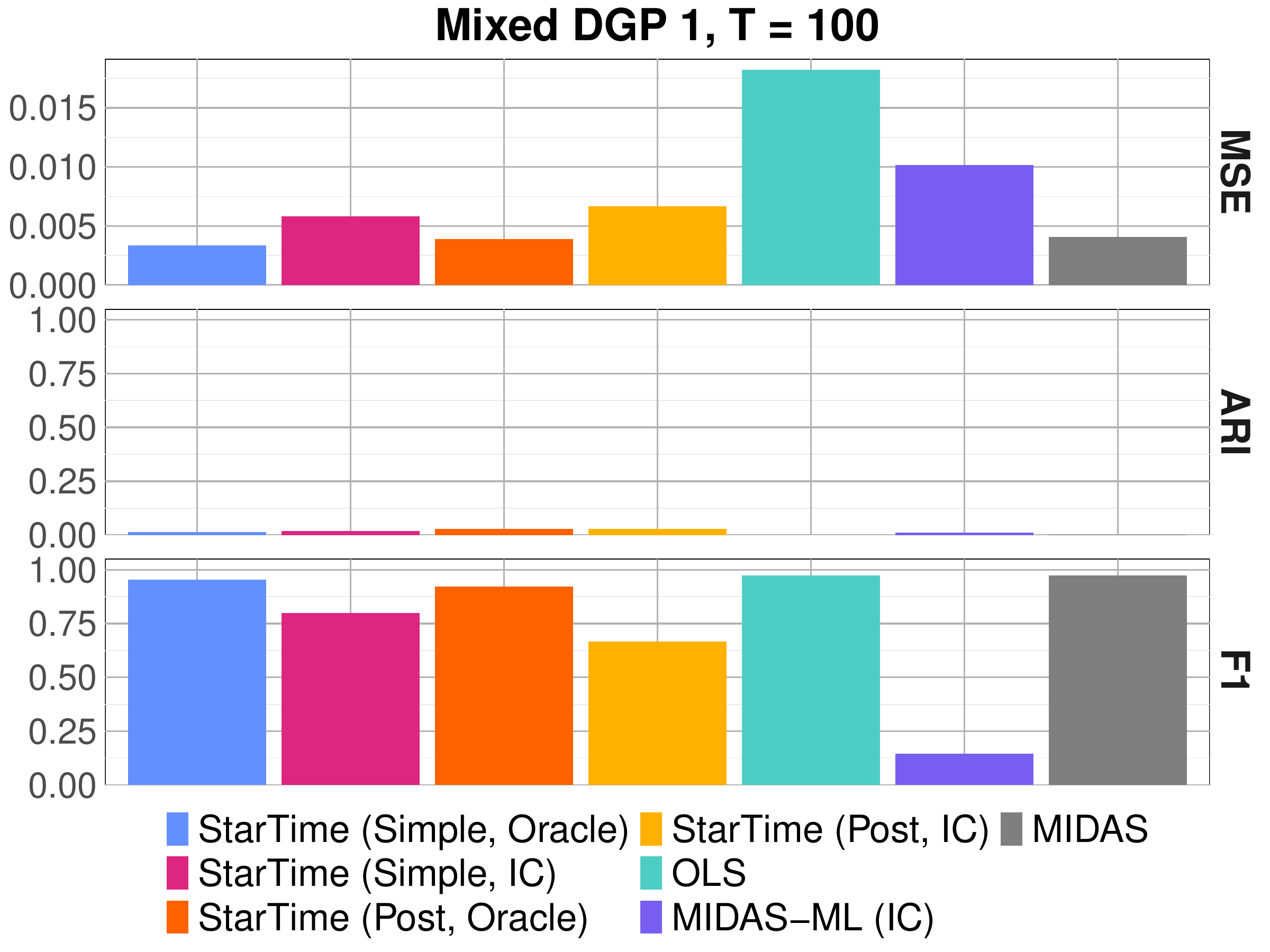}
    \caption{DGP 1, $T=100$}
    \label{fig:Mixeddgp1_n100}
  \end{subfigure}\hfill

  \begin{subfigure}[t]{0.45\textwidth}
    \centering
    \includegraphics[width=\textwidth]{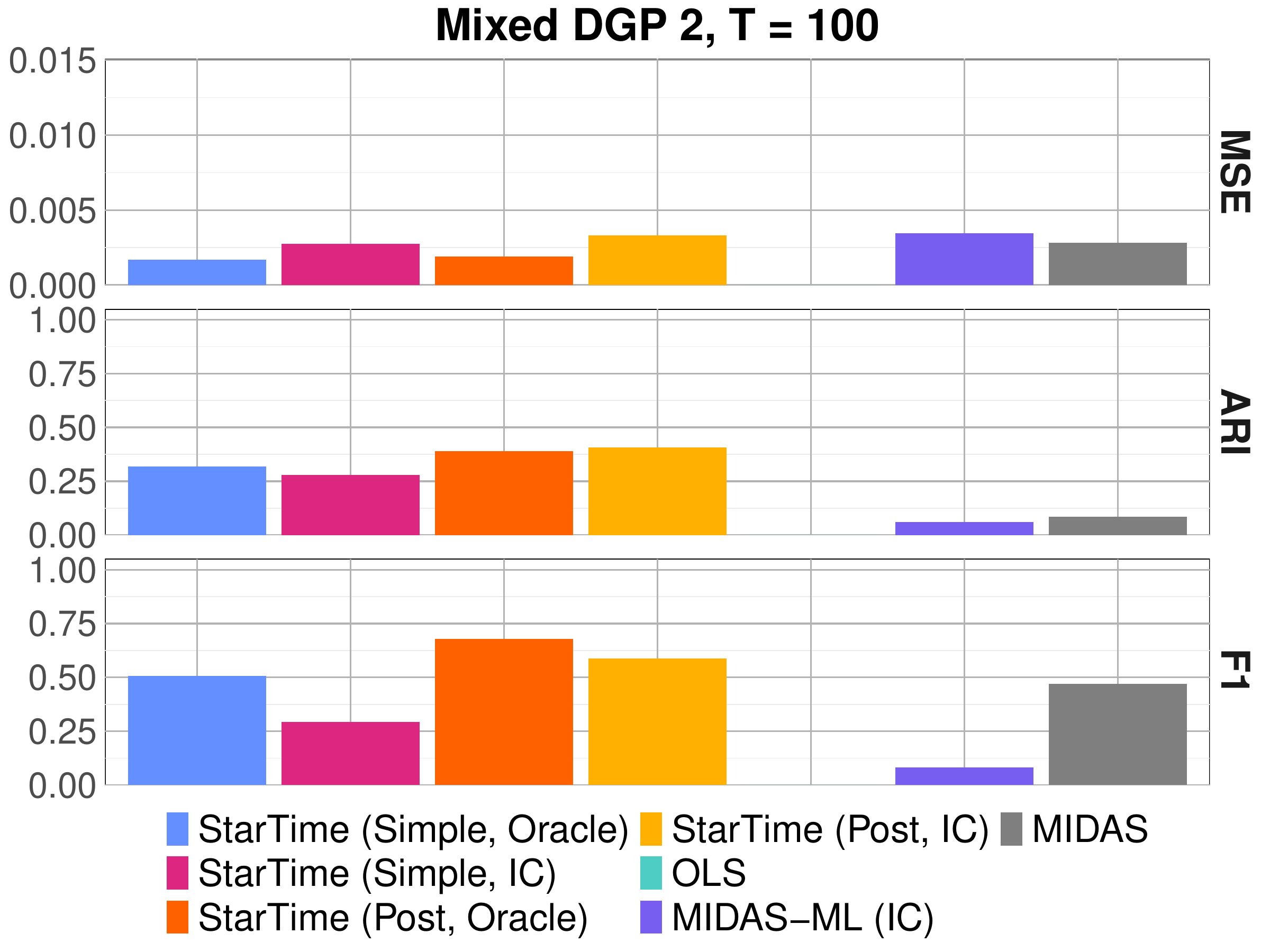}
    \caption{DGP 2, $T=100$}
    \label{fig:Mixeddgp2_n100}
  \end{subfigure}\hfill

  \begin{subfigure}[t]{0.45\textwidth}
    \centering
    \includegraphics[width=\textwidth]{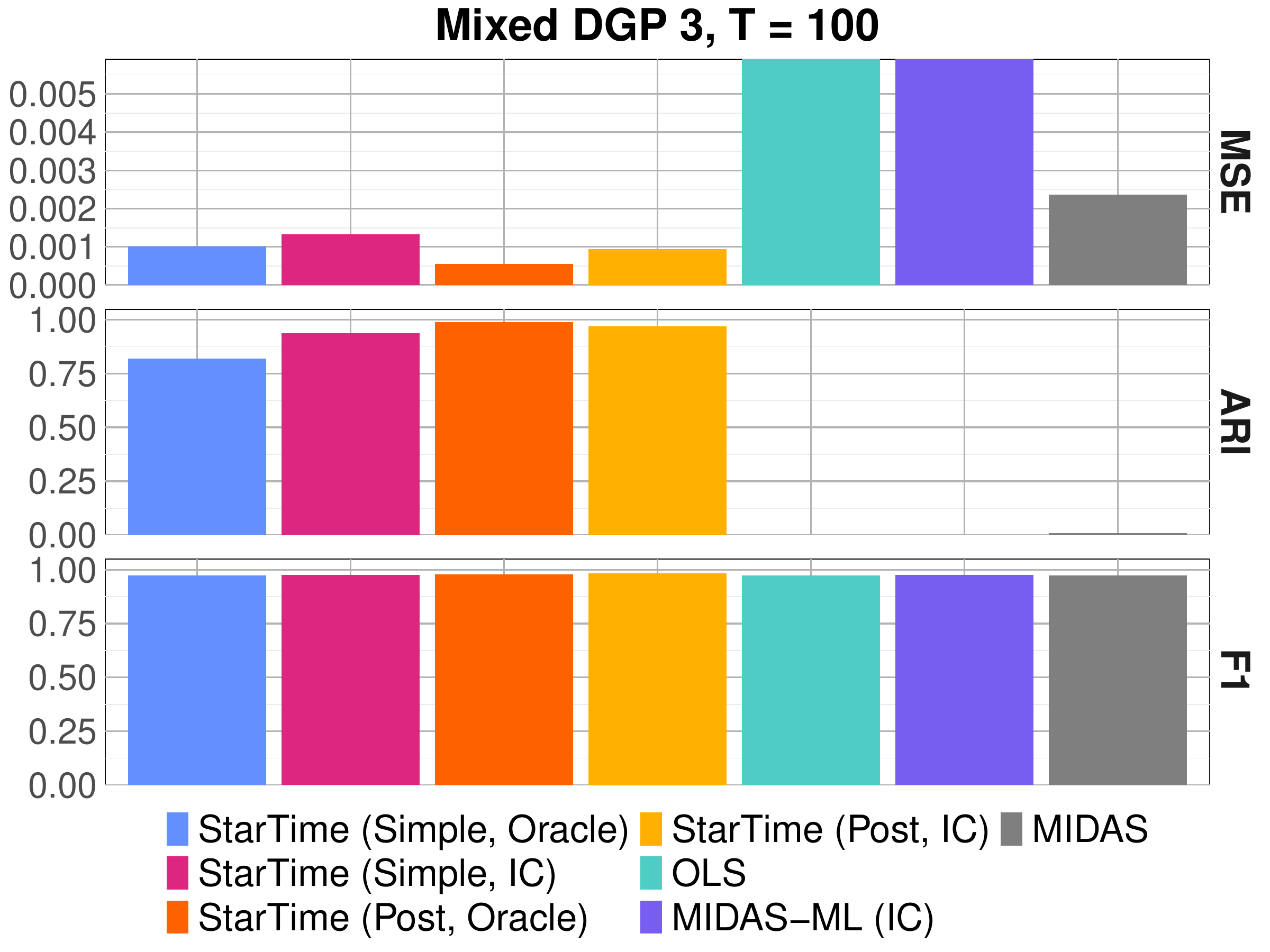}
    \caption{DGP 3, $T=100$}
    \label{fig:Mixeddgp3_n100}
  \end{subfigure}\hfill

  \caption{Performance metrics of the estimators across the three mixed-frequency 
  DGPs. 
  }
  \label{fig:simulation_results_Mixed}
\end{figure}

In the high-dimensional setting of DGP 2, the limitations of unpenalized methods are apparent. For $T=100$, where the number of parameters exceeds the sample size, OLS is infeasible. For $T=200$, OLS yields a substantially higher MSE than the penalized methods, including StarTime, which maintain their performance. In DGP 3,  where the weights exhibit a clear aggregation structure, 
the StarTime estimators perform best, as expected.
MIDAS-ML performs considerably worse: Its MSE is 0.017 for $T=100$ and 0.006 for $T=200$; for clarity of presentation, the vertical axis is capped. MIDAS-ML appears sensitive to the chosen lag structure when it does not align with its smoothness restrictions across the lags. In contrast, StarTime remains competitive in DGP 1 and 2 to MIDAS-ML, even with a misspecified tree structure. 

Regarding aggregation recovery, 
in DGP 1, all ARIs are low since there is no aggregation to uncover. 
In DGP 2, the only relevant grouping structure to uncover is the separation of the `group' of irrelevant covariates from the relevant ones. The StarTime estimators best capture this separation. In DGP 3, where temporal aggregation is present, the StarTime estimators stand out as the strongest performers. 

Regarding sparsity recovery, in DGP 1, where nearly all coefficients are non-zero, all methods apart from MIDAS-ML achieve high F1 scores. In DGP 2, where 7 irrelevant covariates are added compared to DGP 1, StarTime and especially its post-selection variant demonstrate superior performance. The refitting step considerably improves sparsity recovery, especially in high-dimensional settings where the signal is weaker. In DGP 3, where coefficients feature strong aggregation but minimal sparsity, all methods achieve excellent recovery rates, especially for $T=200$.

Overall, across both autoregressive and mixed-frequency DGPs, the StarTime estimators demonstrate strong performance in terms of estimation accuracy, aggregation and sparsity recovery compared to alternatives. 
They adaptively emphasize aggregation or sparsity in line with the underlying DGP. Even when the temporal tree structure is misspecified, StarTime remains competitive.

\section{Empirical Applications}
\label{sec:application}
We demonstrate StarTime's versatility  on two empirical applications: One on realized variance forecasting 
and another on
macroeconomic nowcasting and forecasting.

\subsection{Financial Application} \label{sec:finance}

\subsubsection{Realized Variance Forecasting}
We apply StarTime to realized variance forecasting. The HAR model by \cite{Corsi2009} is known to perform extremely well; it consistently outperforms a range of statistical and machine learning approaches \citep{HARdToBeat}. The HAR relies on a fixed temporal aggregation structure grounded in financial knowledge, reflecting the investment horizons of different investors. Our goal is not to outperform the HAR in terms of forecasting ability. While StarTime contains the HAR as a special case, it does not impose the HAR’s aggregation structure a priori. Instead, it allows for multiple levels of temporal aggregation across a richer lag space, potentially combined with sparsity. Using StarTime, we aim to provide data-driven evidence on whether the HAR’s aggregation structure is supported across a wide range of stocks and fitting schemes.
We employ the realized variance dataset from \cite{Margaritella2023granger} 
which consists of 10-min realized variances (RV) from March 20th, 2008 to February 17th, 2017 ($T=2236$ trading days) for a set of 30 major U.S. financial assets. Appendix C contains an overview of the considered stocks.  

\subsubsection{Model and Benchmarks}
Following standard practice in the literature, we model the logarithmic transformed realized variance, $y_{t} = \log(RV_{t})$. We consider an autoregressive model where $y_{t+h}$ is predicted using $N$ lags of $y_t$.
We focus on forecast horizons of $h=1$ (one day), $h=5$ (one week), and $h=20$ (one month) and use a direct forecast approach. 

For the StarTime estimators, we specify a maximum lag length of $N=20$ days (approximately one trading month), and use the tree in Figure \ref{fig:HARTree}. We also perform robustness checks allowing for a deeper history of $N=40$ lags, using a tree that aggregates the 40 daily lags into 8 weekly groups (each containing 5 days), which are then aggregated into two monthly groups, and a single `two-month' root. 

We compare StarTime against three benchmarks: The HAR model, a simple AR(1), and a Random Walk. The HAR model is given by
\begin{equation}
y_{t+h} = \beta_0 + \beta^{(d)} y_t + \beta^{(w)} \left( \frac{1}{5} \sum_{j=1}^5 y_{t-j+1} \right) + \beta^{(m)} \left( \frac{1}{20} \sum_{j=1}^{20} y_{t-j+1} \right) + \varepsilon_{t+h}.
\end{equation}
To evaluate the performance of the methods, we use a rolling window approach. 
For forecast horizon $h$, we fix the window size at $W-h+1$ and consider $W \in \{125, 250, 1000\}$ (approximately one semester, one year, and four years of trading days respectively). As commonly done for penalized estimators, we standardize the response in each rolling window to have zero mean and unit variance. We tune the parameters $\lambda_1$ and $\lambda_2$ for StarTime in each rolling window using the BIC with $c=1$.

\subsubsection{Evaluation Metrics}

We evaluate out-of-sample forecasting performance using two widely used loss functions: Mean Squared Forecast Error (MSFE) and Quasi-Likelihood (QLIKE) loss. While the MSFE is a symmetric loss, the QLIKE is asymmetric and more robust to extreme outliers 
For $N_{test}$ out-of-sample forecasts, the QLIKE is defined as: 
\begin{equation*}
\text{QLIKE} = \frac{1}{N_{test}} \sum_{t=1}^{N_{test}} \left( \frac{\exp(y_{t})}{\exp(\hat{y}_{t})} - \log \left( \frac{\exp(y_{t})}{\exp(\hat{y}_{t})} \right) - 1 \right),
\end{equation*}
where we exponentiate the forecasts and actuals to compute the QLIKE on the original scale of realized variances (e.g., \citealp{HARdToBeat}).

We assess forecast accuracy using the Diebold-Mariano (DM) test \citep{DMtest} for pairwise comparisons and the Model Confidence Set (MCS, \citealp{MCS}) for multi-model evaluation. For the DM tests, implemented in the \texttt{forecast} package \citep{forecast}, we test the null of equal predictive accuracy between Post StarTime and each benchmark  for every stock. The tests are conducted at the 5\% significance level. We summarize results by reporting the Win Rate and Loss Rate  across stocks for each configuration.
The Win Rate is the percentage of stocks where Post StarTime significantly outperforms the benchmark ($p$-value $< 0.05$ and test statistic $< 0$).
The Loss Rate is the percentage of stocks where Post StarTime significantly underperforms ($p$-value $< 0.05$ and test statistic $> 0$). The remaining proportion represents cases where the models are statistically indistinguishable (ties).
For the MCS evaluation, we use the $T_{max}$ statistic with $B=5000$ bootstrap replications and $\alpha=0.05$, implemented in the package \texttt{MCS} \citep{MCSpack}. We report the MCS inclusion rate, defined as the percentage of stocks for which a method is included in the final MCS.

\subsubsection{Results}
Figure \ref{fig:financeapplication} presents the results for $h=1$ on the $250$ and $1000$-day window, Appendix D contains the results for the $125$-day window. We display the median MSFE across stocks, over time for the five methods. We also compute the ARI that compares the grouping structure of Post StarTime with that of the HAR. Since the raw daily results are rather noisy, we use a centered moving average filter to smooth the results for visual purposes: For a chosen window size $S$, the smoothed value at time $t$ is the simple average of observations in the interval $[t - \lfloor S/2 \rfloor, t + \lfloor S/2 \rfloor]$; we use $S=20$.

\begin{figure}[t]
  \centering

  \begin{subfigure}[t]{0.49\textwidth}
    \centering
    \includegraphics[width=0.9\textwidth]{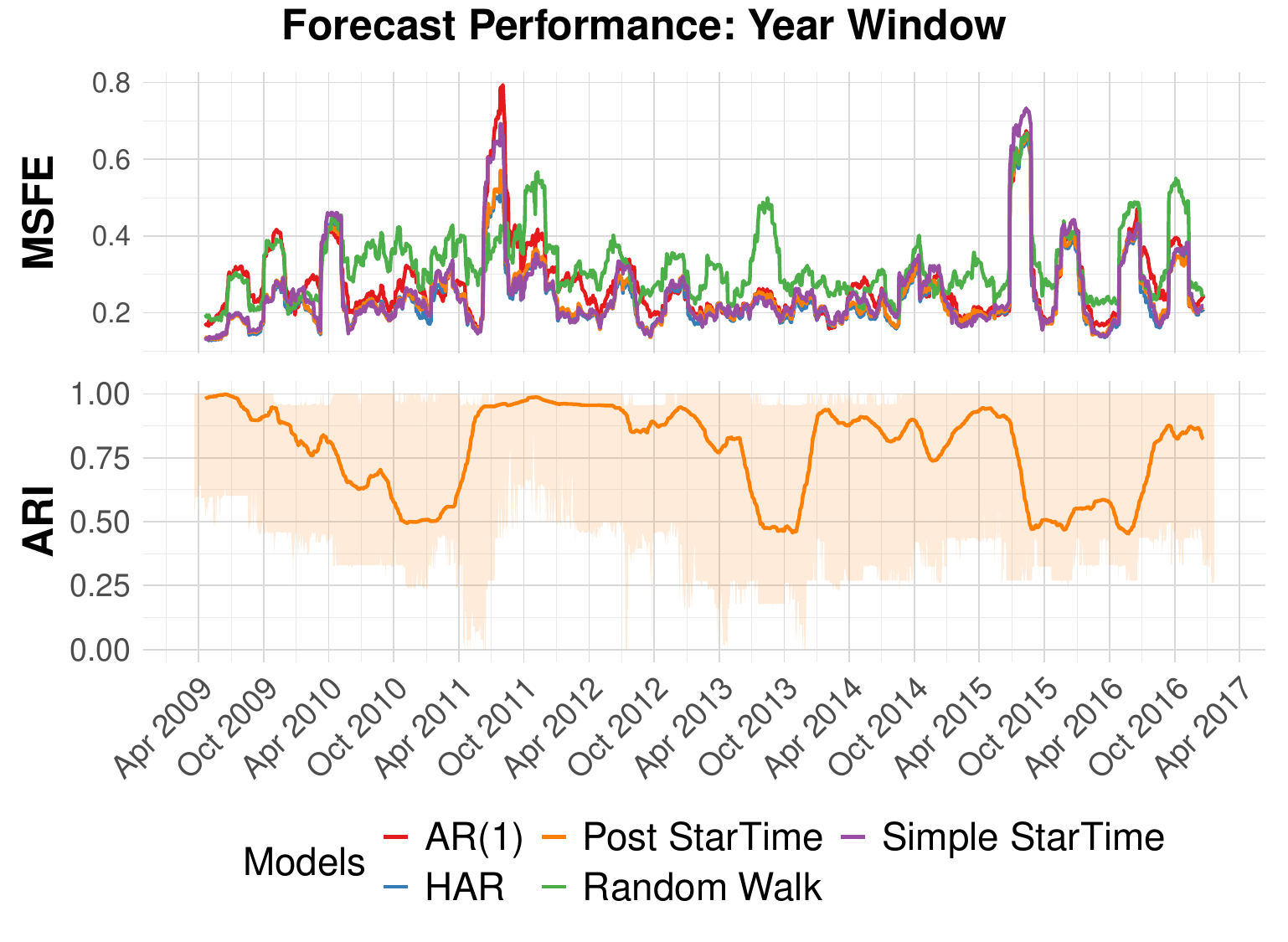}
    \caption{250 Day-window, $h=1$, 20 lags}
    \label{fig:financeapplication_250}
  \end{subfigure}
  
  \begin{subfigure}[t]{0.49\textwidth}
    \centering
    \includegraphics[width=0.9\textwidth]{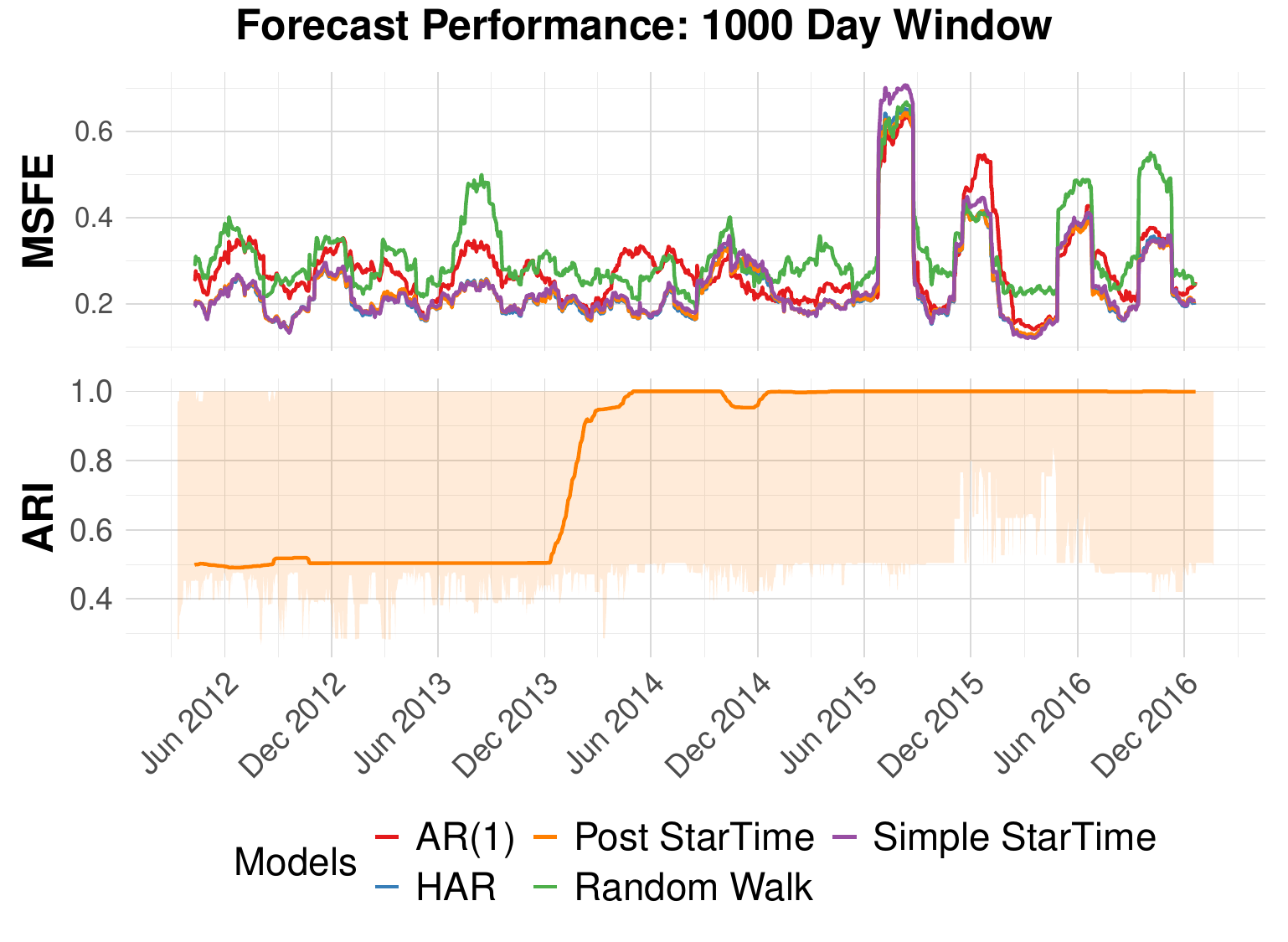}
    \caption{1000 Day-window,  $h=1$, 20 lags}
    \label{fig:financeapplication_1000}
  \end{subfigure}

  \caption{Comparative forecast performance across 250- and 1000-day windows. Each panel displays the smoothed daily median MSFE (top) and smoothed daily median ARI (bottom). Shaded regions represent the 10\textsuperscript{th} and 90\textsuperscript{th} percentiles.} 
  \label{fig:financeapplication}
\end{figure}

We first examine the MSFE curves for $h=1$ in the 20-lag configuration. The HAR model is the overall best performer, with Simple and Post StarTime tracking its performance closely. Simple StarTime occasionally exhibits high error peaks, especially in July--October 2015. This instability is mitigated by Post StarTime which produces more parsimonious, aggregated estimates.
The AR(1) and, most notably, the Random Walk exhibit higher errors throughout the sample period.  

The MCS results (Table D.1, Appendix D) confirm that the HAR is the best performer. Its dominance is, however, less pronounced for $h=20$ compared to $h=1$ and $h=5$. Indeed, the inclusion rate of the StarTime estimators increases as the forecast horizon and/or the window size increases.
The AR(1) is competitive for longer horizons. 
The Random Walk yields the worst performance, a result consistent with the known mean-reverting properties of volatility that the Random Walk fails to capture at such short horizons \citep{PoonGranger2003}. 

The Diebold-Mariano (DM) tests (Table D.2 of Appendix D) further confirm that the HAR model performs, overall, best for horizon $h=1$. The StarTime estimators are  competitive to the HAR for longer horizons, suggesting that our data-driven tree-based aggregation effectively captures longer-run dynamics. Post StarTime consistently 
outperforms the Random Walk and AR(1), and
outperforms or matches  Simple StarTime for next-day forecasts under MSFE loss.

Next, we turn to the ARI curves in  Figure \ref{fig:financeapplication}. The results for the 1000-day window demonstrate most stability: The median ARI never falls below 50\% and remains steadily around one from May 2014 onward. 
Even for the 250-window size, the ARI mostly fluctuates between 0.5 and 1.
Figure \ref{fig:financeapplication_matrix} further confirms that StarTime provides data-driven support for the HAR lag structure. 
The heatmaps for two stocks, Nike Inc.\ and JPMorgan Chase \&Co, visualize the temporal aggregation structure recovered by Post StarTime on the 1000-day window. Without a priori imposing the HAR-based structure,  it is oftentimes exactly recovered across the vast majority of time points. 
The $250$-day window  displays a  more volatile heatmap (see Appendix D), even though the HAR-based structure still shines through. 

\begin{figure}[t]
  \centering

  \begin{subfigure}[t]{0.45\textwidth}
    \centering
    \includegraphics[width=0.8\textwidth]{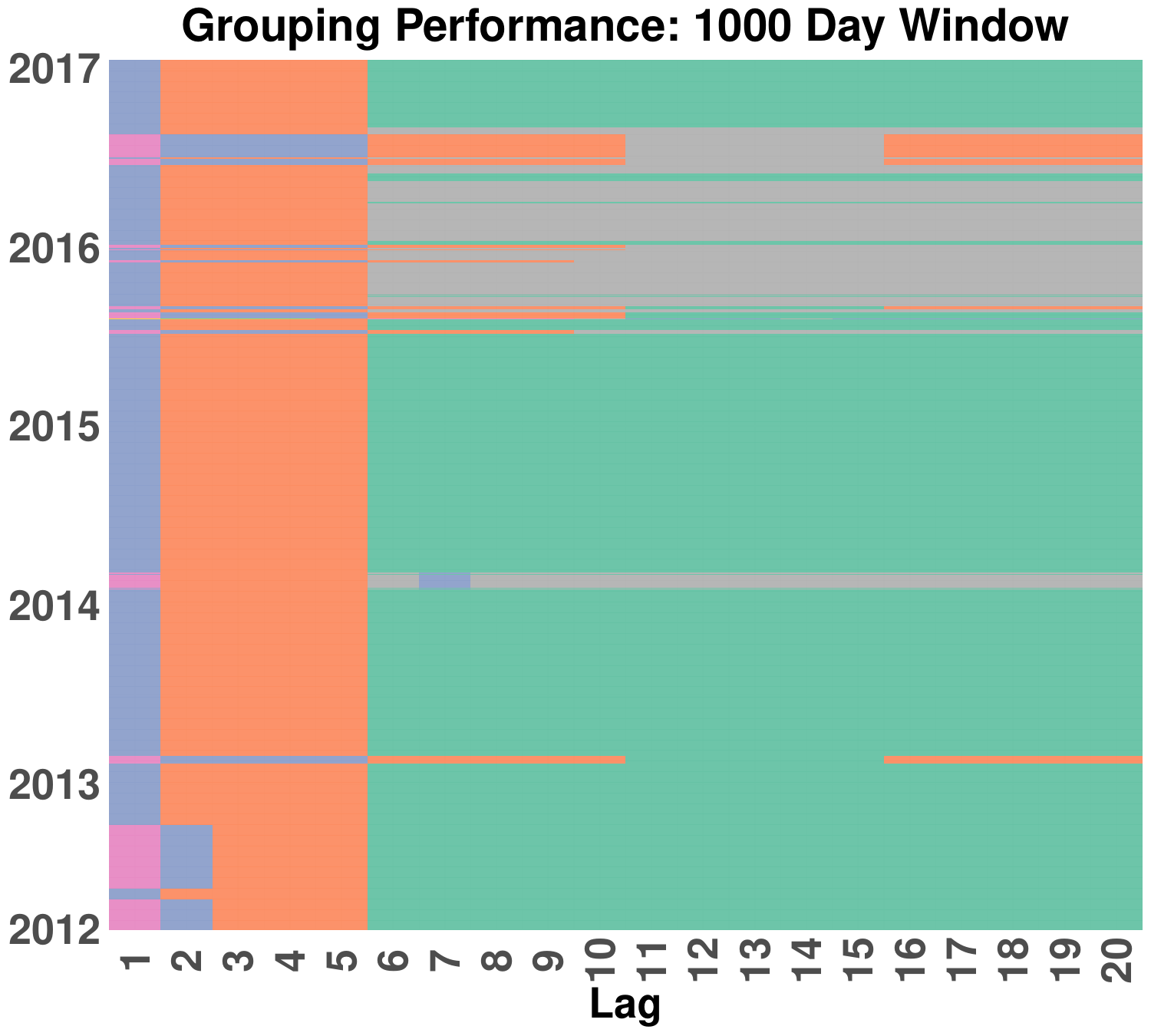}
    \caption{Nike Inc.} 
    \label{fig:finance_matrix_1000_nke}
  \end{subfigure}\hfill
  \begin{subfigure}[t]{0.45\textwidth}
    \centering
    \includegraphics[width=0.8\textwidth]{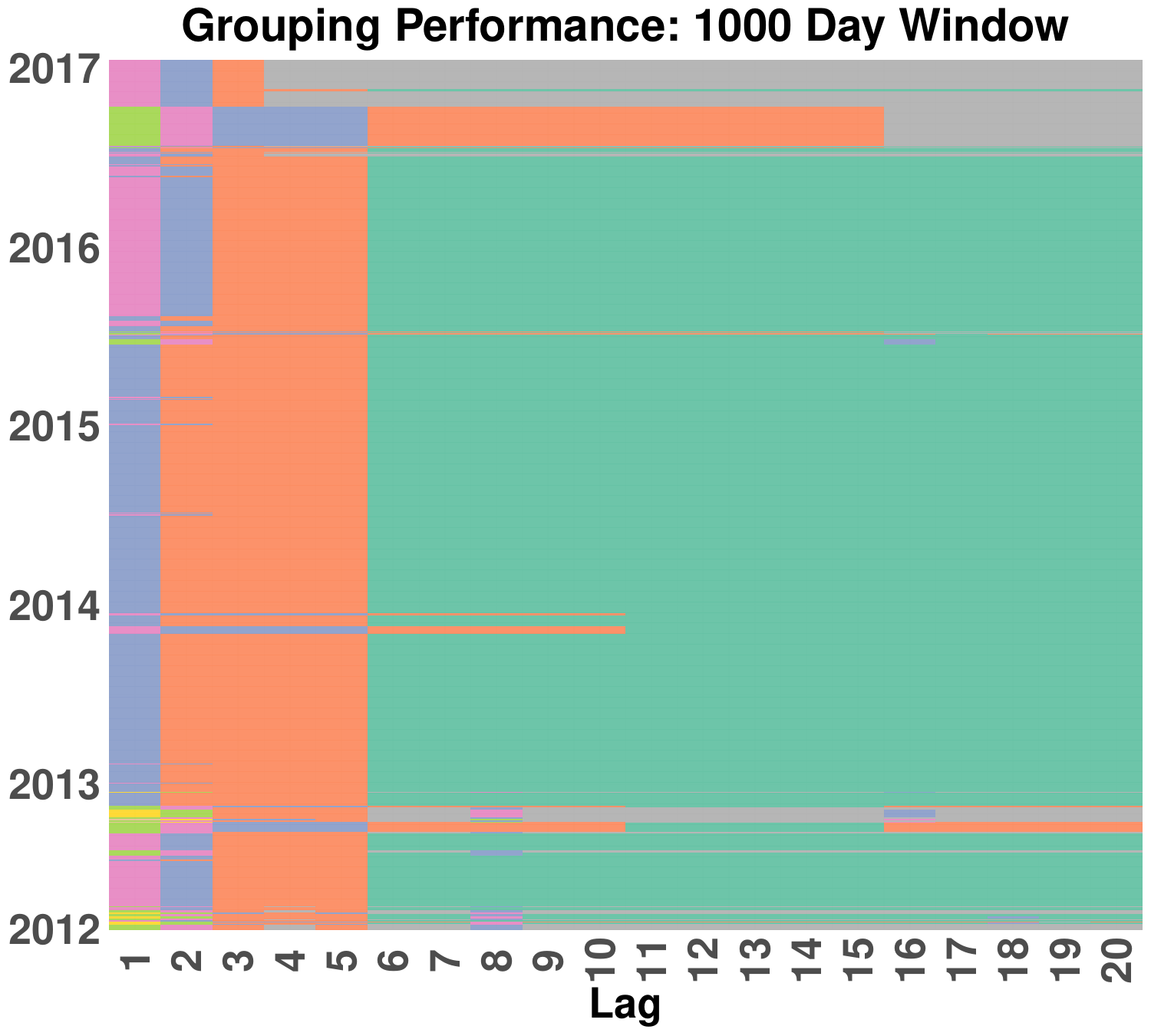}
    \caption{JPMorgan Chase \& Co.} 
    \label{fig:finance_matrix_1000_jpm}
  \end{subfigure} 
  \caption{Coefficient heatmaps for Post StarTime ($h=1$, 20 lags) applied to Nike Inc. (NKE) and JPMorgan Chase \& Co (JPM), with 1000-day window size. 
  Each row indexes an estimation date, columns correspond to daily lags $j=1,\ldots,20$. Cell colors encode the aggregation structure selected by Post StarTime: Lags with identical colors are fused into a single coefficient group, distinct colors within a row reflect heterogeneous lagged effects. Coefficients set to zero are indicated in gray.} 
  \label{fig:financeapplication_matrix}
\end{figure}

QLIKE results are included in Tables D.3 and D.4 of Appendix D. 
The same conclusions as for the MSFE hold; we highlight the differences. The Random Walk occasionally appears in the MCS, but the pairwise DM-tests indicate that it never statistically outperforms Post StarTime. 
When comparing StarTime with the AR(1)  using the QLIKE, the former is generally favored overall, while the latter remains  robust at longer horizons, as it is consistently included in the QLIKE MCS for $h=20$.

Finally, the above findings largely extend to the configurations with 40 lags (Tables D.1 to D.4 in Appendix D). 
A notable difference occurs for the QLIKE loss, where the AR(1) tends to outperform StarTime. 
The inclusion of 20 additional lags substantially increases the dimensionality of the lag space, which inherently complicates lag selection and aggregation by StarTime, especially in smaller rolling window fitting schemes.
Since the HAR structure (i.e., the first 20 lags) already captures volatility dynamics well, additional lags may be redundant. While the AR(1) avoids this overfitting by design, StarTime's performance hinges on shrinking these distant lags to zero. The aggregation patterns for individual stocks confirm this: 
For $W=1000$, StarTime removes the  20–40 daily lags and maintains its accuracy. In smaller windows, however, it struggles to consistently eliminate these  distant lags, resulting in a performance dip relative to the parsimonious benchmarks. When domain knowledge about the appropriate lag length is available, we advise incorporating this directly rather than relying on an excessively large lag space. 

Overall, our findings suggest that StarTime remains a strong and flexible estimator that offers  data-driven evidence that the HAR’s temporal aggregation structure is strongly supported across a wide range of stocks, particularly for larger window sizes.

\subsection{Macroeconomic Application} \label{sec:macro}

\subsubsection{Nowcasting and Forecasting GDP Growth}
We now evaluate StarTime's ability to nowcast and forecast GDP growth using mixed-frequency economic and financial time series.
We construct a dataset from 1992:Q2 to 2025:Q2 that captures a broad overview of the US economy. 
The dataset comprises 30 variables covering diverse economic categories: Market indices, economic uncertainty policy, financial indicators, labor market metrics, income and sales, capacity utilization, prices, housing, and production inventories.
Our variable choices are grounded in economic theory and inspired by recent mixed-frequency literature \citep{midasml, TernesHierarchical}. The raw series are collected from the FRED, Yahoo Finance, and the Policy Uncertainty website \citep{EPUweb}. 

We transform the variables following the transformation codes in FRED-MD \citep{mccracken2016fredmd} and FRED-QD \citep{mccracken2021fredqd}. For financial and uncertainty series not covered therein, we apply logarithmic transformations when supported by the literature: We use the log of the Economic Policy Uncertainty (EPU) index \citep{baker2016EPU} and the log of the VIX \citep{andersen2003VIX, taylor2019VIX}. The National Financial Conditions Index (NFCI) enters the model in levels \citep{Amburgey2022NFCI}. For any remaining variables, unit root tests are conducted using the \texttt{bootUR} package \citep{BootUR} to determine the appropriate order of integration. 
Appendix E contains a detailed overview of the variables, their sampling frequency, seasonal-adjustment status, and the applied transformations.
We include all series at the highest sampling frequency available across sources, and use StarTime to provide data-driven guidance on the appropriate temporal aggregation level at which each variable should enter the model.

\subsubsection{Model and Benchmarks}
We consider model \eqref{eq:BaseEquation} with quarterly GDP growth as response and daily, weekly, monthly and quarterly covariates. We use a lag structure that  spans the previous quarter and set $m=60$. 
For the daily covariates, we include 60 lags (approximately 3 months), for the weekly covariates, we include 12 lags (approximately 12 weeks per quarter), for the monthly ones, we include 3 lags and for the quarterly variables we include one lag.
The full model contains 5 daily, 3 weekly, 20 monthly, and 2 quarterly regressors (including the response), yielding a total of $N=398$ predictors. 
We also consider a `reduced' model with $N=49$ predictors;  2 weekly and 8 monthly core variables alongside the GDP lag  (see Tables E.1 to E.4 of Appendix E). 

\begin{remark}
All series are aligned to the quarterly forecast target. Weekly variables require attention, as they do not map cleanly into months or quarters. To prevent look-ahead bias, we employ a hard assignment rule based on the week-ending Saturday: A week is attributed to a quarter only if its Saturday falls within that quarter. A week straddling two quarters is then assigned to the period where it ends. 
\end{remark}

Model \eqref{eq:BaseEquation} allows us to forecast GDP growth for the next quarter.
We also consider a nowcasting set-up where we include lags of the daily, weekly and monthly covariates in the ongoing quarter; hence using lags $x_{mt - m_i(j - 1)}$
instead of  $x_{m(t-1) - m_i(j - 1)}$ for those covariates $i$ with $m_i < m$ in model \eqref{eq:BaseEquation}. We employ a rolling window with window sizes $W \in \{66, 105\}$, corresponding to approximately 50\% and 80\% of the sample, respectively. Within each window, all predictors and the response are standardized to have zero mean and unit variance.

For StarTime, we specify one temporal tree per regressor. For daily regressors, the tree starts from 60 leaves, aggregated into 12 weeks of 5 days, then into 3 `months' of 4 weeks, and  finally into a single quarterly node. 
Weekly trees follow the same logic, but start from 12 weekly lags,  aggregated into 3 month-level nodes and then into one quarterly node. Monthly trees have two levels, with 3 monthly lags aggregated into a quarter, while quarterly regressors are represented by a single-node tree.

We tune $\lambda_1$ and $\lambda_2$ for StarTime in each rolling window using the BIC with $c=0.3$. Due to the pronounced high-dimensionality of the full model, the regularization grid may, however, contain a large region of tuning parameter values that yield too complex models. To address this, we implement a two-step grid construction: We start with the initial grid, then redefine the grid by restricting $\lambda_1$ and $\lambda_2$ to the empirically relevant range (between the largest and smallest values that yield admissible solutions according to the threshold $c$), and re-estimate on this refined grid. 

We benchmark the performance of StarTime in terms of MSFE against three alternatives: The Random Walk and AR(1) serve as standard forecast baselines but they cannot deliver a nowcast.
We therefore also compare our approach to MIDAS-ML \citep{midasml}, a state-of-the-art method for high-dimensional mixed-frequency regressions that also utilizes penalization and lends itself well to a nowcast and forecast set-up. As in the simulation study, MIDAS-ML is implemented using the \texttt{midasml} package with default settings, specifically we set the mixing parameter to one which corresponds to a sparse-group Lasso. 

\begin{remark}
A fully fledged nowcasting exercise requires explicitly accounting for real-time data vintages and data revisions. We deliberately adopt a simplified nowcasting framework to isolate and illustrate the potential of StarTime in a nowcast-versus-forecast setting. 
For a fair comparison, we impose the same set-up for both StarTime and MIDAS-ML.
While this simplified set-up is informative in its own right, the results should be interpreted accordingly. Extending to a full real-time nowcasting exercise is feasible within our framework but beyond the scope of the  paper.
\end{remark}

\subsubsection{Results}

\begin{figure}[!h]
  \centering
    \includegraphics[width=\textwidth]{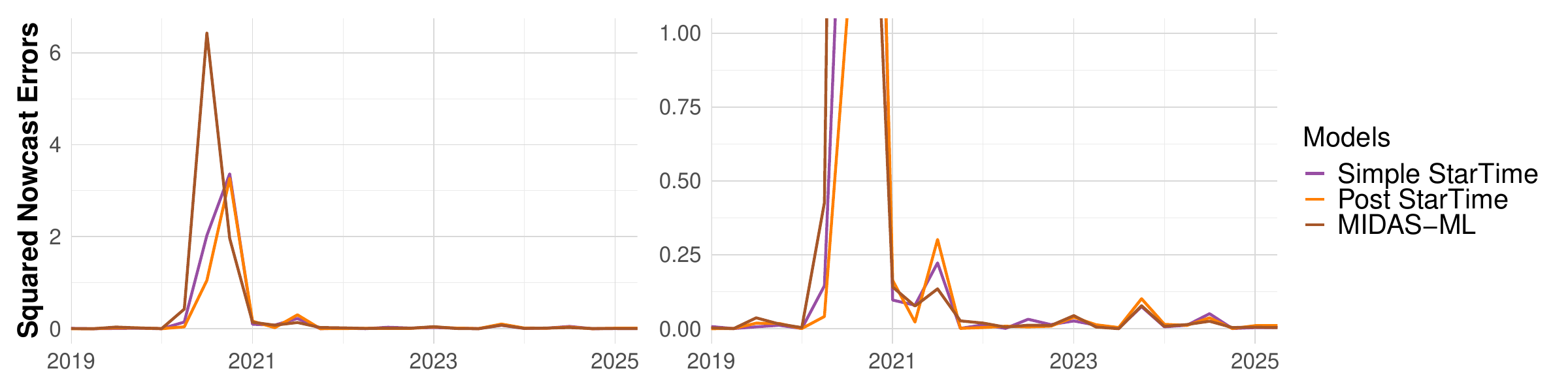}
  \caption{Evolution of squared nowcast errors for the Reduced Set (49 predictors) with $W = 105$. 
  A censored zoom  accounts for COVID-19 volatility. The plotted squared errors are multiplied by 1000 due to the small scale of GDP growth.}
  \label{fig:macro_reduced_105}
\end{figure}

Figure \ref{fig:macro_reduced_105} summarizes the evolution of the MSFEs for  the nowcasting set-up on the reduced model for $W=105$, Appendix F contains the results for the forecast set-up. 
Across all methods, the reduced model results in lower overall MSFEs than the full model. This result underwrites the value of careful, economically grounded pre-selection of relevant predictors. Detailed results on the full and reduced model are available in Figures F.1 to F.4 in Appendix F.1.

The plots show a clear spike in errors 
around 2020, corresponding to the onset of the COVID-19 pandemic. We formally define this COVID period as spanning 2020:Q1 to 2021:Q4, identified by locating where the maximum squared residual across all evaluated models exceeded five times the sample median. The Random Walk and AR(1) produce notably larger errors during the COVID period than StarTime and MIDAS-ML (Figure \ref{fig:macro_reduced_105}). Incorporating high-frequency information thus yields clear accuracy gains. 
Moreover, nowcast errors are generally smaller than forecast errors (Figure \ref{fig:macro_reduced_105} versus Figure F.1 in Appendix F.1), as expected, since nowcasts exploit additional information that becomes available within the forecast horizon.
Outside of the pandemic period,  the models  behave very similarly; differences in accuracy are small and do not persist after 2022.

We evaluate nowcast and forecast accuracy in more detail using the MSFEs reported in Table \ref{tab:macro_w105} for $W=105$. The results for $W=66$ are largely similar and available in Table F.1 of  Appendix F.2. The nowcasts ($h=0$) from Post StarTime achieve the lowest MSFE across the total out-of-sample period for the reduced model, followed by the $h=0$ results for Simple StarTime and MIDAS-ML.
Post StarTime nowcasts outperform Post StarTime forecasts by 57\%; this result is intuitive since the nowcasting model set-up explicitly leverages high-frequency inter-quarter information. A surprising result, at first sight, is that 
MIDAS-ML forecasts lead to a lower MSFE than its nowcasts. However, this is largely driven by the COVID-peak, during which MIDAS-ML seems to struggle to handle the extreme volatility. Post StarTime nowcasts are 47\% lower than the ones from MIDAS-ML during the COVID peak. Outside of the COVID-peak, MIDAS-ML nowcasts perform better than its forecasts. In fact, its nowcasts perform best and are closely followed by StarTime and Post StarTime.
Finally, the AR(1) and Random Walk are not competitive during the COVID-peak. Outside the COVID-peak, their forecast performances get considerably closer to the results of the mixed-frequency models.

\begin{table}[t]
\centering
\caption{MSFE for window $W = 105$}
\begin{threeparttable}
\begingroup
\setlength{\tabcolsep}{1.7pt}
\begin{tabular}{lcccccccc}
\toprule
& \multicolumn{2}{c}{Post StarTime}
& \multicolumn{2}{c}{Simple StarTime}
& \multicolumn{2}{c}{MIDAS-ML}
& AR(1) & Random walk \\
\cmidrule(lr){2-3} \cmidrule(lr){4-5} \cmidrule(lr){6-7}
& $h=0$ & $h=1$ & $h=0$ & $h=1$ & $h=0$ & $h=1$ &  &  \\
\midrule
\addlinespace[0.3em]
\multicolumn{9}{l}{\textit{Reduced Set}} \\
\hspace{1em}Total     & 0.198 & 0.459 & 0.239 & 0.401 & 0.365 & 0.302 & 1.336 & 1.561 \\
\hspace{1em}Non-peak  & 0.017 & 0.023 & 0.015 & 0.022 & 0.015 & 0.022 & 0.029 & 0.035 \\
\hspace{1em}COVID-peak      & 0.607 & 1.442 & 0.744 & 1.254 & 1.152 & 0.933 & 4.277 & 4.994 \\
\addlinespace[0.5em]
\multicolumn{9}{l}{\textit{Full Set}} \\
\hspace{1em}Total     & 0.332 & 0.342 & 0.695 & 0.664 & 0.749 & 0.356 & 1.336 & 1.561 \\
\hspace{1em}Non-peak  & 0.044 & 0.036 & 0.056 & 0.027 & 0.018 & 0.028 & 0.029 & 0.035 \\
\hspace{1em}COVID-peak      & 0.981 & 1.031 & 2.134 & 2.096 & 2.395 & 1.094 & 4.277 & 4.994 \\
\bottomrule
\end{tabular}
\endgroup
\begin{tablenotes}
\item \textit{Note: MSFE values are multiplied by  $1000$ due to the small scale of GDP growth.}
\end{tablenotes}
\end{threeparttable}
\label{tab:macro_w105}
\end{table}

To assess whether the differences in performance are statistically significant, we compute the MCS for $W=105$ and $W=66$ (Tables F.2 to F.5 of Appendix F.2). 
No single model dominates. We find the same conclusion when computing the MCS  for the off-peak out-of-sample period only.
The general difficulty in significantly outperforming univariate baselines for forecasting GDP growth is a well-known result. Parsimonious autoregressive benchmarks are particularly robust for stationary series like GDP growth, often proving difficult to beat (e.g., \citealp{StockWatson2007, Marcellino2006}). This aligns with our observation that especially the off-peak performance of simple models is competitive to that of the more complex models. 

Post StarTime does not only perform competitively in terms of MSFE, but uncovers economically interpretable structures through its temporal aggregation and sparsity, thereby facilitating a clearer understanding of the underlying drivers of its nowcast and forecast performance.
We analyze the structure of the coefficients estimated by Post StarTime in more detail; selection and aggregation patterns for all specifications are reported in Tables F.6 to F.14 of Appendix F.3.
We observe differences between the full and restricted model set-up, as well as  between the nowcasting  and  forecasting exercises. In the high-dimensional full forecasting model, StarTime primarily utilizes sparsity to handle the large parameter space, with almost no aggregation. In contrast, for the reduced set, it frequently exploits temporal fusion, aggregating high-frequency lags into coarser ones. 
StarTime typically yields denser coefficient vectors for nowcast than forecast specifications, reflecting that more high-frequency covariates survive among the intra-quarter information. Since StarTime performs best on the reduced model set-up, we focus our discussion on these results.

Temporal aggregation is most pronounced in the housing, financial, and manufacturing variables. In the $W=105$ window, the monthly log-levels of New Privately-Owned Housing Units Authorized ($\log{\texttt{PERMIT}}$) are fully aggregated to a single quarterly lag across
65\% of the rolling windows for the nowcasts and even
96\% for the next quarter forecasts. This finding aligns with the economic intuition that housing permits represent a pipeline of future activity; aggregating the monthly volatility reveals the persistent trend that drives quarterly GDP \citep{Cheung2016, LaPoint2024}. Similarly, the weekly Chicago Fed National Financial Conditions Index (\texttt{NFCI}) is aggregated 
91\% of the time for the $W=66$ nowcasts window, and 
69\% of the time for the forecasts.
As noted by \cite{Brave2011}, while financial stress can be volatile at high frequencies, it is the persistent tightening of conditions over the course of a quarter that materially impacts real output, supporting StarTime's choice to smooth these signals to a quarterly frequency. In the reduced nowcasting set ($W=105$), the Value of Manufacturers' New Orders (\texttt{AMTMNO}) is fully aggregated 96\% of the time. Smoothing the intrinsic volatility of new orders helps extract the underlying signal of industrial demand, reinforcing its established role as a stable leading indicator of aggregate output \citep{Zarnowitz1992}.

The sparsity patterns selected by StarTime reveal which predictors contain unique and relevant information for GDP growth, showcasing distinct lag preferences between nowcasting and forecasting. 
The first lag of quarterly GDP growth, $\Delta \log(\texttt{GDP})_{t-1}$, is selected in 69\% of the nowcasts, indicating that persistence in GDP growth provides a useful baseline for assessing current-quarter activity, consistent with standard dynamic factor nowcasting models \citep{Giannone2008}. 
The nowcasting set-up also reveals a distinct `anchoring' behavior heavily reliant on early-quarter data.
For variables like Nonfarm Payrolls (\texttt{PAYEMS}), Retail Sales (\texttt{RSAFS}), and Industrial Production (\texttt{INDPRO}), the nowcasting model predominantly selects their second lag, representing the earliest month of the quarter, over more recent releases. For example, $\texttt{PAYEMS}_{t-2}$ is retained 100\% of the time (with subsequent periods at 85\%), and similarly, $\texttt{INDPRO}_{t-2}$ is always retained (with $\texttt{INDPRO}_{t}$ and $\texttt{INDPRO}_{t-1}$ kept only 35\% of the time). This structural choice suggests that early-quarter information establishes the primary foundation for the trajectory of the quarter, while subsequent intra-quarter releases serve merely as marginal adjustments.

For the forecast model set-up with $W=105$, an interesting selection pattern appears for the labor market variables.  
We observe a structural difference between employment and jobless claims: While the model selects the first lag of Nonfarm Payrolls ($\texttt{PAYEMS}_{t-1}$) in 85\% of the cases, ignoring subsequent lags, it favours the second lag of Initial Claims ($\texttt{ICSA}_{t-2}$, 73\%) over the first (35\%). This aligns with the chronological classification of business cycle indicators established by \cite{StockWatson1989}: Payrolls act as a coincident indicator, where the most recent momentum is most predictive, whereas initial claims serve as a leading indicator, where signals often precede output changes by several months. Similarly, the model exhibits a preference for specific real-activity indicators, classified as `hard data' by \cite{Giannone2008} for their high information content. For Retail Sales ($\texttt{RSAFS}$), the first lag is consistently selected (with subsequent lags excluded), capturing the immediate impact of consumption news. Conversely, for Industrial Production ($\texttt{INDPRO}$), the first and second lags are retained (100\% and 54\%, respectively). This difference likely reflects the distinct nature of these series: While sales data is volatile and its predictive content dissipates quickly, industrial production exhibits greater persistence, leading the model to use multiple lags to capture the underlying trend.

Finally, although StarTime's high-dimensional full model-set exhibits lower accuracy than its reduced model set-up, it still offers several noteworthy insights. In particular, it also exhibits a preference for specific real-activity indicators. For the forecast model with $W=105$, 
the first lags of Industrial Production growth ($\Delta \log(\texttt{INDPRO})$) and Real Manufacturing and Trade Sales ($\Delta \log(\texttt{CMRMTSPL})$) are always retained. Housing Sales ($\texttt{HSN1F}$), Retail Sales ($\texttt{RSAFS}$), and Payrolls ($\texttt{PAYEMS}$) also all exhibit retention rates exceeding 90\%. This selection mirrors the variables identified by the \cite{NBERdating} as primary coincident indicators \citep{StockWatson1989}. The retention of housing sales underscores the critical role of the housing sector as an early warning system \citep{Leamer2007}. In the nowcasting full model set-up, additional distinct patterns emerge. The \texttt{NFCI} remains highly aggregated (89\%), confirming the robustness of persistent financial signals. Notably, Total Business Inventories (\texttt{BUSINV}) are temporally grouped 81\% of the time for $W=105$, but in terms of sparsity, its middle lag ($\texttt{BUSINV}_{t-1}$) is consistently retained more frequently than its contemporaneous and final lags across both window sizes. Inventories are heavily procyclical yet notoriously volatile and difficult to nowcast \citep{BlinderMaccini1991}. The preference for the mid-quarter inventory lag suggests the model isolates this intermediate intra-quarter data point to gauge turning points in the inventory cycle before the quarter concludes. In the most high-dimensional setting ($W=66$), the nowcasting model strictly filters information to manage the acute dimensionality: $\texttt{RSAFS}_{t-2}$ is retained 35\% while its earlier terms are ignored entirely, and similarly, only the second lag of \texttt{CMRMTSPL} is selected. This strict selection reinforces the notion that in high dimensions, early-quarter releases contain the most valuable information for concurrent output.

Overall, StarTime achieves competitive nowcast and forecast performance relative to the considered benchmarks, while uniquely offering data-driven insights into the frequency at which the predictors  are most relevant for modeling GDP growth.

\section{Conclusion} \label{sec:conclusion}
We propose a tree-structured penalized method to reduce dimensionality in high-order autoregressions and mixed-frequency regressions. StarTime flexibly selects coefficients to be temporally aggregated at varying frequencies,  sparse, or a combination of both, as supported by the data. We establish its theoretical properties by deriving an error bound under general conditions, thereby allowing for settings encountered in many econometric applications where the errors may be non-Gaussian, autocorrelated, heteroskedastic and weakly dependent. Simulations demonstrate strong performance relative to benchmarks, and empirical applications illustrate StarTime's usefulness for analyzing financial and macroeconomic data. Replication material for all analyses is available at \url{https://github.com/MarieCorillon/StarTime_Replication}.

Several avenues for future research remain. We focus on $\ell_1$-penalization to promote sparsity  and temporal aggregation in $\hat{\boldsymbol\beta}$; alternative penalties could explore other structured forms of aggregation and sparsity (e.g., \citealp{nicholson2020high, Bien17}). 
We select tuning parameters via a Bayesian Information Criterion for simplicity; extending data-driven tuning parameter approaches such as \cite{kock2025data} may further narrow the gap to oracle performance. High-dimensional inference  developed for sparse time series models (e.g., \citealp{Adamek2023}) can be extended to settings that jointly incorporate temporal aggregation and sparsity. 

\section*{Acknowledgments}

The first and third author are financially supported by a grant from the Dutch Research Council (NWO), research programme Vidi (VI.Vidi.211.032).
Previous versions of this paper were presented at ISNPS 2026, QFFE 2026, NESG 2025 and ICSDS 2025;  we gratefully acknowledge comments by the participants. 
We thank Jacob Bien, Enrico Wegner and Etienne Wijler for the helpful discussions. 

\theendnotes

\bibliographystyle{asa}
\bibliography{references}

\newpage

\appendix

\section{Theory}
\subsection{Preliminary Results}
\label{appendix:lemmas}

\begin{lemma}[Basic Inequality]
\label{lemma:basicinequality}
We have 
\begin{align*}
\frac{1}{2T}\left\lVert \boldsymbol{X}(\hat{\boldsymbol{\beta}}-\boldsymbol{\beta}^0)\right\rVert_2^2
+\lambda \left\lVert  w(\hat{\boldsymbol{\gamma}})\right\rVert_1
\leq
\frac{\boldsymbol{\varepsilon}^\top\boldsymbol{X}(\hat{\boldsymbol{\beta}}-\boldsymbol{\beta}^0)}{T}
+\lambda \left\lVert  w(\boldsymbol{\gamma}^0)\right\rVert_1.
\end{align*}
\end{lemma}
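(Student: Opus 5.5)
The basic inequality follows from the optimality of $(\hat{\boldsymbol\beta},\hat{\boldsymbol\gamma})$ for the penalized objective, compared against the feasible point $(\boldsymbol\beta^0,\boldsymbol\gamma^0)$. Here $\hat{\boldsymbol\beta}=\boldsymbol A\hat{\boldsymbol\gamma}$ is the (Simple) StarTime solution of \eqref{eq:objectivefunction} with $\lambda_1=\lambda_2=\lambda$. The objective can then be written as a function of $\boldsymbol\gamma$ alone,
\begin{equation*}
L(\boldsymbol\gamma)=\frac{1}{2T}\left\lVert \boldsymbol y-\boldsymbol X\boldsymbol A\boldsymbol\gamma\right\rVert_2^2+\lambda\left\lVert w(\boldsymbol\gamma)\right\rVert_1 .
\end{equation*}
By Assumption \ref{assumption:sparsity}, $\boldsymbol\gamma^0=\boldsymbol S\boldsymbol\phi^0$ satisfies $\boldsymbol A\boldsymbol\gamma^0=\boldsymbol\beta^0$, so the pair $(\boldsymbol\beta^0,\boldsymbol\gamma^0)$ is feasible for the constraint $\boldsymbol\beta=\boldsymbol A\boldsymbol\gamma$. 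Because $\hat{\boldsymbol\gamma}$ minimizes $L$ over all of $\mathbb R^M$, we obtain $L(\hat{\boldsymbol\gamma})\le L(\boldsymbol\gamma^0)$.

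Next I would substitute $\boldsymbol y=\boldsymbol X\boldsymbol\beta^0+\boldsymbol\varepsilon$ into both sides. On the left,
\begin{equation*}
\left\lVert \boldsymbol y-\boldsymbol X\hat{\boldsymbol\beta}\right\rVert_2^2=\left\lVert \boldsymbol\varepsilon-\boldsymbol X(\hat{\boldsymbol\beta}-\boldsymbol\beta^0)\right\rVert_2^2=\left\lVert\boldsymbol\varepsilon\right\rVert_2^2-2\boldsymbol\varepsilon^\top\boldsymbol X(\hat{\boldsymbol\beta}-\boldsymbol\beta^0)+\left\lVert \boldsymbol X(\hat{\boldsymbol\beta}-\boldsymbol\beta^0)\right\rVert_2^2 .
\end{equation*}
On the right, $\lVert \boldsymbol y-\boldsymbol X\boldsymbol\beta^0\rVert_2^2=\lVert\boldsymbol\varepsilon\rVert_2^2$. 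Plugging these into $L(\hat{\boldsymbol\gamma})\le L(\boldsymbol\gamma^0)$ and cancelling the common term $\lVert\boldsymbol\varepsilon\rVert_2^2/(2T)$ leaves
\begin{equation*}
\frac{1}{2T}\left\lVert \boldsymbol X(\hat{\boldsymbol\beta}-\boldsymbol\beta^0)\right\rVert_2^2-\frac{1}{T}\boldsymbol\varepsilon^\top\boldsymbol X(\hat{\boldsymbol\beta}-\boldsymbol\beta^0)+\lambda\left\lVert w(\hat{\boldsymbol\gamma})\right\rVert_1\le\lambda\left\lVert w(\boldsymbol\gamma^0)\right\rVert_1 .
\end{equation*}
Moving the cross term to the right-hand side gives exactly the claimed inequality.

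The argument is deterministic and holds for any realisation of the data. There is no substantial obstacle. The only point that needs care is that $\hat{\boldsymbol\beta}$ must be the penalized minimizer, not the Post StarTime refit, for the comparison to be valid. One should also note that the inequality holds for any minimizer $\hat{\boldsymbol\gamma}$, even though minimizers need not be unique (Remark \ref{rem:non-unique-gamma}), since only $L(\hat{\boldsymbol\gamma})\le L(\boldsymbol\gamma^0)$ is used.
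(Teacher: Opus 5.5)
Your proof is correct and takes the same route as the paper. Like the paper, you compare the objective at the minimizer with the feasible point $(\boldsymbol\beta^0,\boldsymbol\gamma^0)$, substitute $\boldsymbol y=\boldsymbol X\boldsymbol\beta^0+\boldsymbol\varepsilon$, expand, cancel $\lVert\boldsymbol\varepsilon\rVert_2^2/(2T)$, and rearrange; your expansion has the correct sign $-2\boldsymbol\varepsilon^\top\boldsymbol X(\hat{\boldsymbol\beta}-\boldsymbol\beta^0)$ where the paper's intermediate display shows a $+2$ typo, and your remarks on feasibility via Assumption~\ref{assumption:sparsity} and on $\hat{\boldsymbol\beta}$ being the penalized (not Post StarTime) estimator make explicit what the paper leaves implicit.
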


\begin{proof}[Proof of Lemma \ref{lemma:basicinequality}]
Using $\boldsymbol{y} = \boldsymbol{X} \boldsymbol{\beta}^0 + \boldsymbol{\varepsilon}$, we have
\begin{align*}
& \frac{1}{2T}\left\lVert \boldsymbol{y}-\boldsymbol{X}\hat{\boldsymbol{\beta}}\right\rVert_2^2
+\lambda \left\lVert  w(\hat{\boldsymbol{\gamma}})\right\rVert_1
\leq
\frac{1}{2T}\left\lVert \boldsymbol{y}-\boldsymbol{X}\boldsymbol{\beta}^0\right\rVert_2^2
+\lambda \left\lVert  w(\boldsymbol{\gamma}^0)\right\rVert_1 \\
\iff & \frac{1}{2T}\left\lVert \boldsymbol{X} \boldsymbol{\beta}^0 + \boldsymbol{\varepsilon} -\boldsymbol{X}\hat{\boldsymbol{\beta}}\right\rVert_2^2
+\lambda \left\lVert  w(\hat{\boldsymbol{\gamma}})\right\rVert_1
\leq
\frac{1}{2T}\left\lVert \boldsymbol{X} \boldsymbol{\beta}^0 + \boldsymbol{\varepsilon} -\boldsymbol{X}\boldsymbol{\beta}^0\right\rVert_2^2
+\lambda \left\lVert  w(\boldsymbol{\gamma}^0)\right\rVert_1 \\
\iff & \frac{1}{2T}\Big ( \left\lVert \boldsymbol{\varepsilon} \right\rVert_2^2 + 2\boldsymbol{\varepsilon}^\top \boldsymbol{X}(\hat{\boldsymbol{\beta}} - \boldsymbol{\beta}^0) + \left\lVert \boldsymbol{X}(\hat{\boldsymbol{\beta}} - \boldsymbol{\beta}^0) \right\rVert_2^2 \Big ) + \lambda \left\lVert  w(\hat{\boldsymbol{\gamma}})\right\rVert_1
\leq \frac{1}{2T} \left\lVert \boldsymbol{\varepsilon} \right\rVert_2^2 + \lambda \left\lVert  w(\boldsymbol{\gamma}^0)\right\rVert_1 \\
\iff & \frac{1}{2T}\left\lVert \boldsymbol{X}(\hat{\boldsymbol{\beta}}-\boldsymbol{\beta}^0)\right\rVert_2^2
+\lambda \left\lVert  w(\hat{\boldsymbol{\gamma}})\right\rVert_1
\leq
\frac{\boldsymbol{\varepsilon}^\top\boldsymbol{X}(\hat{\boldsymbol{\beta}}-\boldsymbol{\beta}^0)}{T}
+\lambda \left\lVert  w(\boldsymbol{\gamma}^0)\right\rVert_1 \qedhere
\end{align*}
\end{proof}

\begin{lemma}[Concentration]
\label{lemma:concentration}
Let Assumption 1 be satisfied and recall $\mathcal{J} = \left\{ \frac{\|\boldsymbol{X}^\top \boldsymbol{\varepsilon}\|_\infty}{T} \leq \lambda_0 \right\}
$. Then, for $\lambda_0 := C\,\frac{N^{1/\tilde{m}}(\ln{\ln{T}})^{1/\tilde{m}}}{\sqrt{T}}$,
\begin{align*}
\mathbb{P}(\mathcal{J}) \ge 1- C(\ln\ln T)^{-1}.
\end{align*}
\end{lemma}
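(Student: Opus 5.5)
We bound $\mathbb{P}(\mathcal{J}^c)$ by controlling the maximum of $N$ sums $\frac{1}{T}\sum_{t=1}^T x_{j,t}\varepsilon_t$, $j=1,\dots,N$. The plan follows the route of \citet{Adamek2023}, Lemma A.3/Theorem 1: a Markov inequality in an $\tilde m$-th moment, combined with a maximal inequality for mixingale-type sums. This yields the polynomial rate $N^{1/\tilde m}$ rather than a $\sqrt{\ln N}$ rate, which is the price for allowing heavy tails and NED dependence.

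\textbf{Step 1: the products form a mean-zero mixingale.} Define $u_{j,t}:=x_{j,t}\varepsilon_t$. By Assumption \ref{assumption:ned}\ref{ass:dgpStationary}, $\E[u_{j,t}]=0$. By Cauchy--Schwarz, $\E|u_{j,t}|^{\bar m}\le (\E|x_{j,t}|^{2\bar m}\E|\varepsilon_t|^{2\bar m})^{1/2}\le C$. Products of $L_{2\tilde m}$-NED processes with bounded $2\bar m$ moments are $L_{\tilde m}$-NED of size $-d$ (Davidson, Theorem 17.9). Since the underlying $\boldsymbol{s}_{T,t}$ is $\alpha$-mixing of the stated size, $u_{j,t}$ is then an $L_{\tilde m}$-mixingale of size $-1/2$, with mixingale constants bounded uniformly in $j$. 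The restriction $d\ge (\bar m/\tilde m-1)/(\bar m-2)$ is exactly what makes the size conditions combine correctly.

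\textbf{Step 2: moment bound for each sum.} We apply a Burkholder/Hansen-type maximal inequality for $L_{\tilde m}$-mixingales of size $-1/2$ with $\tilde m>2$. It gives $\E\bigl|\sum_{t=1}^T u_{j,t}\bigr|^{\tilde m}\le C T^{\tilde m/2}$ uniformly in $j$.

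\textbf{Step 3: union over $j$ by Markov.} Replacing the max by a sum,
\begin{equation*}
\mathbb{P}(\mathcal{J}^c)\le \sum_{j=1}^N \frac{\E\bigl|\sum_t u_{j,t}\bigr|^{\tilde m}}{(T\lambda_0)^{\tilde m}} \le \frac{C N T^{\tilde m/2}}{T^{\tilde m}\lambda_0^{\tilde m}}.
\end{equation*}
Plugging in $\lambda_0=C N^{1/\tilde m}(\ln\ln T)^{1/\tilde m}/\sqrt T$ gives $\lambda_0^{\tilde m}=C^{\tilde m}N\ln\ln T/T^{\tilde m/2}$. Hence the bound reduces to $C(\ln\ln T)^{-1}$, and $\mathbb{P}(\mathcal{J})\ge 1-C(\ln\ln T)^{-1}$. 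The factor $\ln\ln T$ is simply a slowly diverging sequence inserted so that the probability tends to one; any such sequence would work.

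\textbf{Main obstacle.} The delicate part is Step 1, namely verifying that the NED and moment conditions transfer to the products with the right sizes, uniformly in $j$ and in the triangular-array index. Step 2 depends on this uniformity. Rather than redo this, I would invoke Lemma A.3 / Theorem 1 of \citet{Adamek2023} directly, since Assumption \ref{assumption:ned} is verbatim their Assumption 1. Their result delivers exactly $\E\max_j|\sum_t u_{j,t}|$-type tail control at this rate, and the remaining steps are the Markov computation above.
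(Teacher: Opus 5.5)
Your proposal is correct and follows essentially the same route as the paper. The paper cites the tail bound $\mathbb{P}(\mathcal{J}^c)\le C_1 N(\lambda_0\sqrt{T})^{-\tilde m}$ from \citet{Adamek2023}, which your Steps 1--3 reconstruct via the mixingale moment bound, the union bound and Markov's inequality, and then plugs in $\lambda_0$ with the slowly diverging sequence $L_T=\ln\ln T$. One small correction: the result to invoke is their Lemma A.4; their Lemma A.3 is the covariance-concentration result that the paper uses for the event $\mathcal{C}(S_0)$, not for $\mathcal{J}$.
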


\begin{proof}[Proof of Lemma \ref{lemma:concentration}]
Let $V_j := \sum_{t=1}^{T} x_{j,t}\varepsilon_t$ and $C_1 > 0$. Then, the complement of $\mathcal{J}$, $\mathcal{J}^c$, can be expressed as
\begin{equation*}
    \mathbb{P}(\mathcal{J}^c)
= \mathbb{P}\Big( \max_{1\le j\le N} |V_j| > T\lambda_0 \Big)
\le C_1 N\Big(\frac{1}{\lambda_0\sqrt{T}}\Big)^{\tilde{m}},
\end{equation*}
where the inequality follows from \citet{Adamek2023}'s Lemma A.4.  
Let $L_T\to\infty$ as $T \to \infty$ be any deterministic sequence and choose
\begin{equation*}
\lambda_0 := C_2\,\frac{N^{1/\tilde{m}}L_T^{1/\tilde{m}}}{\sqrt{T}},
\end{equation*}
for some $C_2>0$. Then
\begin{equation*}
\mathbb{P}(\mathcal{J}^c)
\le C_1 N\Big(\frac{1}{C_2 N^{1/\tilde{m}}L_T^{1/\tilde{m}}}\Big)^{\tilde{m}}
= \frac{C}{L_T},
\end{equation*}
and
\begin{equation*}
\mathbb{P}(\mathcal{J}) \ge 1-\frac{C}{L_T}.
\end{equation*}
In particular, for $L_T=\ln\ln T$, we obtain $\mathbb{P}(\mathcal{J}) \ge 1- C(\ln\ln T)^{-1}\to 1$ as $T\to\infty$.
\end{proof}

\begin{lemma}[Initial Error Bound]
\label{lemma:concentration2}
Under Assumption 1 and for $C > 0$, let the tuning parameter be
\begin{align*}
\lambda := 2 C\,\frac{N^{1/\tilde{m}}(\ln{\ln{T}})^{1/\tilde{m}}}{\sqrt{T}}.
\end{align*}
With probability at least $1 - \frac{C}{\ln{\ln{T}}}$,
we have
\begin{align*}
\frac{1}{2T} \left \lVert \boldsymbol{X}(\hat{\boldsymbol{\beta}} - \boldsymbol{\beta}^0)\right \rVert_2^2 \leq \frac{3}{2}\lambda \left \lVert w(\boldsymbol{\gamma}^0) \right \rVert_1.
\end{align*}
\end{lemma}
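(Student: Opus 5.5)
The approach is to combine the Basic Inequality (Lemma \ref{lemma:basicinequality}) with the concentration event $\mathcal{J}$ from Lemma \ref{lemma:concentration}. With the stated choice, $\lambda = 2\lambda_0$, where $\lambda_0 = C N^{1/\tilde{m}}(\ln\ln T)^{1/\tilde{m}}/\sqrt{T}$. Lemma \ref{lemma:concentration} then gives $\mathbb{P}(\mathcal{J}) \ge 1 - C(\ln\ln T)^{-1}$, so it suffices to prove the deterministic inequality on $\mathcal{J}$.

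On $\mathcal{J}$, bound the empirical process term by H\"older's inequality:
\begin{equation*}
\frac{\boldsymbol{\varepsilon}^\top \boldsymbol{X}(\hat{\boldsymbol\beta}-\boldsymbol\beta^0)}{T} \le \frac{\|\boldsymbol{X}^\top\boldsymbol\varepsilon\|_\infty}{T}\,\|\hat{\boldsymbol\beta}-\boldsymbol\beta^0\|_1 \le \lambda_0 \|\hat{\boldsymbol\beta}-\boldsymbol\beta^0\|_1 \le \frac{\lambda}{2}\|\hat{\boldsymbol\beta}-\boldsymbol\beta^0\|_1 .
\end{equation*}
Next relate $\|\hat{\boldsymbol\beta}-\boldsymbol\beta^0\|_1$ to the $w$-norms. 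Since $\boldsymbol\beta = \boldsymbol A\boldsymbol\gamma$ is the first block of $w(\boldsymbol\gamma) = \boldsymbol D\boldsymbol\gamma$, we have $\|\boldsymbol A\boldsymbol\gamma\|_1 \le \|w(\boldsymbol\gamma)\|_1$. By linearity of $w$ and the triangle inequality,
\begin{equation*}
\|\hat{\boldsymbol\beta}-\boldsymbol\beta^0\|_1 \le \|w(\hat{\boldsymbol\gamma}-\boldsymbol\gamma^0)\|_1 \le \|w(\hat{\boldsymbol\gamma})\|_1 + \|w(\boldsymbol\gamma^0)\|_1 .
\end{equation*}

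Substituting into Lemma \ref{lemma:basicinequality} gives
\begin{equation*}
\frac{1}{2T}\|\boldsymbol X(\hat{\boldsymbol\beta}-\boldsymbol\beta^0)\|_2^2 + \lambda\|w(\hat{\boldsymbol\gamma})\|_1 \le \frac{\lambda}{2}\|w(\hat{\boldsymbol\gamma})\|_1 + \frac{\lambda}{2}\|w(\boldsymbol\gamma^0)\|_1 + \lambda\|w(\boldsymbol\gamma^0)\|_1 .
\end{equation*}
Moving $\frac{\lambda}{2}\|w(\hat{\boldsymbol\gamma})\|_1$ to the left leaves $\frac{\lambda}{2}\|w(\hat{\boldsymbol\gamma})\|_1 \ge 0$ there, and dropping it yields
\begin{equation*}
\frac{1}{2T}\|\boldsymbol X(\hat{\boldsymbol\beta}-\boldsymbol\beta^0)\|_2^2 \le \frac{3}{2}\lambda\|w(\boldsymbol\gamma^0)\|_1 .
\end{equation*}
This holds on $\mathcal{J}$, and hence with the stated probability.

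The argument is short, and the only point requiring care is matching constants. The constant $C$ in $\lambda$ must be the one that makes $\lambda_0 = \lambda/2$ satisfy Lemma \ref{lemma:concentration}'s tail bound, which is allowed since that lemma holds for any $C_2 > 0$ with the resulting $C$ in the probability bound. One should also note that Lemma \ref{lemma:basicinequality} applies to the minimizer $(\hat{\boldsymbol\beta}, \hat{\boldsymbol\gamma})$ of the $\lambda_1 = \lambda_2 = \lambda$ objective, and that $\boldsymbol\gamma^0$ is feasible with $\boldsymbol\beta^0 = \boldsymbol A\boldsymbol\gamma^0$ by Assumption \ref{assumption:sparsity}.
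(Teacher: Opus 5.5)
Your proposal is correct and follows the paper's proof essentially step for step. Like the paper, you use the basic inequality, H\"older on $\mathcal{J}$ with $\lambda = 2\lambda_0$, the bound $\|\hat{\boldsymbol\beta}-\boldsymbol\beta^0\|_1 \le \|w(\hat{\boldsymbol\gamma})\|_1 + \|w(\boldsymbol\gamma^0)\|_1$, absorption and dropping of $\frac{\lambda}{2}\|w(\hat{\boldsymbol\gamma})\|_1$, and the probability from Lemma~\ref{lemma:concentration}, and your remarks on the feasibility of $\boldsymbol\gamma^0$ and on matching the constant in $\lambda$ make explicit two points the paper leaves implicit.
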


\begin{proof}[Proof of Lemma \ref{lemma:concentration2}]
    On the event
\begin{equation*}
\mathcal{J}
:=
\left\{
\frac{\left\lVert \boldsymbol{X}^\top\boldsymbol{\varepsilon}\right\rVert_{\infty}}{T}
\le \lambda_0
\right\},
\end{equation*} Lemma \ref{lemma:basicinequality} yields
\begin{align*}
\frac{1}{2T}\left\lVert \boldsymbol{X}(\hat{\boldsymbol{\beta}}-\boldsymbol{\beta}^0)\right\rVert_2^2
+\lambda \left\lVert  w(\hat{\boldsymbol{\gamma}})\right\rVert_1
\le
\frac{\boldsymbol{\varepsilon}^\top\boldsymbol{X}(\hat{\boldsymbol{\beta}}-\boldsymbol{\beta}^0)}{T}
+\lambda \left\lVert  w(\boldsymbol{\gamma}^0)\right\rVert_1.
\end{align*}
Using H\"older's inequality and the definition of $\mathcal{J}$, we have
\begin{align*}
\left|
\frac{\boldsymbol{\varepsilon}^\top\boldsymbol{X}(\hat{\boldsymbol{\beta}}-\boldsymbol{\beta}^0)}{T}
\right|
&=
\left|
\frac{(\boldsymbol{X}^\top\boldsymbol{\varepsilon})^\top(\hat{\boldsymbol{\beta}}-\boldsymbol{\beta}^0)}{T}
\right|
\le
\frac{\left\lVert \boldsymbol{X}^\top\boldsymbol{\varepsilon}\right\rVert_{\infty}}{T}
\left\lVert \hat{\boldsymbol{\beta}}-\boldsymbol{\beta}^0\right\rVert_1
\le
\lambda_0 \left\lVert \hat{\boldsymbol{\beta}}-\boldsymbol{\beta}^0\right\rVert_1 .
\end{align*}
Assume now that $\lambda \ge 2\lambda_0$, then
\begin{equation*}
\frac{\boldsymbol{\varepsilon}^\top\boldsymbol{X}(\hat{\boldsymbol{\beta}}-\boldsymbol{\beta}^0)}{T}
\le
\left|
\frac{\boldsymbol{\varepsilon}^\top\boldsymbol{X}(\hat{\boldsymbol{\beta}}-\boldsymbol{\beta}^0)}{T}
\right|
\le
\frac{\lambda}{2}\left\lVert \hat{\boldsymbol{\beta}}-\boldsymbol{\beta}^0\right\rVert_1 .
\end{equation*}
Substituting this bound into the basic inequality gives
\begin{align*}
\frac{1}{2T}\left\lVert \boldsymbol{X}(\hat{\boldsymbol{\beta}}-\boldsymbol{\beta}^0)\right\rVert_2^2
+\lambda \left\lVert  w(\hat{\boldsymbol{\gamma}})\right\rVert_1
\le
\frac{\lambda}{2}\left\lVert \hat{\boldsymbol{\beta}}-\boldsymbol{\beta}^0\right\rVert_1
+\lambda \left\lVert  w(\boldsymbol{\gamma}^0)\right\rVert_1.
\end{align*}
Finally, using $\left\lVert \hat{\boldsymbol{\beta}}-\boldsymbol{\beta}^0\right\rVert_1
\le \left\lVert  w(\hat{\boldsymbol{\gamma}})- w(\boldsymbol{\gamma}^0)\right\rVert_1$
and the triangle inequality
\begin{equation*}
\left\lVert  w(\hat{\boldsymbol{\gamma}})- w(\boldsymbol{\gamma}^0)\right\rVert_1
\le
\left\lVert  w(\hat{\boldsymbol{\gamma}})\right\rVert_1
+
\left\lVert  w(\boldsymbol{\gamma}^0)\right\rVert_1 ,
\end{equation*}
we obtain
\begin{align*}
\frac{1}{2T}\left\lVert \boldsymbol{X}(\hat{\boldsymbol{\beta}}-\boldsymbol{\beta}^0)\right\rVert_2^2
+\lambda \left\lVert  w(\hat{\boldsymbol{\gamma}})\right\rVert_1
&\le
\frac{\lambda}{2}\Big(
\left\lVert  w(\hat{\boldsymbol{\gamma}})\right\rVert_1
+
\left\lVert  w(\boldsymbol{\gamma}^0)\right\rVert_1
\Big)
+\lambda \left\lVert  w(\boldsymbol{\gamma}^0)\right\rVert_1 \\
&=
\frac{\lambda}{2}\left\lVert  w(\hat{\boldsymbol{\gamma}})\right\rVert_1
+\frac{3\lambda}{2}\left\lVert  w(\boldsymbol{\gamma}^0)\right\rVert_1 .
\end{align*}
Subtract $\frac{\lambda}{2}\left\lVert  w(\hat{\boldsymbol{\gamma}})\right\rVert_1$ from both sides to obtain
\begin{align*}
\frac{1}{2T}\left\lVert \boldsymbol{X}(\hat{\boldsymbol{\beta}}-\boldsymbol{\beta}^0)\right\rVert_2^2
+\frac{\lambda}{2}\left\lVert  w(\hat{\boldsymbol{\gamma}})\right\rVert_1
\le
\frac{3\lambda}{2}\left\lVert  w(\boldsymbol{\gamma}^0)\right\rVert_1.
\end{align*}
Since $\frac{\lambda}{2}\left\lVert  w(\hat{\boldsymbol{\gamma}})\right\rVert_1 \ge 0$, we can drop this non-negative term from the left-hand side, yielding 
\begin{align*}
\frac{1}{2T}\left\lVert \boldsymbol{X}(\hat{\boldsymbol{\beta}}-\boldsymbol{\beta}^0)\right\rVert_2^2
\le
\frac{3\lambda}{2}\left\lVert  w(\boldsymbol{\gamma}^0)\right\rVert_1.
\end{align*}
The associated probability stems from Lemma \ref{lemma:concentration}.
\end{proof}

\begin{lemma}[Cone Decomposition]
\label{lemma:cone}
On the event $\mathcal{J}$ of Lemma \ref{lemma:concentration} and under the conditions of
Lemma \ref{lemma:basicinequality} and Lemma \ref{lemma:concentration2},
the StarTime estimator $\hat{\boldsymbol{\beta}}$ satisfies
\begin{align*}
\frac{1}{T}\left\lVert \boldsymbol{X}(\hat{\boldsymbol{\beta}}-\boldsymbol{\beta}^0)\right\rVert_2^2
+\lambda \left\lVert  w(\hat{\boldsymbol{\gamma}})_{S_0^c}\right\rVert_1
\le
3\lambda \left\lVert
 w(\hat{\boldsymbol{\gamma}})_{S_0}- w(\boldsymbol{\gamma}^0)_{S_0}
\right\rVert_1.
\end{align*}
\end{lemma}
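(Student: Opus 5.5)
The plan is the standard Lasso cone argument, run on the composite penalty $\lambda\lVert w(\boldsymbol\gamma)\rVert_1$ instead of on $\lVert\boldsymbol\beta\rVert_1$. Write $\hat{\boldsymbol\Delta}:=w(\hat{\boldsymbol\gamma})-w(\boldsymbol\gamma^0)$. Because $w$ is linear and $\boldsymbol\gamma^0$ is supported on $S_0$, we have $w(\boldsymbol\gamma^0)_{S_0^c}=\mathbf 0$. It follows that $\hat{\boldsymbol\Delta}$ splits into an $S_0$-part $w(\hat{\boldsymbol\gamma})_{S_0}-w(\boldsymbol\gamma^0)_{S_0}$ and an $S_0^c$-part $w(\hat{\boldsymbol\gamma})_{S_0^c}$.

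\emph{Step 1: basic inequality and noise term.} I start from Lemma \ref{lemma:basicinequality} and multiply by $2$:
\[
\frac1T\lVert\boldsymbol X(\hat{\boldsymbol\beta}-\boldsymbol\beta^0)\rVert_2^2+2\lambda\lVert w(\hat{\boldsymbol\gamma})\rVert_1\le \frac{2\boldsymbol\varepsilon^\top\boldsymbol X(\hat{\boldsymbol\beta}-\boldsymbol\beta^0)}{T}+2\lambda\lVert w(\boldsymbol\gamma^0)\rVert_1 .
\]
On $\mathcal J$ with $\lambda\ge2\lambda_0$, H\"older's inequality bounds the noise term by $\lambda\lVert\hat{\boldsymbol\beta}-\boldsymbol\beta^0\rVert_1$, exactly as in the proof of Lemma \ref{lemma:concentration2}. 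This is in turn at most $\lambda\lVert\hat{\boldsymbol\Delta}\rVert_1$, because $\boldsymbol A(\hat{\boldsymbol\gamma}-\boldsymbol\gamma^0)$ is the first block of $\hat{\boldsymbol\Delta}$.

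\emph{Step 2: decomposition.} Set
\[
a:=\lVert w(\hat{\boldsymbol\gamma})_{S_0}\rVert_1,\qquad b:=\lVert w(\hat{\boldsymbol\gamma})_{S_0^c}\rVert_1,\qquad d:=\lVert w(\hat{\boldsymbol\gamma})_{S_0}-w(\boldsymbol\gamma^0)_{S_0}\rVert_1 .
\]
The triangle inequality gives $\lVert\hat{\boldsymbol\Delta}\rVert_1\le d+b$ and $\lVert w(\boldsymbol\gamma^0)\rVert_1\le d+a$. Suppose also that the left-hand penalty decomposes as $\lVert w(\hat{\boldsymbol\gamma})\rVert_1\ge a+b$. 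Then Step 1 yields
\[
\frac1T\lVert\boldsymbol X(\hat{\boldsymbol\beta}-\boldsymbol\beta^0)\rVert_2^2+2\lambda(a+b)\le\lambda(d+b)+2\lambda(d+a).
\]
Cancelling $2\lambda a$ and $\lambda b$ leaves exactly $\frac1T\lVert\boldsymbol X(\hat{\boldsymbol\beta}-\boldsymbol\beta^0)\rVert_2^2+\lambda b\le3\lambda d$, which is the claim. As a by-product, dropping the nonnegative prediction term gives the cone condition $b\le 3d$ used in Assumption \ref{assumption:compatibility}.

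\emph{Main obstacle: decomposability.} The delicate step is the inequality $\lVert w(\hat{\boldsymbol\gamma})\rVert_1\ge\lVert w(\hat{\boldsymbol\gamma})_{S_0}\rVert_1+\lVert w(\hat{\boldsymbol\gamma})_{S_0^c}\rVert_1$.
\begin{itemize}
\item On the $\boldsymbol I_M$ block it holds with equality, since $S_0$ and $S_0^c$ index disjoint coordinates of $\boldsymbol\gamma$.
\item On the $\boldsymbol A$ block it can fail. The vectors $\boldsymbol A_{S_0}\hat{\boldsymbol\gamma}_{S_0}$ and $\boldsymbol A_{S_0^c}\hat{\boldsymbol\gamma}_{S_0^c}$ may share leaves, since a node and its descendants load on the same $\beta_j$.
\end{itemize}
I would first try to handle the $\boldsymbol A$ block by using the coarsest-aggregation structure of Assumption \ref{assumption:sparsity}. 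That is, I would route the $\boldsymbol A$-part of the penalty through $\hat{\boldsymbol\beta}-\boldsymbol\beta^0$ directly, absorbing any overlap into the $S_0$-difference term $d$ at the cost of a constant.

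If that fails, the fallback is to adopt the convention implicit in the notation $w(\boldsymbol\gamma)_{S_0}$, $w(\boldsymbol\gamma)_{S_0^c}$. Under that convention the penalty is treated as the $\ell_1$-norm of the stacked vector $(w(\boldsymbol\gamma)_{S_0}^\top,w(\boldsymbol\gamma)_{S_0^c}^\top)^\top$, which is decomposable by construction. I would then state explicitly that the bound is proved under this decomposable form of the penalty.
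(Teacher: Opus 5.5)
Your Steps 1--2 are the paper's proof line by line: double Lemma \ref{lemma:basicinequality}, apply H\"older on $\mathcal J$ with $\lambda\ge2\lambda_0$, use $\lVert\hat{\boldsymbol\beta}-\boldsymbol\beta^0\rVert_1\le d+b$ and the reverse triangle inequality on $S_0$, then cancel. The paper disposes of the step you flag by asserting $\lVert w(\hat{\boldsymbol\gamma})\rVert_1=\lVert w(\hat{\boldsymbol\gamma})_{S_0}\rVert_1+\lVert w(\hat{\boldsymbol\gamma})_{S_0^c}\rVert_1$ ``since $w(\boldsymbol\gamma^0)_{S_0^c}=\mathbf 0$''. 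With $w(\boldsymbol\gamma)_{S_0}:=\boldsymbol D_{S_0}\boldsymbol\gamma_{S_0}$ this does not follow, and your diagnosis (overlap on the $\boldsymbol A$ block, so subadditivity gives only $\le$) is correct. Your proposal, however, leaves the step open, and it cannot be closed, because the lemma as stated fails on $\mathcal J$.

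Here is a counterexample. Take one root over four leaves, $\boldsymbol A=[\boldsymbol I_4\mid\mathbf 1_4]$, and $\boldsymbol\gamma^0=(0,0,0,0,1)^\top$. Then $\boldsymbol\beta^0=\mathbf 1_4$, $S_0=\{\text{root}\}$ and $s_0=1$, all admissible under Assumption \ref{assumption:sparsity}. Suppose StarTime returns $\hat{\boldsymbol\beta}=(1-\epsilon,1,1,1)^\top$ with $0<\epsilon<1$. Then $\hat{\boldsymbol\gamma}$ must be the minimum-$\ell_1$ solution of $\boldsymbol A\boldsymbol\gamma=\hat{\boldsymbol\beta}$, which is unique and equals $(-\epsilon,0,0,0,1)^\top$. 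Hence $\lVert w(\hat{\boldsymbol\gamma})\rVert_1=\lVert w(\boldsymbol\gamma^0)\rVert_1=5$, while $d=0$ and $b=2\epsilon$. The penalty gains nothing from the off-support mass, so no bound $\lVert w(\hat{\boldsymbol\gamma})\rVert_1\ge\lVert w(\boldsymbol\gamma^0)\rVert_1-Cd+\kappa b$ with $\kappa>0$ can hold. That rules out your first repair.

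This output does occur on $\mathcal J$. Fix $0<\lambda<2$, $\lambda_0=\lambda/2$ and $\epsilon=\lambda/2$. Take design columns $\boldsymbol v$ and $\tfrac{10}{3}\boldsymbol v+\boldsymbol w_j$ ($j=2,3,4$), with $\boldsymbol v,\boldsymbol w_2,\boldsymbol w_3,\boldsymbol w_4$ orthogonal of squared norm $T$. Then $\hat\Sigma_{11}=1$ and $\hat\Sigma_{j1}=10/3$, and the design has full rank. Take noise $\boldsymbol\varepsilon=-\tfrac{\lambda}{2}\boldsymbol v+\tfrac{5\lambda}{3}(\boldsymbol w_2+\boldsymbol w_3+\boldsymbol w_4)$, so that $\boldsymbol X^\top\boldsymbol\varepsilon/T=(-\lambda/2,0,0,0)^\top$. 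Write the tree norm as $g(\boldsymbol\beta)=\max\{\boldsymbol u^\top\boldsymbol\beta:\lVert\boldsymbol A^\top\boldsymbol u\rVert_\infty\le1\}$. Then $\hat{\boldsymbol\Sigma}(\boldsymbol\beta^0-\hat{\boldsymbol\beta})+\boldsymbol X^\top\boldsymbol\varepsilon/T=(0,\tfrac{5\lambda}{3},\tfrac{5\lambda}{3},\tfrac{5\lambda}{3})^\top$. This equals $\lambda(\mathbf 1_4+\boldsymbol u)$ with $\boldsymbol u=(-1,\tfrac23,\tfrac23,\tfrac23)^\top\in\partial g(\hat{\boldsymbol\beta})$, so the KKT conditions hold. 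The lemma's left side is then $\epsilon^2+2\lambda\epsilon>0$, while its right side is $0$.

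Your fallback avoids this only by replacing the StarTime penalty with one that depends on $S_0$. That proves a bound for a different, infeasible estimator, not for StarTime. What your Steps 1--2 do prove verbatim is the version that splits the coordinates of $w(\boldsymbol\gamma)\in\mathbb R^{N+M}$ along $R:=\mathrm{supp}\,w(\boldsymbol\gamma^0)$, where decomposability holds automatically. The cost is that the sparsity index carried into the compatibility step becomes $|R|=\lVert\boldsymbol\beta^0\rVert_0+s_0$ rather than $s_0$.
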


\begin{proof}[Proof of Lemma \ref{lemma:cone}]
Starting from Lemma \ref{lemma:basicinequality} multiplied by a factor of $2$, we have
\begin{equation}
\label{eq:lemma3-start}
\frac{1}{T}\left\lVert \boldsymbol{X}(\hat{\boldsymbol{\beta}}-\boldsymbol{\beta}^0)\right\rVert_2^2
+2\lambda \left\lVert  w(\hat{\boldsymbol{\gamma}})\right\rVert_1
\le
\frac{2}{T}\boldsymbol{\varepsilon}^\top\boldsymbol{X}(\hat{\boldsymbol{\beta}}-\boldsymbol{\beta}^0)
+2\lambda \left\lVert  w(\boldsymbol{\gamma}^0)\right\rVert_1 .
\end{equation}
On the event $\mathcal{J}$, the empirical-process bound gives
\begin{equation*}
\frac{1}{T}\left|\boldsymbol{\varepsilon}^\top\boldsymbol{X}(\hat{\boldsymbol{\beta}}-\boldsymbol{\beta}^0)\right|
\le
\lambda_0\left\lVert \hat{\boldsymbol{\beta}}-\boldsymbol{\beta}^0\right\rVert_1.
\end{equation*}
Substituting this bound into \eqref{eq:lemma3-start} and using $\lambda\ge 2\lambda_0$ yields
\begin{equation}
\label{eq:lemma3-basic}
\frac{1}{T}\left\lVert \boldsymbol{X}(\hat{\boldsymbol{\beta}}-\boldsymbol{\beta}^0)\right\rVert_2^2
+2\lambda \left\lVert  w(\hat{\boldsymbol{\gamma}})\right\rVert_1
\le
\lambda \left\lVert \hat{\boldsymbol{\beta}}-\boldsymbol{\beta}^0\right\rVert_1
+2\lambda \left\lVert  w(\boldsymbol{\gamma}^0)\right\rVert_1 .
\end{equation}
Since $w(\boldsymbol{\gamma}^0)_{S_0^c}=\bf 0$, we can write
\begin{equation*}
\left\lVert  w(\hat{\boldsymbol{\gamma}})\right\rVert_1
=
\left\lVert  w(\hat{\boldsymbol{\gamma}})_{S_0}\right\rVert_1
+
\left\lVert  w(\hat{\boldsymbol{\gamma}})_{S_0^c}\right\rVert_1,
\qquad
\left\lVert  w(\boldsymbol{\gamma}^0)\right\rVert_1
=
\left\lVert  w(\boldsymbol{\gamma}^0)_{S_0}\right\rVert_1.
\end{equation*}
We first focus on the penalty term on the left-hand side of \eqref{eq:lemma3-basic}.
Using the triangle inequality in the form $\left\lVert a\right\rVert_1 \ge \left\lVert b\right\rVert_1 - \left\lVert a-b\right\rVert_1$, we have
\begin{align*}
\left\lVert  w(\hat{\boldsymbol{\gamma}})_{S_0}\right\rVert_1
&=
\left\lVert
 w(\boldsymbol{\gamma}^0)_{S_0}
+
\Big( w(\hat{\boldsymbol{\gamma}})_{S_0}- w(\boldsymbol{\gamma}^0)_{S_0}\Big)
\right\rVert_1 \\
&\ge
\left\lVert  w(\boldsymbol{\gamma}^0)_{S_0}\right\rVert_1
-
\left\lVert  w(\hat{\boldsymbol{\gamma}})_{S_0}- w(\boldsymbol{\gamma}^0)_{S_0}\right\rVert_1.
\end{align*}
Substituting this lower bound into the left-hand side of \eqref{eq:lemma3-basic} gives
\begin{align*}
\frac{1}{T}\left\lVert \boldsymbol{X}(\hat{\boldsymbol{\beta}}-\boldsymbol{\beta}^0)\right\rVert_2^2
&\,+2\lambda \left\lVert  w(\hat{\boldsymbol{\gamma}})_{S_0}\right\rVert_1
+2\lambda \left\lVert  w(\hat{\boldsymbol{\gamma}})_{S_0^c}\right\rVert_1 \\
\ge\
\frac{1}{T}\left\lVert \boldsymbol{X}(\hat{\boldsymbol{\beta}}-\boldsymbol{\beta}^0)\right\rVert_2^2
&\,+2\lambda \left\lVert  w(\boldsymbol{\gamma}^0)_{S_0}\right\rVert_1
-2\lambda \left\lVert  w(\hat{\boldsymbol{\gamma}})_{S_0}- w(\boldsymbol{\gamma}^0)_{S_0}\right\rVert_1 +2\lambda \left\lVert  w(\hat{\boldsymbol{\gamma}})_{S_0^c}\right\rVert_1.
\end{align*}
Now we turn to the term $\left\lVert \hat{\boldsymbol{\beta}}-\boldsymbol{\beta}^0\right\rVert_1$ on the right-hand side of \eqref{eq:lemma3-basic}.
We use that
\begin{align*}
\left\lVert \hat{\boldsymbol{\beta}}-\boldsymbol{\beta}^0\right\rVert_1
&\le
\left\lVert \hat{\boldsymbol{\gamma}}-\boldsymbol{\gamma}^0\right\rVert_1
+\left\lVert \boldsymbol{A}\left(\hat{\boldsymbol{\gamma}}-\boldsymbol{\gamma}^0\right)\right\rVert_1 \\
&=
\left\lVert  w(\hat{\boldsymbol{\gamma}})- w(\boldsymbol{\gamma}^0)\right\rVert_1 \\
&=
\left\lVert  w(\hat{\boldsymbol{\gamma}})_{S_0}- w(\boldsymbol{\gamma}^0)_{S_0}\right\rVert_1
+\left\lVert  w(\hat{\boldsymbol{\gamma}})_{S_0^c}\right\rVert_1.
\end{align*}
Therefore, the right-hand side of \eqref{eq:lemma3-basic} is bounded above by
\begin{equation*}
\lambda\left\lVert  w(\hat{\boldsymbol{\gamma}})_{S_0}- w(\boldsymbol{\gamma}^0)_{S_0}\right\rVert_1
+\lambda\left\lVert  w(\hat{\boldsymbol{\gamma}})_{S_0^c}\right\rVert_1
+2\lambda\left\lVert  w(\boldsymbol{\gamma}^0)_{S_0}\right\rVert_1.
\end{equation*}
Combining the preceding two displays with \eqref{eq:lemma3-basic} and rearranging terms yields
\begin{align*}
&\frac{1}{T}\left\lVert \boldsymbol{X}(\hat{\boldsymbol{\beta}}-\boldsymbol{\beta}^0)\right\rVert_2^2
+\lambda \left\lVert  w(\hat{\boldsymbol{\gamma}})_{S_0^c}\right\rVert_1
\le
3\lambda \left\lVert  w(\hat{\boldsymbol{\gamma}})_{S_0}- w(\boldsymbol{\gamma}^0)_{S_0}\right\rVert_1. \qedhere
\end{align*}
\end{proof}

\begin{lemma}[Active Set Bounding]
\label{lemma:cone_implication} 
Under Assumption 2 and for every $\boldsymbol{\gamma}$ on the cone, that is, $ \Big \{\boldsymbol{\gamma} \in \mathbb{R}^{M}: \left \lVert w(\boldsymbol{\gamma})_{S_0^c}\right \rVert_1 \leq
3\left \lVert w(\boldsymbol{\gamma})_{S_0}\right \rVert_1 \Big \}$, we have 
\begin{equation}
    \left \lVert \boldsymbol{\beta} \right \rVert_1 \leq \left \lVert w(\boldsymbol{\gamma}) \right \rVert_1 \leq 4 \left \lVert w(\boldsymbol{\gamma})_{S_0} \right \rVert_1.
\end{equation}
\end{lemma}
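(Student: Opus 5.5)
The plan is to prove the two inequalities separately, using only the structure of $w(\boldsymbol{\gamma}) = \boldsymbol{D}\boldsymbol{\gamma}$ with $\boldsymbol{D} = (\boldsymbol{A}^\top, \boldsymbol{I}_M)^\top$ and the cone condition.

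\emph{First inequality.} Because $\boldsymbol{\beta} = \boldsymbol{A}\boldsymbol{\gamma}$, the vector $w(\boldsymbol{\gamma})$ stacks $\boldsymbol{\beta}$ on top of $\boldsymbol{\gamma}$. The $\ell_1$-norm of a stacked vector is the sum of the $\ell_1$-norms of its blocks, so
\begin{equation*}
\left\lVert w(\boldsymbol{\gamma})\right\rVert_1 = \left\lVert \boldsymbol{A}\boldsymbol{\gamma}\right\rVert_1 + \left\lVert \boldsymbol{\gamma}\right\rVert_1 \ge \left\lVert \boldsymbol{\beta}\right\rVert_1 .
\end{equation*}
This step holds for every $\boldsymbol{\gamma}$ and does not use the cone. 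It is the same fact already used in the proofs of Lemma \ref{lemma:concentration2} and Lemma \ref{lemma:cone}.

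\emph{Second inequality.} I will use the decomposition $\lVert w(\boldsymbol{\gamma})\rVert_1 = \lVert w(\boldsymbol{\gamma})_{S_0}\rVert_1 + \lVert w(\boldsymbol{\gamma})_{S_0^c}\rVert_1$, as in the proof of Lemma \ref{lemma:cone}. Applying the cone condition $\lVert w(\boldsymbol{\gamma})_{S_0^c}\rVert_1 \le 3\lVert w(\boldsymbol{\gamma})_{S_0}\rVert_1$ then gives $\lVert w(\boldsymbol{\gamma})\rVert_1 \le 4\lVert w(\boldsymbol{\gamma})_{S_0}\rVert_1$.

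\emph{Main obstacle.} The difficulty is justifying this decomposition. The pieces $w(\boldsymbol{\gamma})_{S_0} = \boldsymbol{D}_{S_0}\boldsymbol{\gamma}_{S_0}$ and $w(\boldsymbol{\gamma})_{S_0^c} = \boldsymbol{D}_{S_0^c}\boldsymbol{\gamma}_{S_0^c}$ are defined as column restrictions. They sum to $w(\boldsymbol{\gamma})$, but their supports can overlap in the $\boldsymbol{A}$-block. The triangle inequality therefore only gives $\lVert w(\boldsymbol{\gamma})\rVert_1 \le \lVert w(\boldsymbol{\gamma})_{S_0}\rVert_1 + \lVert w(\boldsymbol{\gamma})_{S_0^c}\rVert_1$. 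Fortunately, this is exactly the direction needed for the upper bound. So the plan is to invoke the triangle inequality on $w(\boldsymbol{\gamma}) = \boldsymbol{D}_{S_0}\boldsymbol{\gamma}_{S_0} + \boldsymbol{D}_{S_0^c}\boldsymbol{\gamma}_{S_0^c}$ and then apply the cone condition. This avoids needing equality, and the argument is then straightforward.
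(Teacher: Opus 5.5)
Your proof is correct and follows the paper's route. The first inequality comes from $\lVert w(\boldsymbol{\gamma})\rVert_1 = \lVert \boldsymbol{A}\boldsymbol{\gamma}\rVert_1 + \lVert\boldsymbol{\gamma}\rVert_1$, and the second from splitting $w(\boldsymbol{\gamma})$ into its $S_0$ and $S_0^c$ parts and applying the cone condition. The paper states that split as an equality, whereas your triangle-inequality version is the rigorous one: the pieces $\boldsymbol{D}_{S_0}\boldsymbol{\gamma}_{S_0}$ and $\boldsymbol{D}_{S_0^c}\boldsymbol{\gamma}_{S_0^c}$ can overlap and partially cancel in the $\boldsymbol{A}$-block, and only the upper-bound direction is needed here.
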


\begin{proof}[Proof of Lemma \ref{lemma:cone_implication}]
    One the cone, 
    \begin{equation*}
    \left \lVert w(\boldsymbol{\gamma}) \right \rVert_1 = \left \lVert w(\boldsymbol{\gamma})_{S_0^c} \right \rVert_1 + \left \lVert w(\boldsymbol{\gamma})_{S_0} \right \rVert_1 \leq 4 \left \lVert w(\boldsymbol{\gamma})_{S_0} \right \rVert_1.
\end{equation*}
The initial cone condition is a direct result from Lemma \ref{lemma:cone}.
\end{proof}

\begin{lemma}[Sample Compatibility]
\label{lemma:pop_to_sample_compatibility} 
For every $\boldsymbol{\gamma}$ on the cone, that is, $ \Big \{\boldsymbol{\gamma} \in \mathbb{R}^{M}: \left \lVert w(\boldsymbol{\gamma})_{S_0^c}\right \rVert_1 \leq
3\left \lVert w(\boldsymbol{\gamma})_{S_0}\right \rVert_1 \Big \}$, and under Assumptions 1, 2 and 3, we have the following sample compatibility condition on the set $\mathcal{C}(S_0) := \Bigl\{ \left \lVert \boldsymbol{\hat{\Sigma}} - \boldsymbol{\Sigma} \right \rVert_{\infty} \leq C/s_0 \Bigr \}$,
\begin{equation*} s_0\boldsymbol{\beta}^\top\boldsymbol{\hat{\Sigma}}\boldsymbol{\beta} \geq \rho_0^2 \left \lVert w(\boldsymbol{\gamma})_{S_0} \right \rVert_1^2,
\end{equation*}
where $\rho_0 > 0$.
\end{lemma}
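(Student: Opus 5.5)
The plan is the standard perturbation argument for transferring compatibility from $\boldsymbol\Sigma$ to $\hat{\boldsymbol\Sigma}$ (cf.\ Corollary 6.8 in \cite{buhlmann2011statistics}). Fix $\boldsymbol\gamma$ on the cone and set $\boldsymbol\beta=\boldsymbol A\boldsymbol\gamma$. First decompose
\begin{equation*}
\boldsymbol\beta^\top\hat{\boldsymbol\Sigma}\boldsymbol\beta
=\boldsymbol\beta^\top\boldsymbol\Sigma\boldsymbol\beta+\boldsymbol\beta^\top(\hat{\boldsymbol\Sigma}-\boldsymbol\Sigma)\boldsymbol\beta
\ge \boldsymbol\beta^\top\boldsymbol\Sigma\boldsymbol\beta-\bigl|\boldsymbol\beta^\top(\hat{\boldsymbol\Sigma}-\boldsymbol\Sigma)\boldsymbol\beta\bigr|.
\end{equation*}
The first term is handled by Assumption \ref{assumption:compatibility}, which gives $s_0\boldsymbol\beta^\top\boldsymbol\Sigma\boldsymbol\beta\ge\rho_\Sigma^2\|w(\boldsymbol\gamma)_{S_0}\|_1^2$.

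For the perturbation term, apply the elementwise H\"older bound $|\boldsymbol\beta^\top\boldsymbol B\boldsymbol\beta|\le\|\boldsymbol B\|_\infty\|\boldsymbol\beta\|_1^2$ with $\boldsymbol B=\hat{\boldsymbol\Sigma}-\boldsymbol\Sigma$. Then invoke Lemma \ref{lemma:cone_implication}: on the cone, $\|\boldsymbol\beta\|_1\le\|w(\boldsymbol\gamma)\|_1\le 4\|w(\boldsymbol\gamma)_{S_0}\|_1$. On $\mathcal C(S_0)$ this yields
\begin{equation*}
s_0\bigl|\boldsymbol\beta^\top(\hat{\boldsymbol\Sigma}-\boldsymbol\Sigma)\boldsymbol\beta\bigr|\le s_0\cdot\frac{C}{s_0}\cdot 16\,\|w(\boldsymbol\gamma)_{S_0}\|_1^2=16C\,\|w(\boldsymbol\gamma)_{S_0}\|_1^2 .
\end{equation*}
Combining the two bounds gives
\begin{equation*}
s_0\boldsymbol\beta^\top\hat{\boldsymbol\Sigma}\boldsymbol\beta\ge(\rho_\Sigma^2-16C)\|w(\boldsymbol\gamma)_{S_0}\|_1^2 .
\end{equation*}

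The main obstacle is making the resulting constant strictly positive, since the generic $C$ in $\mathcal C(S_0)$ is otherwise arbitrary. I would fix the constant in the definition of $\mathcal C(S_0)$ to be $C=\rho_\Sigma^2/32$. This gives $\rho_0^2=\rho_\Sigma^2/2>0$, so $\rho_0=\rho_\Sigma/\sqrt2$, which completes the proof. This choice is what the statement implicitly requires, and it should be made explicit, because it ties $\mathcal C(S_0)$ to $\rho_\Sigma$.

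Assumption \ref{assumption:ned} is not needed for this deterministic step. It only matters later, when bounding $\mathbb P(\mathcal C(S_0))$ via a concentration inequality for $\|\hat{\boldsymbol\Sigma}-\boldsymbol\Sigma\|_\infty$ under NED, as in \cite{Adamek2023}. That step would need $s_0$ to grow slowly enough relative to $T$ and $N$, but it lies outside the present lemma.
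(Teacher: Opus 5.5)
Your proposal is correct and follows the paper's proof essentially step for step: the same decomposition of $\boldsymbol\beta^\top\hat{\boldsymbol\Sigma}\boldsymbol\beta$, the elementwise H\"older bound, Lemma~\ref{lemma:cone_implication} to obtain the factor $16$, and the requirement $16C\le\rho_\Sigma^2/2$, which gives $\rho_0^2=\rho_\Sigma^2/2$. The only difference is that the paper's proof also appends the bound $\mathbb{P}(\mathcal{C}(S_0))\ge 1-\eta_T$ from Lemma A.3 of \cite{Adamek2023}, which is where Assumption~\ref{assumption:ned} and the growth condition on $s_0$ enter; you correctly identify this as lying outside the deterministic statement.
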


\begin{proof}[Proof of Lemma \ref{lemma:pop_to_sample_compatibility}]
    Let $\boldsymbol{\gamma} \in \mathbb{R}^{M}$ satisfy $\left \lVert w(\boldsymbol{\gamma})_{S_0^c}\right \rVert_1 \leq
3\left \lVert w(\boldsymbol{\gamma})_{S_0}\right \rVert_1$. Observe that, on the set $\mathcal{C}(S_0) := \Bigl\{ \left \lVert \boldsymbol{\hat{\Sigma}} - \boldsymbol{\Sigma} \right \rVert_{\infty} \leq C/s_0 \Bigr \}$ and using Lemma \ref{lemma:cone_implication} for the third inequality,
\begin{align*}
    \Big |\boldsymbol{\beta}^\top(\boldsymbol{\hat{\Sigma}} - \boldsymbol{\Sigma})\boldsymbol{\beta} \Big | &\leq \left \lVert \boldsymbol{\hat{\Sigma}} - \boldsymbol{\Sigma} \right \rVert_{\infty} \left \lVert \boldsymbol{\beta} \right \rVert_1^2 \leq \frac{C}{s_0} \left \lVert \boldsymbol{\beta} \right \rVert_1^2 \leq \frac{C}{s_0} \left \lVert \boldsymbol{w(\boldsymbol\gamma)} \right \rVert_1^2\\
    & \leq \frac{16C}{s_0} \left \lVert w(\boldsymbol{\gamma})_{S_0} \right \rVert_1^2.
\end{align*}
So, on $\mathcal{C}(S_0)$, we have $\boldsymbol{\beta}^\top(\boldsymbol{\hat{\Sigma}} - \boldsymbol{\Sigma})\boldsymbol{\beta} \geq - \frac{16C}{s_0} \left \lVert w(\boldsymbol{\gamma})_{S_0} \right \rVert_1^2$.
Then, under Assumption 3,
\begin{align*}
s_0\boldsymbol{\beta}^\top\boldsymbol{\hat{\Sigma}}\boldsymbol{\beta} &= s_0 \boldsymbol{\beta}^\top\boldsymbol{\Sigma}\boldsymbol{\beta} + s_0\boldsymbol{\beta}^\top(\boldsymbol{\hat{\Sigma}} - \boldsymbol{\Sigma})\boldsymbol{\beta} \\
& \geq \rho_{\boldsymbol\Sigma}^2 \left \lVert w(\boldsymbol{\gamma})_{S_0}\right \rVert_1^2 - 16C \left \lVert w(\boldsymbol{\gamma})_{S_0} \right \rVert_1^2 \\
& = \Big (\rho_{\boldsymbol\Sigma}^2 - 16C \Big ) \left \lVert w(\boldsymbol{\gamma})_{S_0} \right \rVert_1^2
\end{align*}
with $\frac{\rho_{\boldsymbol\Sigma}^2}{2} \geq 16C$, so $\Big (\rho_{\boldsymbol\Sigma}^2 - 16C \Big ) \geq \frac{\rho_{\boldsymbol\Sigma}^2}{2}$.
Taking $\rho_0^2 = \frac{\rho_{\boldsymbol\Sigma}^2}{2}$ leads to the sample compatibility condition 
\begin{equation}
s_0\boldsymbol{\beta}^\top\boldsymbol{\hat{\Sigma}}\boldsymbol{\beta} \geq \rho_0^2 \left \lVert w(\boldsymbol{\gamma})_{S_0} \right \rVert_1^2. \label{eq:sample_compatibility}
\end{equation}
Using Lemma A.3. in \citet{Adamek2023} for the exact sparsity case ($r=0$), we have that for a sequence $\eta_T \to 0$ with $\eta_T \leq \frac{N^2}{e}$, and such that $s_0 \leq C \eta_T^{\frac{d+\tilde{m}-1}{d\tilde{m}+\tilde{m}-1}} \Big [ \frac{\sqrt{T}}{N^{\frac{2}{d}+\frac{2}{\tilde{m}-1}}} \Big ]^{\frac{1}{\frac{1}{d}+\frac{\tilde{m}}{\tilde{m}-1}}}$, then $\mathbb{P}(\mathcal{C}(S_0)) \geq 1 - \eta_T \to 1 \text{ as } N,T \to \infty$. Hence, the bound in \eqref{eq:sample_compatibility} holds with the same probability.
\end{proof}

\subsection{Proofs of the Main Results}
\label{appendix:proofs}

\begin{proof}[Proof of Theorem 1]
On the events $\mathcal{J}$ and $\mathcal{C}(S_0)$, and for $\lambda \ge 2\lambda_0$, we have
\begin{align*}
\frac{2}{2T}\left\lVert \boldsymbol{X}(\hat{\boldsymbol{\beta}}-\boldsymbol{\beta}^0)\right\rVert_2^2
+\lambda\left\lVert w(\hat{\boldsymbol{\gamma}}-\boldsymbol{\gamma}^0)\right\rVert_1
&=
\frac{2}{2T}\left\lVert \boldsymbol{X}(\hat{\boldsymbol{\beta}}-\boldsymbol{\beta}^0)\right\rVert_2^2
+\lambda\left\lVert \big(w(\hat{\boldsymbol{\gamma}})-w(\boldsymbol{\gamma}^0)\big)_{S_0}\right\rVert_1 \\
&+\lambda\left\lVert w(\hat{\boldsymbol{\gamma}})_{S_0^c}\right\rVert_1 \\
&\le
3\lambda\left\lVert \big(w(\hat{\boldsymbol{\gamma}})-w(\boldsymbol{\gamma}^0)\big)_{S_0}\right\rVert_1
+\lambda\left\lVert \big(w(\hat{\boldsymbol{\gamma}})-w(\boldsymbol{\gamma}^0)\big)_{S_0}\right\rVert_1 \\
&\le
4\lambda \frac{\sqrt{s_0}}{\rho_0}\,
\frac{\left\lVert \boldsymbol{X}(\hat{\boldsymbol{\beta}}-\boldsymbol{\beta}^0)\right\rVert_2}{\sqrt{T}} \\
&\le
\frac{\left\lVert \boldsymbol{X}(\hat{\boldsymbol{\beta}}-\boldsymbol{\beta}^0)\right\rVert_2^2}{2T}
+
\frac{8\lambda^2 s_0}{\rho_0^2}.
\end{align*}
The first inequality uses Lemma \ref{lemma:cone}, the second inequality uses Lemma
\ref{lemma:pop_to_sample_compatibility}, and the last inequality uses the identity $4uv \le u^2 + 4v^2$.
Subtracting $\frac{1}{2T}\left\lVert \boldsymbol{X}(\hat{\boldsymbol{\beta}}-\boldsymbol{\beta}^0)\right\rVert_2^2$ from both sides yields
\begin{equation*}
\frac{1}{2T}\left\lVert \boldsymbol{X}(\hat{\boldsymbol{\beta}}-\boldsymbol{\beta}^0)\right\rVert_2^2
+\lambda\left\lVert w(\hat{\boldsymbol{\gamma}}-\boldsymbol{\gamma}^0)\right\rVert_1
\le
\frac{8\lambda^2 s_0}{\rho_0^2}.
\end{equation*}

From Lemma \ref{lemma:concentration}, we know that  $\mathbb{P}(\mathcal{J}) \geq 1 - C(\ln{\ln{T}})^{-1}$ for $\lambda := 2 C\,\frac{N^{1/\tilde{m}}(\ln{\ln{T}})^{1/\tilde{m}}}{\sqrt{T}}$. 
From Lemma \ref{lemma:pop_to_sample_compatibility}, we have that $\mathbb{P}(\mathcal{C}(S_0)) \geq 1 - \eta_T \to 1$ for $\eta_T \leq \frac{N^2}{e}$ and if $s_0 \leq C \eta_T^{\frac{d+\tilde{m}-1}{d\tilde{m}+\tilde{m}-1}} \Big [ \frac{\sqrt{T}}{N^{\frac{2}{d}+\frac{2}{\tilde{m}-1}}} \Big ]^{\frac{1}{\frac{1}{d}+\frac{\tilde{m}}{\tilde{m}-1}}}$ as $N,T \to \infty$. Choosing $\eta_T = (\ln{\ln{T}})^{-1}$ satisfies those conditions. Hence, $\mathbb{P}(\mathcal{J} \cap \mathcal{C}(S_0)) = 1 - (1 - \mathbb{P}(\mathcal{J})) - (1 - \mathbb{P}(\mathcal{C}(S_0))) \geq 1 - C_1(\ln{\ln{T}})^{-1} - C_2(\ln{\ln{T}})^{-1} = 1 - C(\ln{\ln{T}})^{-1}$.
\end{proof}

\begin{proof}[Proof of Corollary 1]
    It follows from Theorem 1 that if $\frac{1}{2T}\left\lVert \boldsymbol{X}(\hat{\boldsymbol{\beta}}-\boldsymbol{\beta}^0)\right\rVert_2^2
+\lambda\left\lVert w(\hat{\boldsymbol{\gamma}}-\boldsymbol{\gamma}^0)\right\rVert_1
\le
\frac{8\lambda^2 s_0}{\rho_0^2}$, then each component on the left-hand side of the inequality is smaller than $8\lambda^2 s_0/\rho_0^2$, with at least the same probability.
\end{proof}


\section{Additional Simulation Results} \label{app:sims}

\begin{figure}[H]
  \centering

  \begin{subfigure}[t]{0.45\textwidth}
    \centering
    \includegraphics[width=\textwidth]{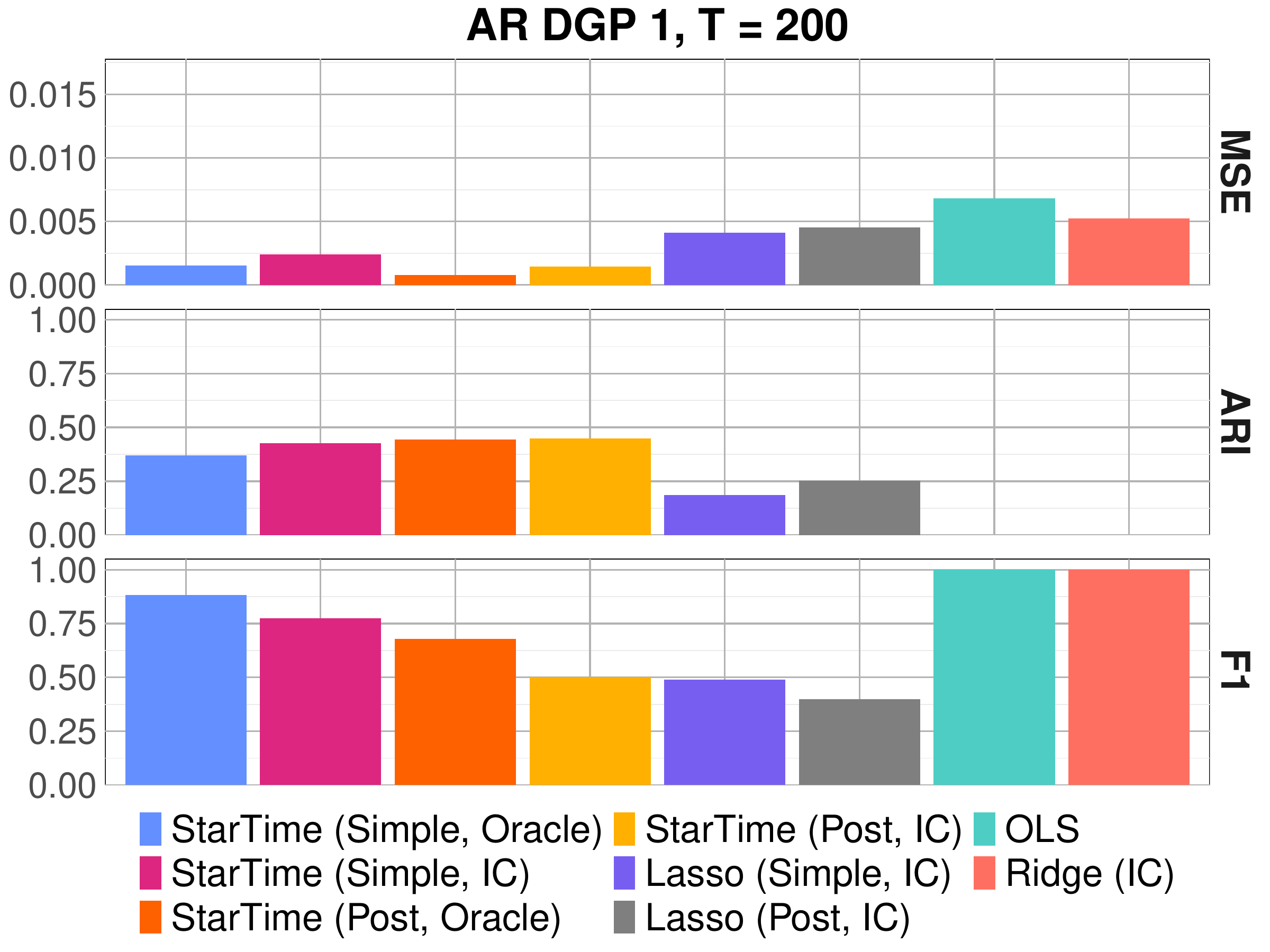}
    \caption{DGP 1, $T=200$}
    \label{fig:ARdgp1_n200}
  \end{subfigure}
  \begin{subfigure}[t]{0.45\textwidth}
    \centering
    \includegraphics[width=\textwidth]{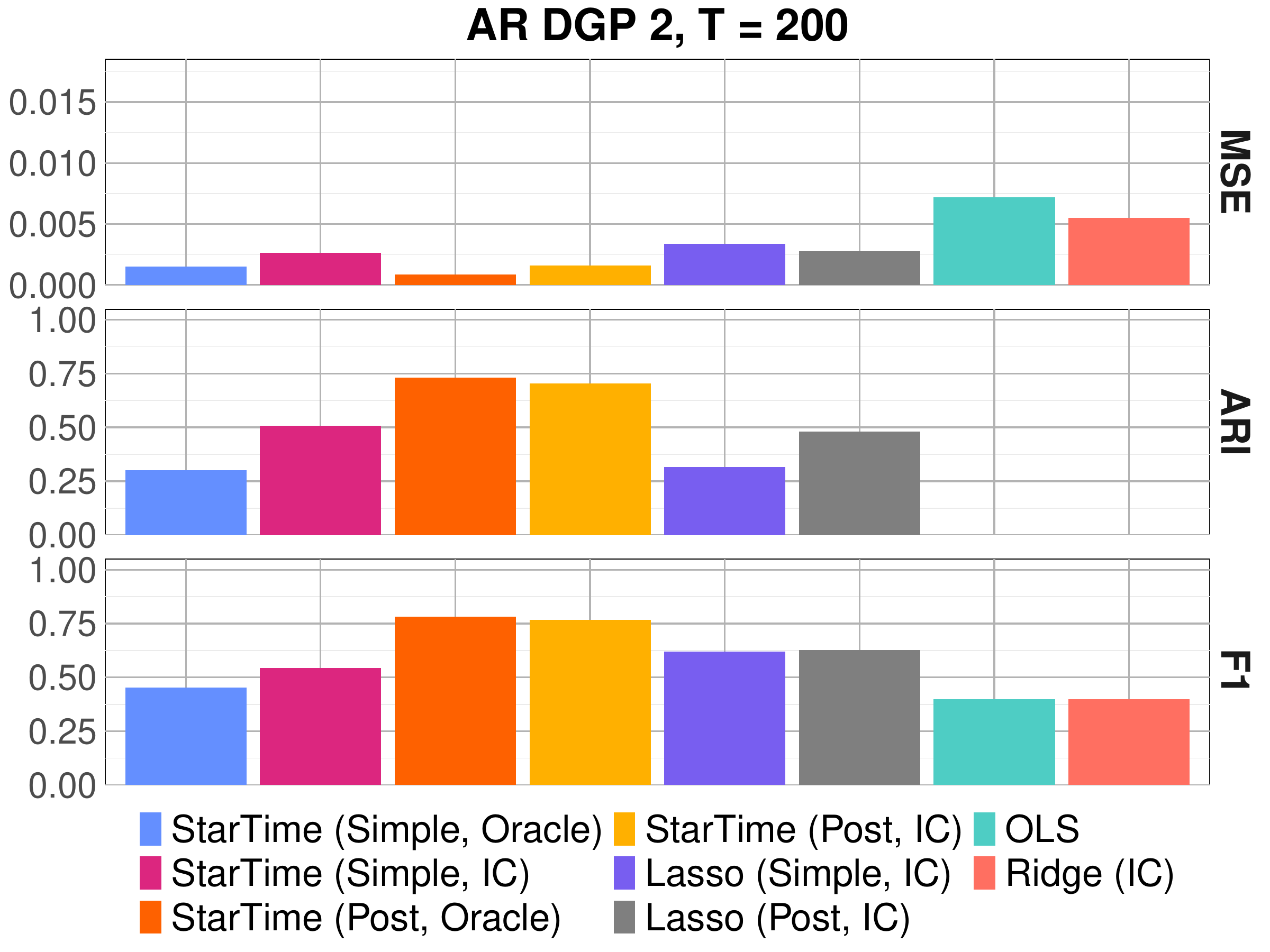}
    \caption{DGP 2, $T=200$}
    \label{fig:ARdgp2_n200}
  \end{subfigure}

  \begin{subfigure}[t]{0.45\textwidth}
    \centering
    \includegraphics[width=\textwidth]{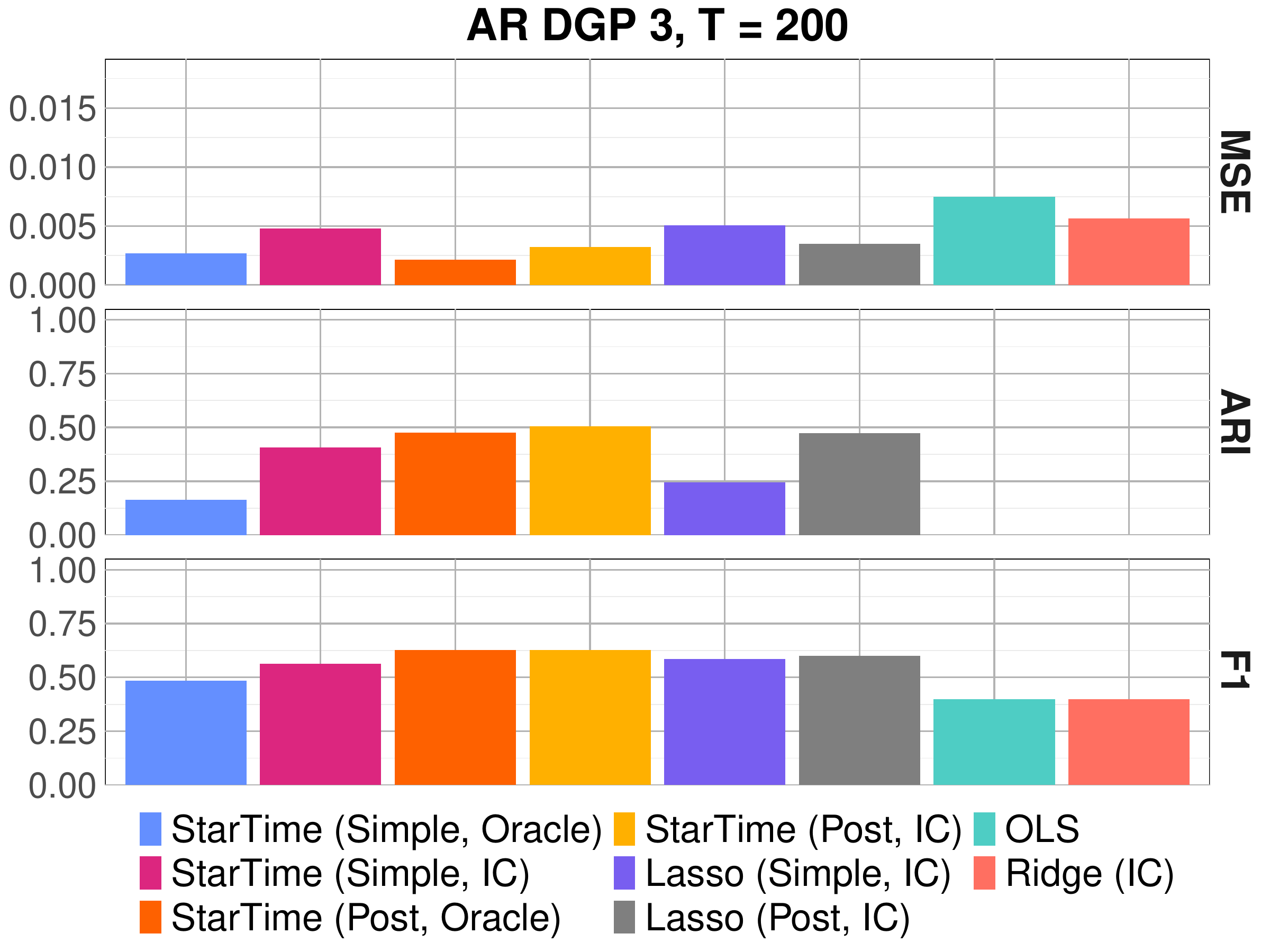}
    \caption{DGP 3, $T=200$}
    \label{fig:ARdgp3_n200}
  \end{subfigure}
  \caption{Performance metrics of the estimators across the three AR-based DGPs. 
  }
  \label{fig:simulation_results_AR_T200}
\end{figure}

  \begin{figure}
  \centering
  \begin{subfigure}[t]{0.45\textwidth}
    \centering
    \includegraphics[width=\textwidth]{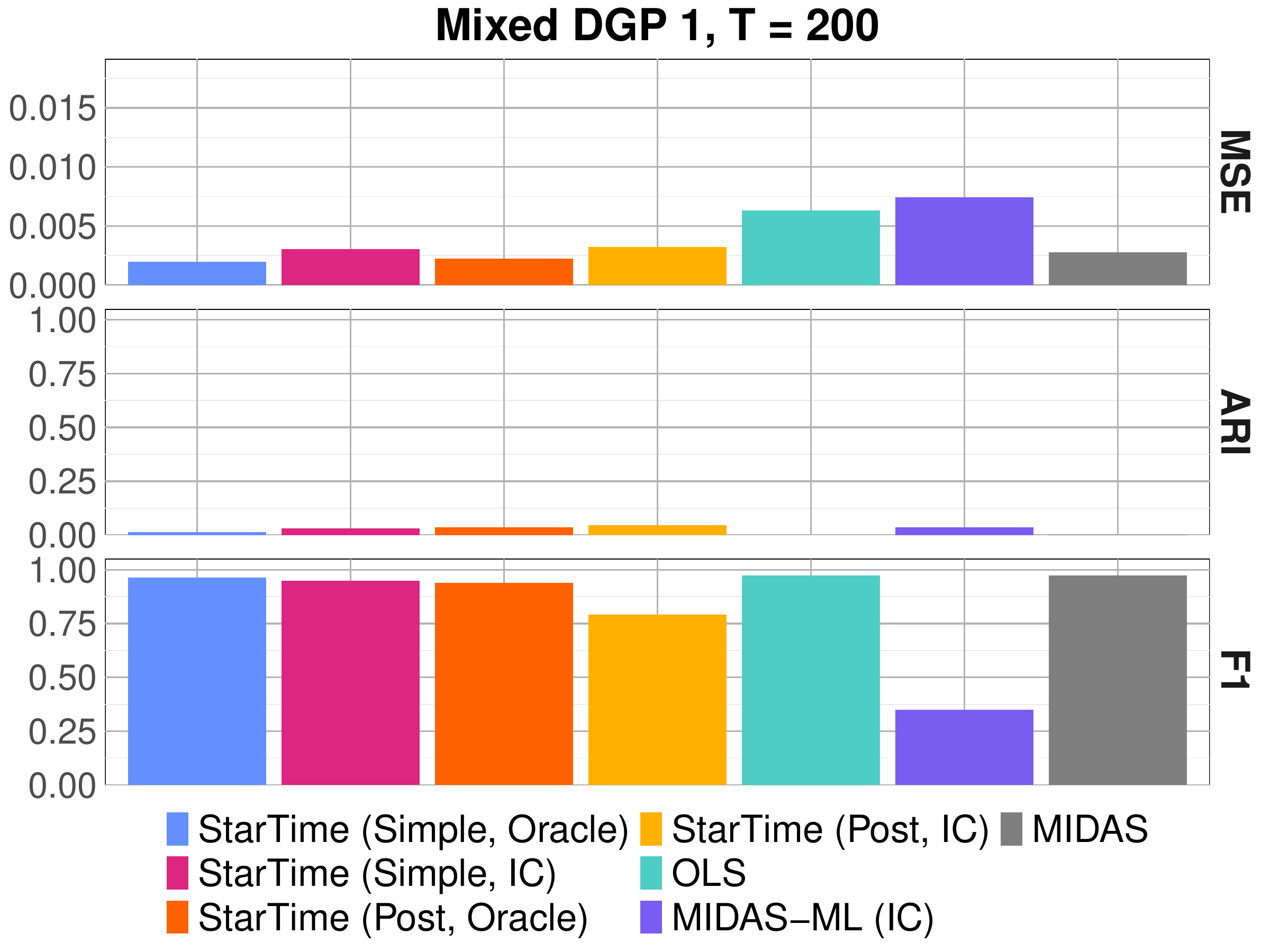}
    \caption{DGP 1, $T=200$}
    \label{fig:Mixeddgp1_n200}
  \end{subfigure}

  \begin{subfigure}[t]{0.45\textwidth}
    \centering
    \includegraphics[width=\textwidth]{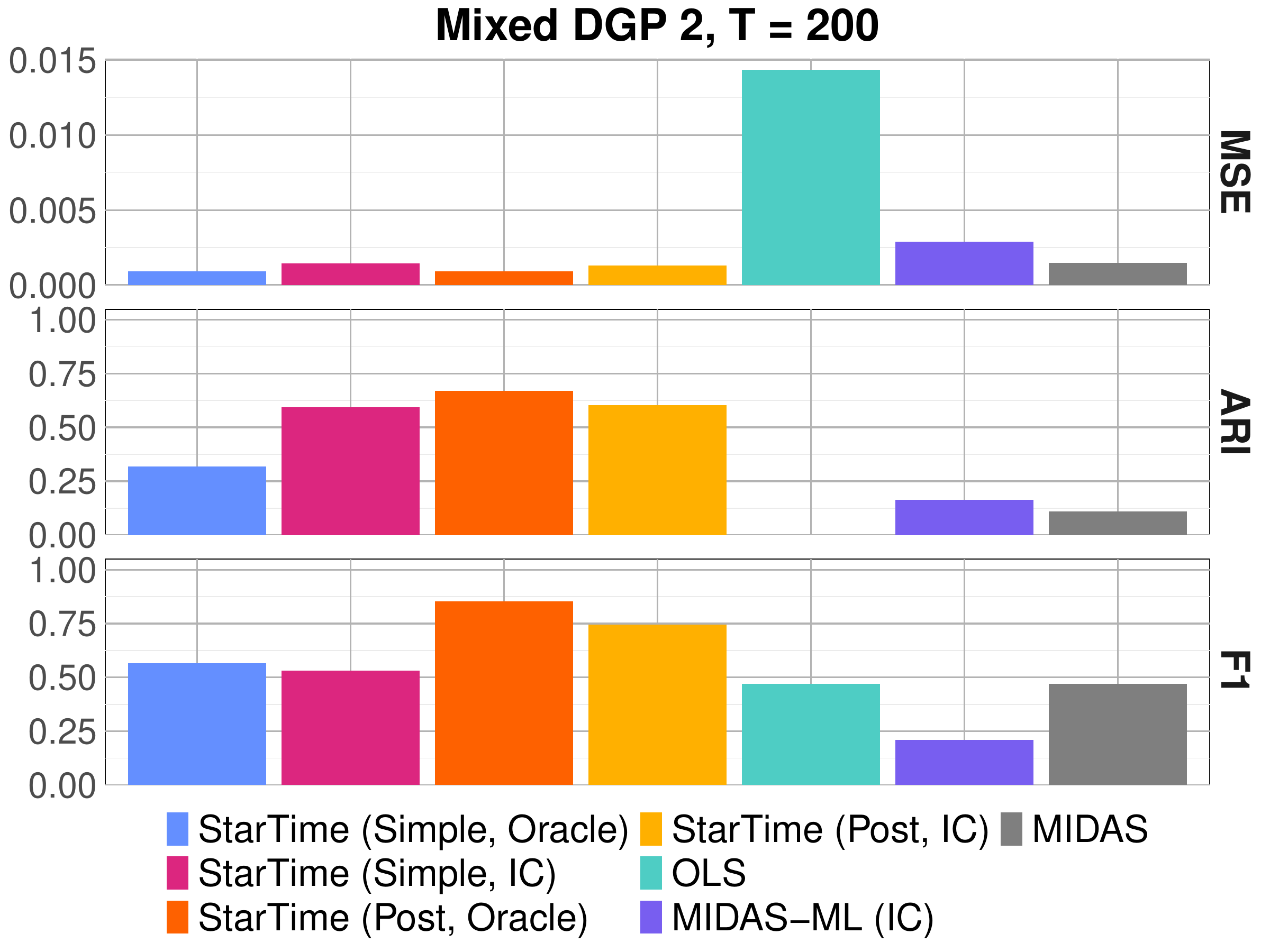}
    \caption{DGP 2, $T=200$}
    \label{fig:Mixeddgp2_n200}
  \end{subfigure} 

  \begin{subfigure}[t]{0.45\textwidth}
    \centering
    \includegraphics[width=\textwidth]{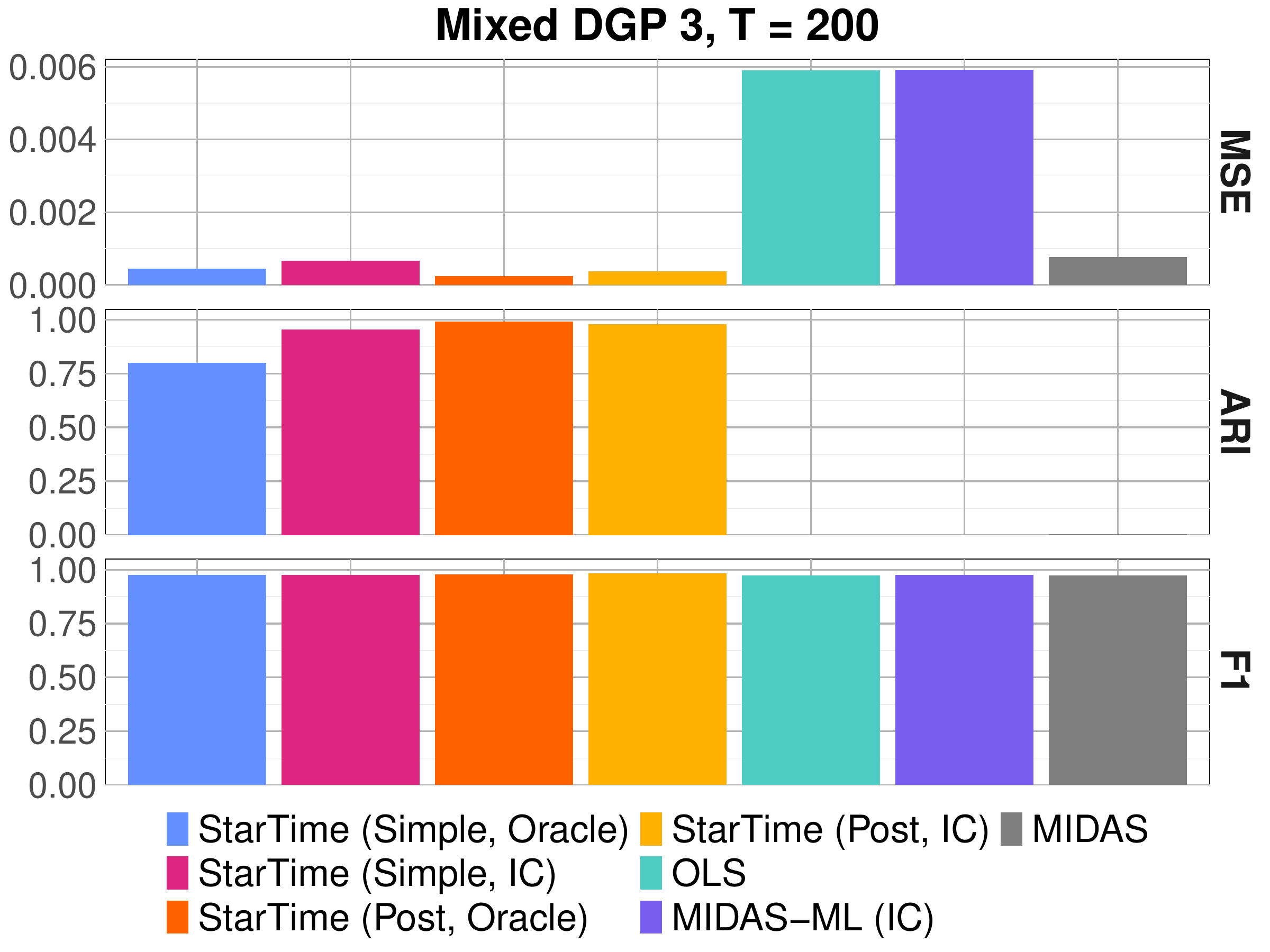}
    \caption{DGP 3, $T=200$}
    \label{fig:Mixeddgp3_n200}
  \end{subfigure}

  \caption{Performance metrics of the estimators across the three mixed-frequency 
  DGPs. 
  }
  \label{fig:simulation_results_Mixed_T200}
\end{figure}

\clearpage
\section{Variables Used in the Financial Application}
\label{appendix:financialvariables}

\begin{table}[h]
\centering
\caption{Overview of the 30 Major U.S. Financial Assets}
\begin{threeparttable}
\begin{tabular}{ll}
\toprule
Ticker & Company Name \\
\midrule
AAPL & Apple Inc. \\
AXP  & American Express Company \\
BA   & The Boeing Company \\
CAT  & Caterpillar Inc. \\
CSCO & Cisco Systems, Inc. \\
CVX  & Chevron Corporation \\
DD   & DuPont de Nemours, Inc. \\
DIS  & The Walt Disney Company \\
GE   & General Electric Company \\
GS   & The Goldman Sachs Group, Inc. \\
HD   & The Home Depot, Inc. \\
IBM  & International Business Machines Corporation \\
INTC & Intel Corporation \\
JNJ  & Johnson \& Johnson \\
JPM  & JPMorgan Chase \& Co. \\
KO   & The Coca-Cola Company \\
MCD  & McDonald's Corporation \\
MMM  & 3M Company \\
MRK  & Merck \& Co., Inc. \\
MSFT & Microsoft Corporation \\
NKE  & NIKE, Inc. \\
PFE  & Pfizer Inc. \\
PG   & The Procter \& Gamble Company \\
TRV  & The Travelers Companies, Inc. \\
UNH  & UnitedHealth Group Incorporated \\
UTX  & United Technologies Corporation \\
V    & Visa Inc. \\
VZ   & Verizon Communications Inc. \\
WMT  & Walmart Inc. \\
XOM  & Exxon Mobil Corporation \\
\bottomrule
\end{tabular}
\end{threeparttable}
\label{tab:empirical_stocks}
\end{table}

\clearpage

\section{Additional Results for the Financial Application} 
\label{appendix:AppendixFin}

\begin{figure}[!h]
  \centering
  \begin{subfigure}[t]{0.75\textwidth}
    \centering
    \includegraphics[width=0.9\textwidth]{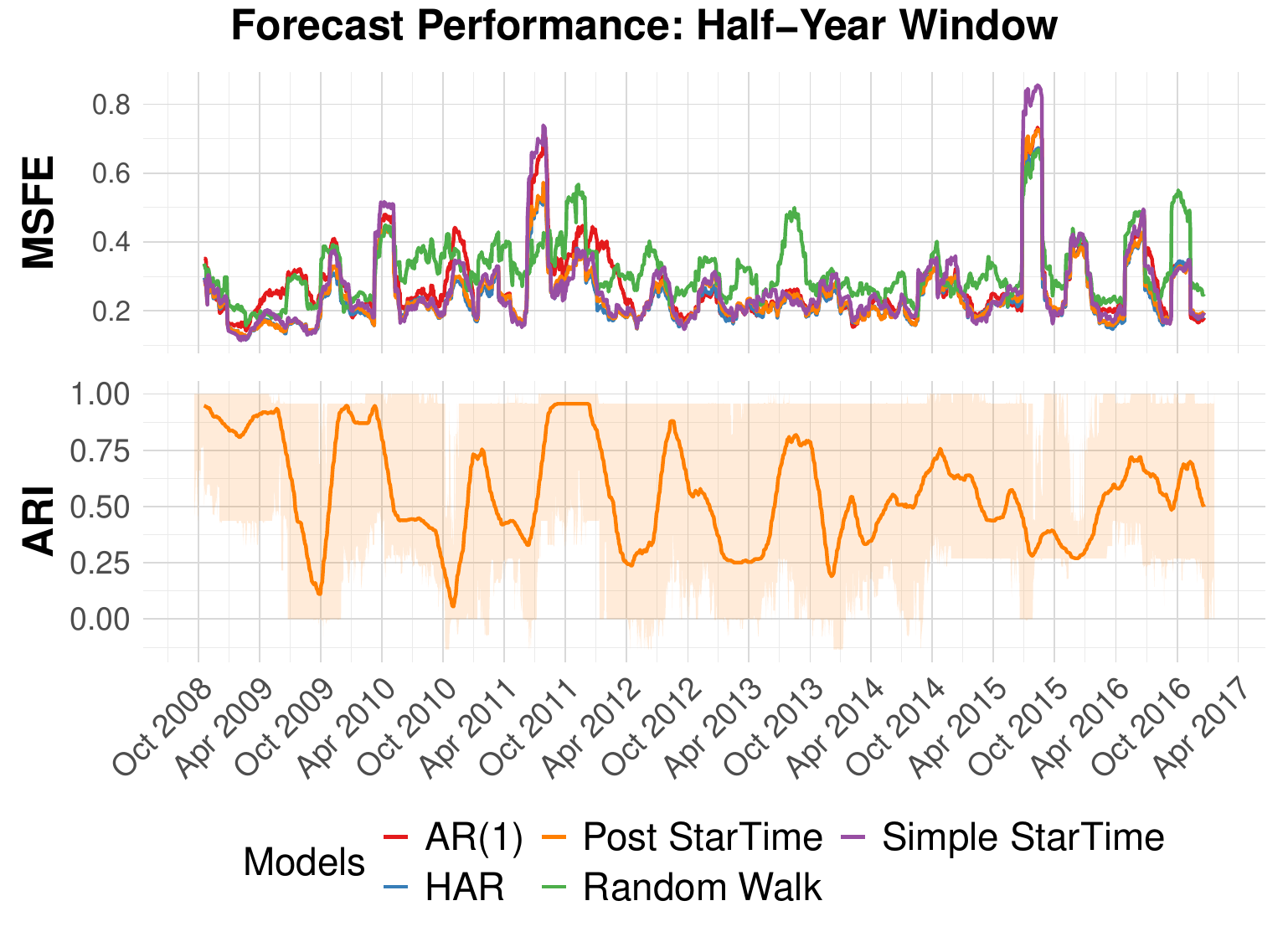}
    \caption{125 Day-window,  $h=1$, 20 lags}
    \label{fig:financeapplication_125}
  \end{subfigure}

  \caption{Comparative forecast performance for $125$-day window. Each panel displays the smoothed daily median MSFE (top) and smoothed daily median ARI (bottom). Shaded regions represent the 10\textsuperscript{th} and 90\textsuperscript{th} percentiles.} 
  \label{fig:financeapplication_T125}
\end{figure}

\begin{figure}[!h]
  \centering

  \begin{subfigure}[t]{0.45\textwidth}
    \centering
    \includegraphics[width=0.8\textwidth]{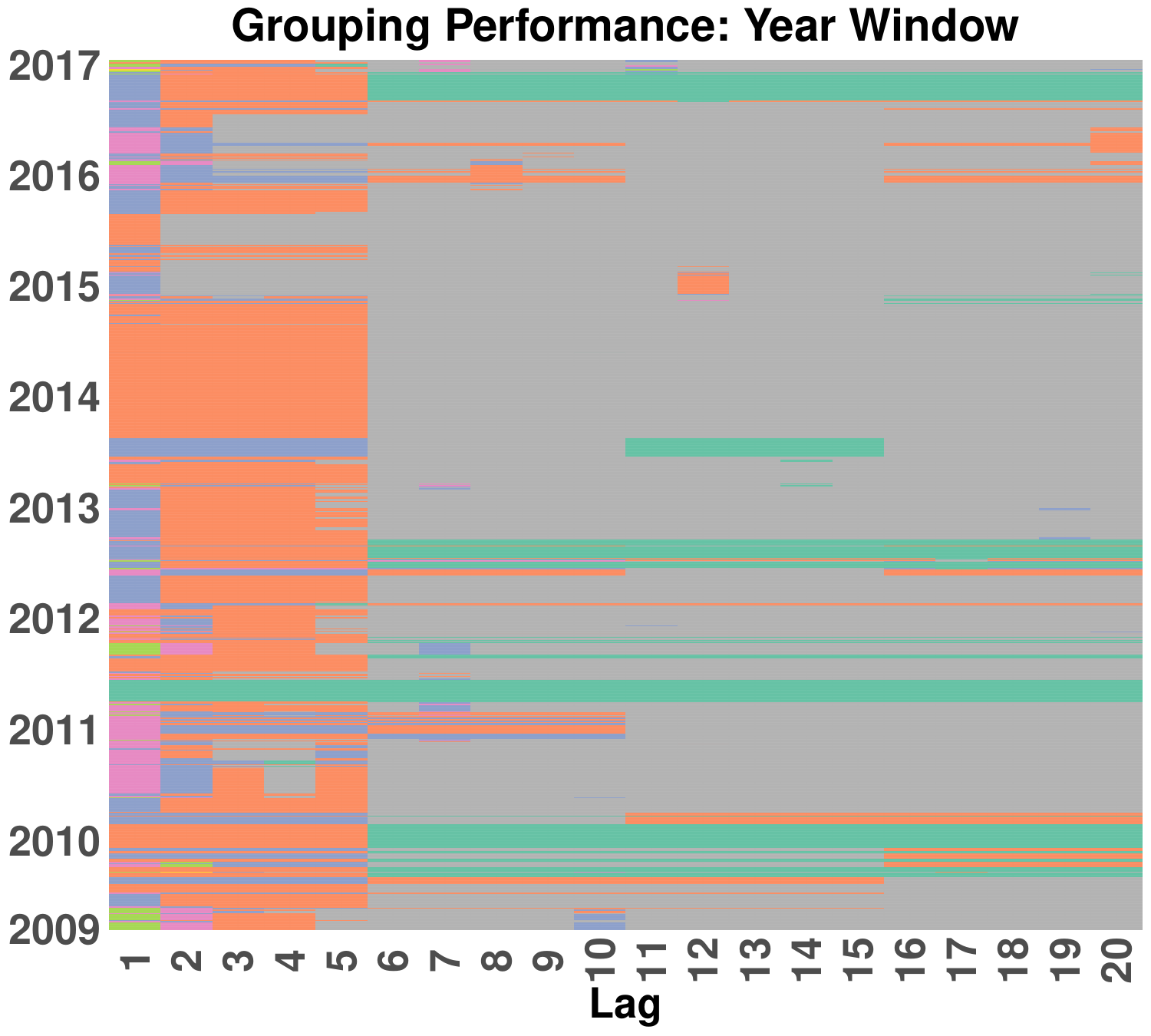}
    \caption{Nike Inc.} 
    \label{fig:finance_matrix_250_nke}
  \end{subfigure}\hfill
  \begin{subfigure}[t]{0.45\textwidth}
    \centering
    \includegraphics[width=0.8\textwidth]{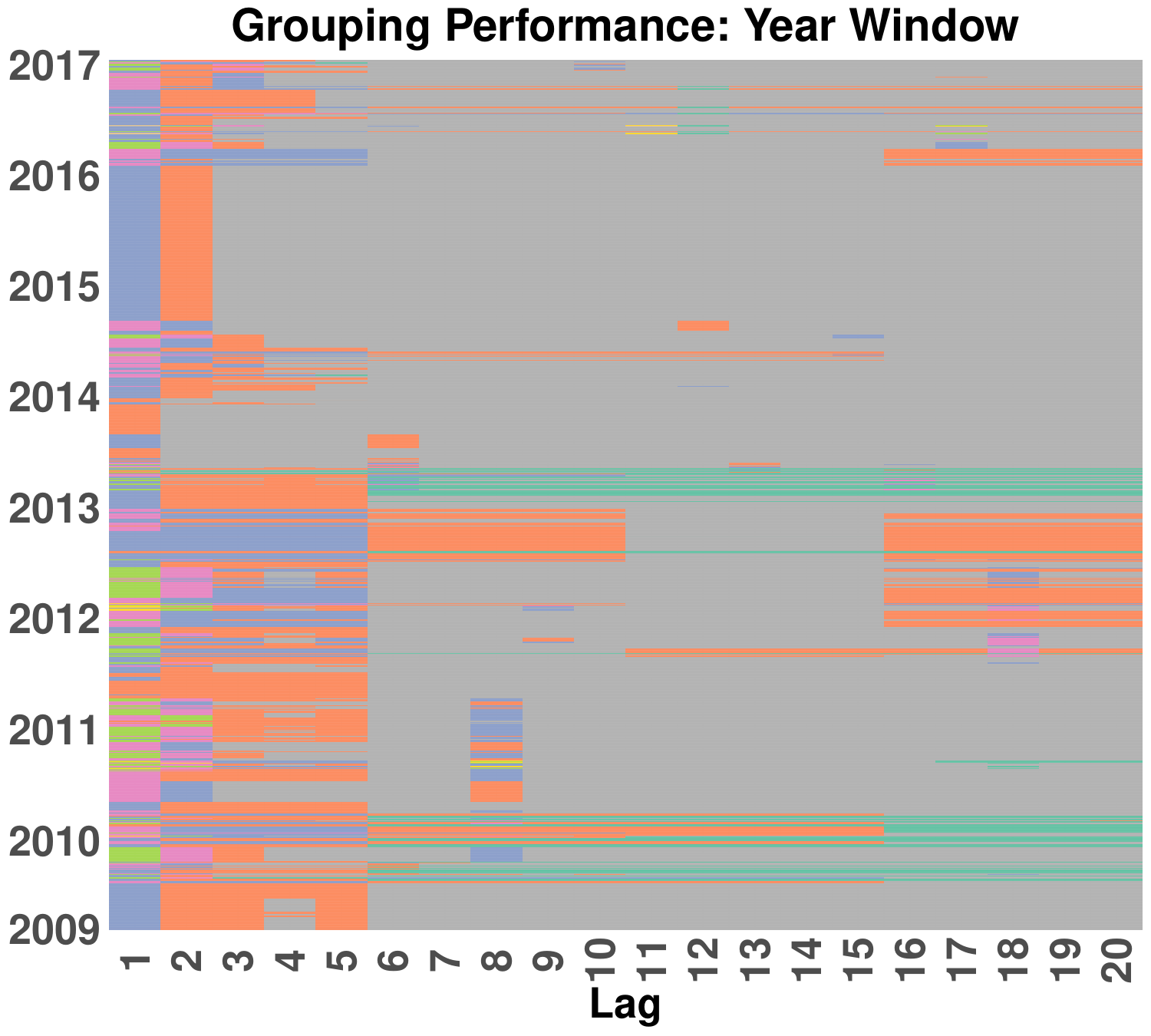}
    \caption{JPMorgan Chase \& Co.} 
    \label{fig:finance_matrix_250_jpm}
  \end{subfigure}

  \caption{Coefficient heatmaps for Post StarTime ($h=1$, 20 lags) applied to Nike Inc. (NKE) and JPMorgan Chase \& Co (JPM), with 250-day window size. 
  Each row indexes an estimation date, columns correspond to daily lags $j=1,\ldots,20$. Cell colors encode the aggregation structure selected by Post StarTime: Lags with identical colors are fused into a single coefficient group, distinct colors within a row reflect heterogeneous lagged effects. Coefficients set to zero are indicated in gray.} 
  \label{fig:financeapplication_matrix_250}
\end{figure}

\begin{table}[!h]
\centering
\caption{\label{tab:mcs_mse} Model Confidence Set Inclusion Rates (MSFE)}
\centering
\begin{threeparttable}
\footnotesize
\setlength{\tabcolsep}{3pt}
\begin{tabular}[t]{cccccccc}
\toprule
\multicolumn{3}{c}{Configuration} & \multicolumn{5}{c}{Models} \\
\cmidrule(l{3pt}r{3pt}){1-3} \cmidrule(l{3pt}r{3pt}){4-8}
Lags & Window & Horizon & Random Walk & AR1 & HAR & Post StarTime & Simple StarTime\\
\midrule
20 & 125 & 1 & 0.000 & 0.000 & 1.000 & 0.000 & 0.000\\
20 & 125 & 5 & 0.000 & 0.633 & 1.000 & 0.700 & 0.767\\
20 & 125 & 20 & 0.000 & 1.000 & 0.667 & 0.567 & 0.867\\
20 & 250 & 1 & 0.000 & 0.000 & 1.000 & 0.067 & 0.000\\
20 & 250 & 5 & 0.000 & 0.033 & 1.000 & 0.433 & 0.433\\
20 & 250 & 20 & 0.000 & 0.967 & 0.967 & 0.900 & 0.967\\
20 & 1000 & 1 & 0.000 & 0.000 & 1.000 & 0.567 & 0.433\\
20 & 1000 & 5 & 0.000 & 0.000 & 1.000 & 0.633 & 0.867\\
20 & 1000 & 20 & 0.000 & 0.333 & 1.000 & 0.967 & 1.000\\
\addlinespace
40 & 125 & 1 & 0.000 & 0.000 & 1.000 & 0.000 & 0.000\\
40 & 125 & 5 & 0.000 & 0.400 & 1.000 & 0.033 & 0.367\\
40 & 125 & 20 & 0.000 & 1.000 & 0.800 & 0.067 & 0.700\\
40 & 250 & 1 & 0.000 & 0.000 & 1.000 & 0.067 & 0.000\\
40 & 250 & 5 & 0.000 & 0.033 & 1.000 & 0.133 & 0.067\\
40 & 250 & 20 & 0.000 & 0.900 & 1.000 & 0.533 & 0.767\\
40 & 1000 & 1 & 0.000 & 0.000 & 1.000 & 0.567 & 0.267\\
40 & 1000 & 5 & 0.000 & 0.000 & 1.000 & 0.533 & 0.867\\
40 & 1000 & 20 & 0.000 & 0.233 & 1.000 & 0.600 & 0.633\\
\bottomrule
\end{tabular}
\begin{tablenotes}
\footnotesize
\item \textit{Note:} Values represent the proportion of stocks included in the 95\% MCS.
\end{tablenotes}
\end{threeparttable}
\end{table}

\begin{table}[!h]
\centering
\caption{\label{tab:dm_mse}Diebold-Mariano Win / Loss Rates: Post StarTime vs Benchmarks (MSFE)}
\centering
\begin{threeparttable}
\footnotesize
\setlength{\tabcolsep}{3pt}
\begin{tabular}[t]{ccccccc}
\toprule
\multicolumn{3}{c}{Configuration} & \multicolumn{4}{c}{Models} \\
\cmidrule(l{3pt}r{3pt}){1-3} \cmidrule(l{3pt}r{3pt}){4-7}
Lags & Window & Horizon & Random Walk & AR1 & HAR & Simple StarTime\\
\midrule
20 & 125 & 1 & 1.000 / 0.000 & 0.933 / 0.000 & 0.000 / 1.000 & 0.900 / 0.000\\
20 & 125 & 5 & 1.000 / 0.000 & 0.100 / 0.000 & 0.000 / 0.333 & 0.033 / 0.033\\
20 & 125 & 20 & 0.967 / 0.000 & 0.000 / 0.367 & 0.000 / 0.033 & 0.000 / 0.367\\
20 & 250 & 1 & 1.000 / 0.000 & 1.000 / 0.000 & 0.000 / 0.900 & 0.167 / 0.000\\
20 & 250 & 5 & 1.000 / 0.000 & 0.933 / 0.000 & 0.000 / 0.667 & 0.000 / 0.000\\
20 & 250 & 20 & 1.000 / 0.000 & 0.033 / 0.000 & 0.000 / 0.067 & 0.000 / 0.000\\
20 & 1000 & 1 & 1.000 / 0.000 & 1.000 / 0.000 & 0.000 / 0.367 & 0.067 / 0.000\\
20 & 1000 & 5 & 1.000 / 0.000 & 1.000 / 0.000 & 0.000 / 0.400 & 0.000 / 0.067\\
20 & 1000 & 20 & 1.000 / 0.000 & 0.700 / 0.000 & 0.000 / 0.100 & 0.000 / 0.033\\
\addlinespace
40 & 125 & 1 & 1.000 / 0.000 & 0.800 / 0.000 & 0.000 / 1.000 & 0.967 / 0.000\\
40 & 125 & 5 & 1.000 / 0.000 & 0.000 / 0.100 & 0.000 / 1.000 & 0.000 / 0.267\\
40 & 125 & 20 & 0.900 / 0.000 & 0.000 / 0.833 & 0.000 / 0.767 & 0.000 / 0.867\\
40 & 250 & 1 & 1.000 / 0.000 & 1.000 / 0.000 & 0.000 / 0.900 & 0.567 / 0.000\\
40 & 250 & 5 & 1.000 / 0.000 & 0.500 / 0.000 & 0.000 / 0.900 & 0.000 / 0.000\\
40 & 250 & 20 & 1.000 / 0.000 & 0.000 / 0.133 & 0.000 / 0.600 & 0.000 / 0.033\\
40 & 1000 & 1 & 1.000 / 0.000 & 1.000 / 0.000 & 0.000 / 0.467 & 0.200 / 0.000\\
40 & 1000 & 5 & 1.000 / 0.000 & 0.967 / 0.000 & 0.000 / 0.433 & 0.000 / 0.067\\
40 & 1000 & 20 & 1.000 / 0.000 & 0.567 / 0.000 & 0.000 / 0.400 & 0.033 / 0.033\\
\bottomrule
\end{tabular}
\begin{tablenotes}
\footnotesize
\item \textit{Note:} Win Rate / Loss Rate. Win = Post StarTime performs significantly better; Loss = Post StarTime performs significantly worse.
\end{tablenotes}
\end{threeparttable}
\end{table}

\begin{table}[!h]
\centering
\caption{\label{tab:mcs_qlike}Model Confidence Set Inclusion Rates (QLIKE)}
\centering
\begin{threeparttable}
\footnotesize
\setlength{\tabcolsep}{3pt}
\begin{tabular}[t]{cccccccc}
\toprule
\multicolumn{3}{c}{Configuration} & \multicolumn{5}{c}{Models} \\
\cmidrule(l{3pt}r{3pt}){1-3} \cmidrule(l{3pt}r{3pt}){4-8}
Lags & Window & Horizon & Random Walk & AR1 & HAR & Post StarTime & Simple StarTime\\
\midrule
20 & 125 & 1 & 0.133 & 0.633 & 1.000 & 0.467 & 0.300\\
20 & 125 & 5 & 0.167 & 0.967 & 1.000 & 0.867 & 0.967\\
20 & 125 & 20 & 0.100 & 1.000 & 0.667 & 0.633 & 0.867\\
20 & 250 & 1 & 0.067 & 0.700 & 1.000 & 0.800 & 0.667\\
20 & 250 & 5 & 0.067 & 0.933 & 1.000 & 0.967 & 0.967\\
20 & 250 & 20 & 0.067 & 1.000 & 0.967 & 0.933 & 1.000\\
20 & 1000 & 1 & 0.067 & 0.933 & 0.967 & 0.967 & 0.800\\
20 & 1000 & 5 & 0.033 & 0.933 & 1.000 & 0.967 & 1.000\\
20 & 1000 & 20 & 0.033 & 1.000 & 0.900 & 0.867 & 0.967\\
\addlinespace
40 & 125 & 1 & 0.133 & 0.667 & 1.000 & 0.433 & 0.267\\
40 & 125 & 5 & 0.133 & 1.000 & 1.000 & 0.600 & 0.867\\
40 & 125 & 20 & 0.033 & 1.000 & 0.733 & 0.267 & 0.767\\
40 & 250 & 1 & 0.067 & 0.700 & 1.000 & 0.800 & 0.600\\
40 & 250 & 5 & 0.067 & 0.933 & 1.000 & 0.833 & 0.867\\
40 & 250 & 20 & 0.067 & 1.000 & 0.867 & 0.767 & 0.867\\
40 & 1000 & 1 & 0.133 & 0.933 & 0.967 & 0.967 & 0.700\\
40 & 1000 & 5 & 0.067 & 0.900 & 1.000 & 0.867 & 0.933\\
40 & 1000 & 20 & 0.000 & 1.000 & 0.900 & 0.667 & 0.900\\
\bottomrule
\end{tabular}
\begin{tablenotes}
\footnotesize
\item \textit{Note}: Values represent the proportion of stocks included in the 95\% MCS.
\end{tablenotes}
\end{threeparttable}
\end{table}

\begin{table}[!h]
\centering
\caption{\label{tab:dm_qlike}Diebold-Mariano Win / Loss Rates: Post StarTime vs Benchmarks (QLIKE)}
\centering
\begin{threeparttable}
\footnotesize
\setlength{\tabcolsep}{3pt}
\begin{tabular}[t]{ccccccc}
\toprule
\multicolumn{3}{c}{Configuration} & \multicolumn{4}{c}{Models} \\
\cmidrule(l{3pt}r{3pt}){1-3} \cmidrule(l{3pt}r{3pt}){4-7}
Lags & Window & Horizon & Random Walk & AR1 & HAR & Simple StarTime\\
\midrule
20 & 125 & 1 & 0.800 / 0.000 & 0.067 / 0.033 & 0.000 / 0.533 & 0.467 / 0.000\\
20 & 125 & 5 & 0.900 / 0.000 & 0.000 / 0.067 & 0.000 / 0.167 & 0.000 / 0.067\\
20 & 125 & 20 & 0.633 / 0.000 & 0.000 / 0.333 & 0.000 / 0.033 & 0.000 / 0.300\\
20 & 250 & 1 & 0.900 / 0.000 & 0.333 / 0.000 & 0.000 / 0.300 & 0.233 / 0.000\\
20 & 250 & 5 & 0.933 / 0.000 & 0.067 / 0.000 & 0.000 / 0.167 & 0.000 / 0.000\\
20 & 250 & 20 & 0.967 / 0.000 & 0.000 / 0.033 & 0.000 / 0.067 & 0.000 / 0.167\\
20 & 1000 & 1 & 0.833 / 0.000 & 0.067 / 0.033 & 0.000 / 0.000 & 0.133 / 0.000\\
20 & 1000 & 5 & 0.967 / 0.000 & 0.100 / 0.000 & 0.000 / 0.067 & 0.000 / 0.000\\
20 & 1000 & 20 & 1.000 / 0.000 & 0.000 / 0.000 & 0.000 / 0.033 & 0.000 / 0.167\\
\addlinespace
40 & 125 & 1 & 0.833 / 0.000 & 0.033 / 0.033 & 0.000 / 0.600 & 0.433 / 0.000\\
40 & 125 & 5 & 0.733 / 0.000 & 0.000 / 0.300 & 0.000 / 0.467 & 0.000 / 0.200\\
40 & 125 & 20 & 0.633 / 0.000 & 0.000 / 0.600 & 0.000 / 0.300 & 0.000 / 0.700\\
40 & 250 & 1 & 0.867 / 0.000 & 0.300 / 0.000 & 0.000 / 0.233 & 0.433 / 0.000\\
40 & 250 & 5 & 0.900 / 0.000 & 0.033 / 0.033 & 0.000 / 0.333 & 0.000 / 0.000\\
40 & 250 & 20 & 0.933 / 0.000 & 0.000 / 0.233 & 0.000 / 0.067 & 0.000 / 0.167\\
40 & 1000 & 1 & 0.833 / 0.000 & 0.067 / 0.000 & 0.000 / 0.033 & 0.267 / 0.000\\
40 & 1000 & 5 & 0.967 / 0.000 & 0.100 / 0.000 & 0.000 / 0.133 & 0.000 / 0.067\\
40 & 1000 & 20 & 1.000 / 0.000 & 0.000 / 0.033 & 0.000 / 0.033 & 0.000 / 0.167\\
\bottomrule
\end{tabular}
\begin{tablenotes}
\small
\item \textit{Note: } Win Rate / Loss Rate. Win = Post StarTime performs significantly better; Loss = Post StarTime performs significantly worse. 
\end{tablenotes}
\end{threeparttable}
\end{table}

\clearpage

\section{Variables Used in the Macroeconomic Application}
\label{appendix:MacroVariables}

\begin{table}[h!]
\centering
\caption{Daily Macroeconomic Data Series}
\footnotesize
\setlength{\tabcolsep}{3pt}
\begin{threeparttable}
\begin{tabular}{llc}
\toprule
Series Name & Code & Transformation \\
\midrule
\addlinespace[0.3em]
\multicolumn{3}{l}{\textit{Market Indices}} \\
\hspace{1em}S\&P 500 & SP500 & $\Delta \log(x)$ \\
\hspace{1em}Dow Jones Industrial Average & DJIA & $\Delta\log(x)$ \\
\hspace{1em}NASDAQ Composite Index & NASDAQCOM & $\Delta\log(x)$ \\
\hspace{1em}CBOE Volatility Index: VIX & VIXCLS & $\log(x)$ \\
\addlinespace[0.5em]
\multicolumn{3}{l}{\textit{Policy Uncertainty}} \\
\hspace{1em}Economic Policy Uncertainty & EPU & $\log(x)$ \\
\bottomrule
\end{tabular}
\begin{tablenotes}
\footnotesize
\item \textit{Note: None of the daily macroeconomic data series are seasonally adjusted, nor are any included in the reduced set.}
\end{tablenotes}
\end{threeparttable}
\label{tab:macro_daily}
\end{table}

\begin{table}[h!]
\centering
\caption{Weekly Macroeconomic Data Series}
\footnotesize
\setlength{\tabcolsep}{2pt}
\begin{threeparttable}
\begin{tabular}{llcc}
\toprule
Series Name & Code & Transformation & Reduced Set \\
\midrule
\addlinespace[0.3em]
\multicolumn{4}{l}{\textit{Financial Indicators}} \\
\hspace{1em}Chicago Fed National Financial Conditions Index & NFCI & $x$ & \checkmark \\
\addlinespace[0.5em]
\multicolumn{4}{l}{\textit{Labor Market}} \\
\hspace{1em}Continued Claims (Insured Unemployment)$^\ast$ & CCSA & $\Delta \log(x)$ & \\
\hspace{1em}Initial Claims$^\ast$ & ICSA & $\Delta \log(x)$ & \checkmark \\
\bottomrule
\end{tabular}
\begin{tablenotes}
\footnotesize
\item \textit{Note: Series marked with an asterisk ($^\ast$) are seasonally adjusted. A checkmark (\checkmark) indicates the variable is included in the reduced set.}
\end{tablenotes}
\end{threeparttable}
\label{tab:macro_weekly}
\end{table}

\begin{table}[h!]
\centering
\caption{Monthly Macroeconomic Data Series}
\footnotesize
\begin{threeparttable}
\begin{tabularx}{\textwidth}{>{\hangindent=1em \hangafter=1 \RaggedRight}X l c c}
\toprule
Series Name & Code & Transformation & Reduced Set \\
\midrule
\addlinespace[0.3em]
\multicolumn{4}{l}{\textit{Income, Consumption, and Retail}} \\
Real Personal Income$^\ast$ & RPI & $\Delta \log(x)$ & \\
Real Disposable Personal Income$^\ast$ & DSPIC96 & $\Delta \log(x)$ & \\
Advance Retail Sales$^\ast$ & RSAFS & $\Delta \log(x)$ & \checkmark \\
Real Manufacturing and Trade Industries Sales$^\ast$ & CMRMTSPL & $\Delta \log(x)$ & \\

\addlinespace[0.5em]
\multicolumn{4}{l}{\textit{Labor Market}} \\
Unemployment Rate$^\ast$ & UNRATE & $\Delta x$ & \checkmark \\
All Employees, Total Nonfarm$^\ast$ & PAYEMS & $\Delta \log(x)$ & \checkmark \\

\addlinespace[0.5em]
\multicolumn{4}{l}{\textit{Real Economic Activity}} \\
Industrial Production: Total Index$^\ast$ & INDPRO & $\Delta \log(x)$ & \checkmark \\
Capacity Utilization: Total Index$^\ast$ & TCU & $\Delta x$ & \\
Total Business Inventories$^\ast$ & BUSINV & $\Delta \log(x)$ & \\
Manufacturers' New Orders: Total Manufacturing$^\ast$ & AMTMNO & $\Delta \log(x)$ & \checkmark \\
New Privately-Owned Housing Units Started$^\ast$ & HOUST & $\log(x)$ & \checkmark \\
New Privately-Owned Housing Units Authorized$^\ast$ & PERMIT & $\log(x)$ & \checkmark \\
New One Family Houses Sold$^\ast$ & HSN1F & $\log(x)$ & \\

\addlinespace[0.5em]
\multicolumn{4}{l}{\textit{Prices and Inflation}} \\
Consumer Price Index (CPI)$^\ast$ & CPIAUCSL & $\Delta^2 \log(x)$ & \checkmark \\
Consumer Price Index Less Food and Energy$^\ast$ & CPILFESL & $\Delta^2 \log(x)$ & \\
PCE: Chain-type Price Index$^\ast$ & PCEPI & $\Delta^2 \log(x)$ & \\
PCE Excluding Food and Energy$^\ast$ & PCEPILFE & $\Delta^2 \log(x)$ & \\
PCE: Food$^\ast$ & DFXARC1M027SBEA & $\Delta^2 \log(x)$ & \\
PCE: Energy goods and services$^\ast$ & DNRGRC1M027SBEA & $\Delta^2 \log(x)$ & \\
Producer Price Index by Commodity: All Commodities & PPIACO & $\Delta^2 \log(x)$ & \\
\bottomrule
\end{tabularx}
\begin{tablenotes}
\footnotesize
\item \textit{Note: Series marked with an asterisk ($^\ast$) are seasonally adjusted. A checkmark (\checkmark) indicates the variable is included in the reduced set.}
\end{tablenotes}
\end{threeparttable}
\label{tab:macro_monthly}
\end{table}

\begin{table}[h!]
\centering
\caption{Quarterly Macroeconomic Data Series}
\begin{threeparttable}
\footnotesize
\setlength{\tabcolsep}{2.3pt}
\begin{tabular}{llcc}
\toprule
Series Name & Code & Transformation & Reduced Set \\
\midrule
\addlinespace[0.3em]
\multicolumn{4}{l}{\textit{Aggregate Economic Activity}} \\
\hspace{1em}Gross Domestic Product$^\ast$ & GDP & $\Delta \log(x)$ & \checkmark \\
\addlinespace[0.5em]
\multicolumn{4}{l}{\textit{Labor Market}} \\
\hspace{1em}Nonfarm Business Sector: Unit Labor Costs$^\ast$ & ULCNFB & $\Delta \log(x)$ & \\
\bottomrule
\end{tabular}
\begin{tablenotes}
\footnotesize
\item \textit{Note: Series marked with an asterisk ($^\ast$) are seasonally adjusted. A checkmark (\checkmark) indicates the variable is included in the reduced set.}
\end{tablenotes}
\end{threeparttable}
\label{tab:macro_quarterly}
\end{table}

\clearpage

\section{Additional Results for the Macroeconomic Application}
\label{app:MacroAppendix}

\subsection{Plots}
\label{app:MacroAppendixPlots}

\begin{figure}[!h]
  \centering
    \includegraphics[width=\textwidth]{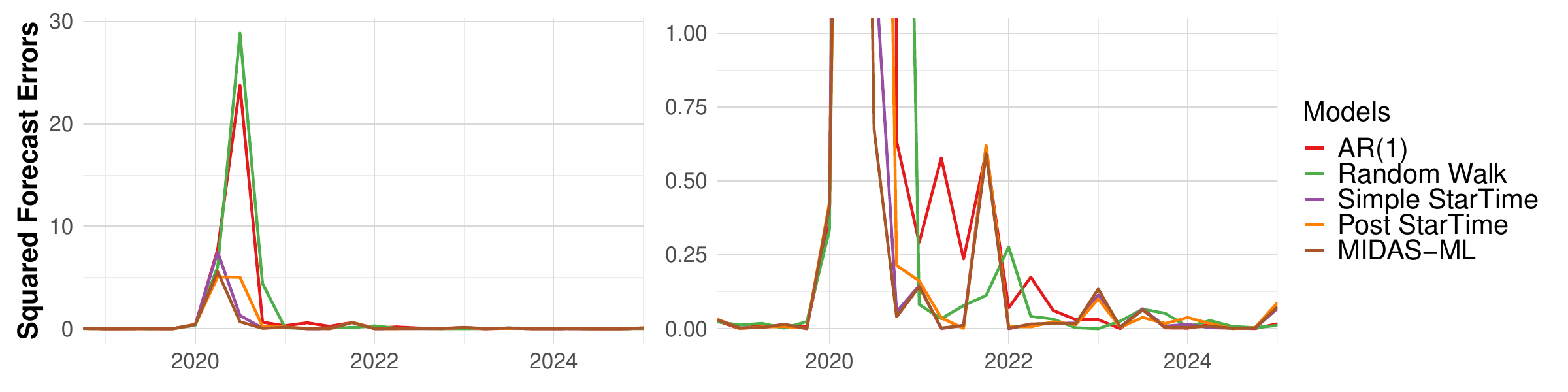}
  \caption{Evolution of squared forecast errors for the Reduced Set (49 predictors) with $W = 105$. 
  A censored zoom  accounts for COVID-19 volatility. The plotted squared forecast errors are multiplied by 1000 due to the small scale of GDP growth.}
  \label{fig:macro_reduced_105_forecast}
\end{figure}

\begin{figure}[h!]
  \centering
  \begin{subfigure}[b]{\textwidth}
  \includegraphics[width=\textwidth]{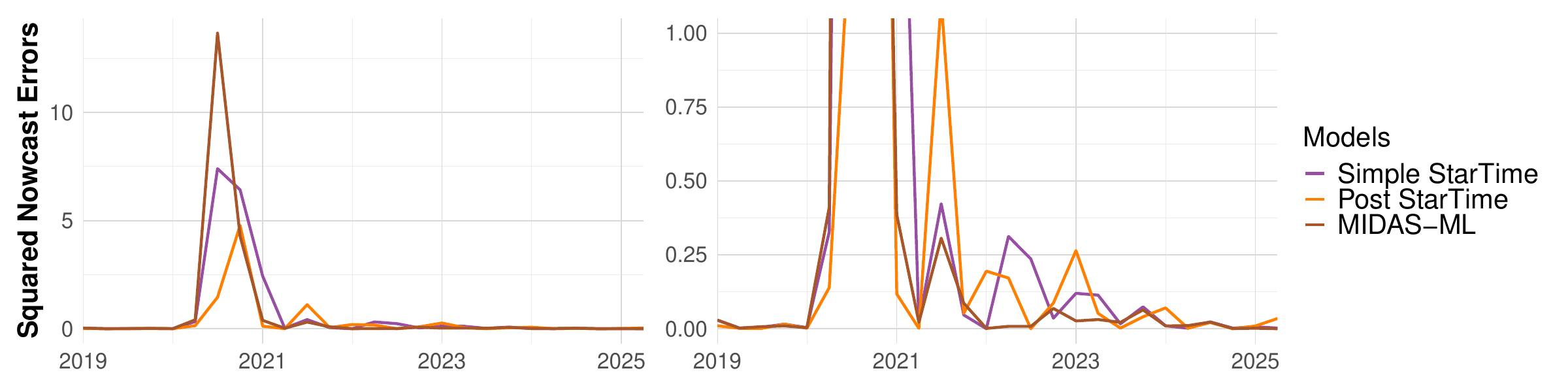}
    \caption{Nowcast}
  \end{subfigure}
  \vspace{1cm}
  \begin{subfigure}[b]{\textwidth}
    \centering
\includegraphics[width=\textwidth]{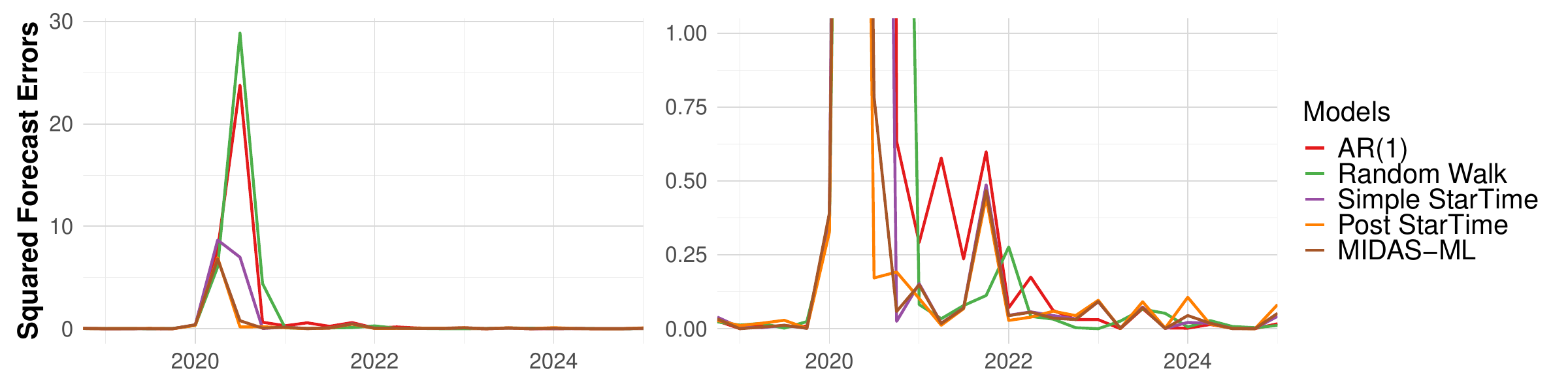}
    \caption{Forecast}
  \end{subfigure}
  \caption{Evolution of the squared forecast errors  for the Full Set (398 predictors) with $W = 105$. Both (a) and (b) include a censored zoom to account for COVID-19 volatility. The plotted squared forecast errors values are multiplied by 1000.}
\label{fig:macro_full_105}
\end{figure}

\begin{figure}[h!]
  \centering
  \begin{subfigure}[b]{\textwidth}
    \includegraphics[width=\textwidth]{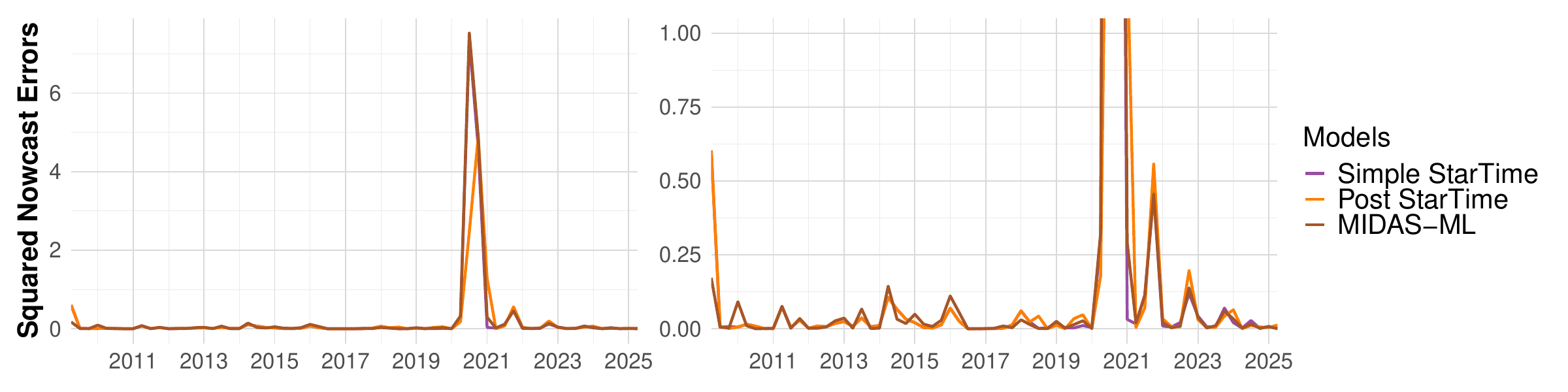}
    \caption{Nowcast}
  \end{subfigure}
  \vspace{1cm}
  \begin{subfigure}[b]{\textwidth}
    \centering
    \includegraphics[width=\textwidth]{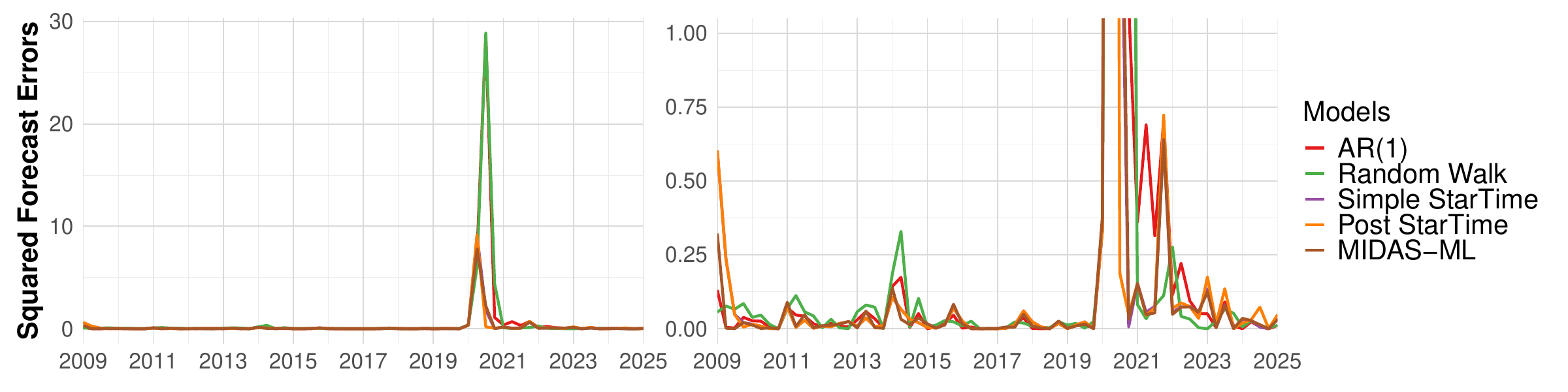}
    \caption{Forecast}
  \end{subfigure}
   \caption{Evolution of the squared forecast errors for the Full Set (398 predictors) with $W = 66$. Both (a) and (b) include a censored zoom to account for COVID-19 volatility. The plotted squared forecast errors values are multiplied by 1000.}
  \label{fig:macro_full_66}
\end{figure}

\begin{figure}[h!]
  \centering
  \begin{subfigure}[b]{\textwidth}
    \includegraphics[width=\textwidth]{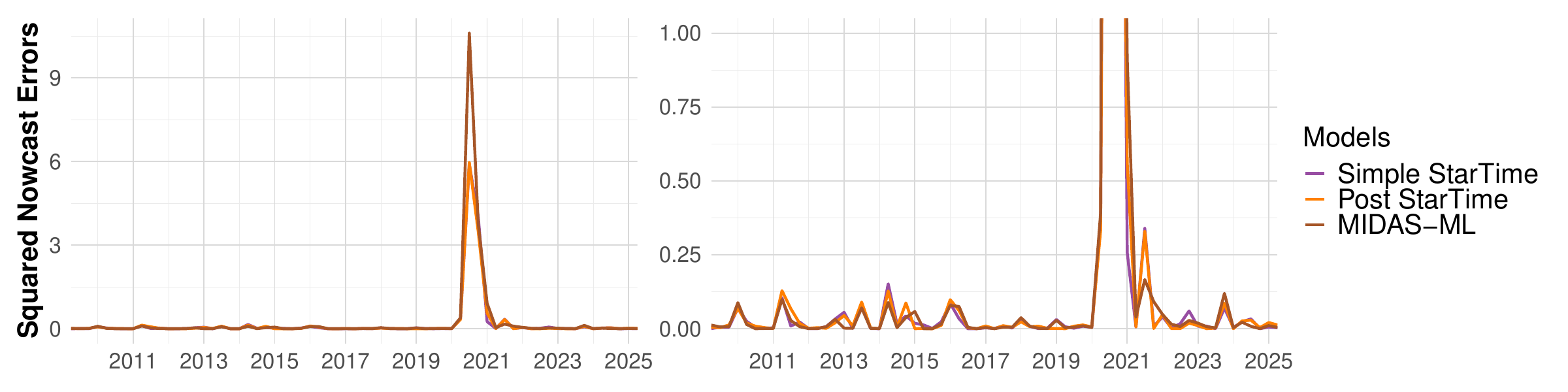}
    \caption{Nowcast}
  \end{subfigure}
  \vspace{1cm}
  \begin{subfigure}[b]{\textwidth}
    \centering
    \includegraphics[width=\textwidth]{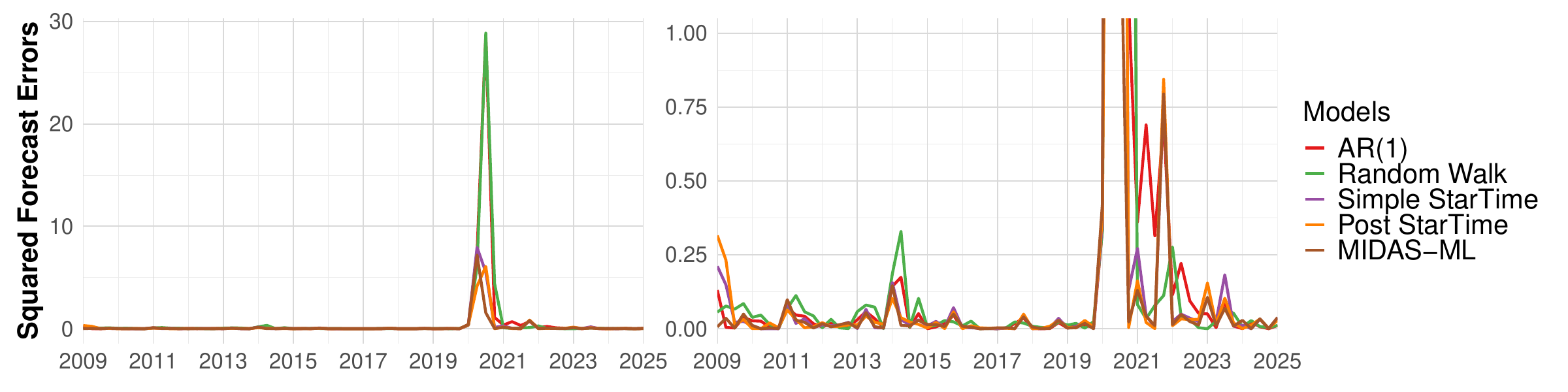}
    \caption{Forecast}
  \end{subfigure}
  \caption{Evolution of the squared forecast errors for the Reduced Set (49 predictors) with $W = 66$. Both (a) and (b) include a censored zoom to account for COVID-19 volatility. The plotted squared forecast errors values are multiplied by 1000.}
  \label{fig:macro_reduced_66}
\end{figure}

\clearpage

\subsection{Tables}
\label{app:MacroAppendixTests}

\begin{table}[h]
\centering
\caption{MSFE for window $W = 66$}
\footnotesize
\setlength{\tabcolsep}{2.5pt}
\begin{threeparttable}
\begin{tabular}{lcccccccc}
\toprule
& \multicolumn{2}{c}{Post StarTime}
& \multicolumn{2}{c}{Simple StarTime}
& \multicolumn{2}{c}{MIDAS-ML}
& AR(1) & Random walk \\
\cmidrule(lr){2-3} \cmidrule(lr){4-5} \cmidrule(lr){6-7}
& $h=0$ & $h=1$ & $h=0$ & $h=1$ & $h=0$ & $h=1$ &  &  \\
\midrule

\addlinespace[0.3em]
\multicolumn{9}{l}{\textit{Reduced Set}} \\
\hspace{1em}Total     & 0.184 & 0.206 & 0.186 & 0.258 & 0.264 & 0.175 & 0.637 & 0.651 \\
\hspace{1em}Non-peak  & 0.021 & 0.031 & 0.020 & 0.030 & 0.019 & 0.021 & 0.032 & 0.041 \\
\hspace{1em}COVID-Peak      & 1.342 & 1.456 & 1.367 & 1.884 & 2.005 & 1.272 & 4.947 & 4.994 \\

\addlinespace[0.5em]
\multicolumn{9}{l}{\textit{Full Set}} \\
\hspace{1em}Total     & 0.174 & 0.202 & 0.225 & 0.222 & 0.232 & 0.200 & 0.637 & 0.651 \\
\hspace{1em}Non-peak  & 0.032 & 0.042 & 0.030 & 0.039 & 0.025 & 0.029 & 0.032 & 0.041 \\
\hspace{1em}COVID-Peak      & 1.183 & 1.338 & 1.614 & 1.526 & 1.706 & 1.418 & 4.947 & 4.994 \\

\bottomrule
\end{tabular}
\begin{tablenotes}
\item \textit{Note: MSFE values are multiplied by  $1000$ due to the small scale of GDP growth.}
\end{tablenotes}
\end{threeparttable}
\label{tab:macro_w66}
\end{table}

\begin{table}[!h]
\centering
\caption{\label{tab:mcs_macro_w105_total} Model Confidence Set $p$-values: $T_{max}$ (Total, $W=105$)}
\begin{threeparttable}
{\footnotesize\setlength{\tabcolsep}{2pt}
\begin{tabular}[t]{lccccccc}
\toprule
\multicolumn{3}{c}{Configuration} & \multicolumn{5}{c}{Models} \\
\cmidrule(l{3pt}r{3pt}){1-3} \cmidrule(l{3pt}r{3pt}){4-8}
Reduction & Window & Horizon & \shortstack{Post\\StarTime} & \shortstack{Simple\\StarTime} & MIDAS-ML & AR(1) & \shortstack{Random\\Walk}\\
\midrule
\addlinespace[0.3em]
\multicolumn{8}{l}{}\\
\hspace{1em}Full & 105 & 0 & 1.000 & 0.183 & 0.729 & & \\
\hspace{1em}Full & 105 & 1 & 1.000 & 0.642 & 1.000 & & \\
\hspace{1em}Reduced & 105 & 0 & 1.000 & 1.000 & 1.000 & & \\
\hspace{1em}Reduced & 105 & 1 & 1.000 & 1.000 & 1.000 & & \\
\multicolumn{2}{l}{\hspace{1em}Univariate Benchmark} & & & & & 0.531 & 0.620 \\
\bottomrule
\end{tabular}}
\begin{tablenotes}
\footnotesize
\item \textit{Note: Values represent $p$-values for the Model Confidence Set (MCS) using the $T_{max}$ statistic. $p$-values $\ge 0.05$ indicate inclusion in the MCS.}
\end{tablenotes}
\end{threeparttable}
\end{table}

\begin{table}[!h]
\centering
\caption{\label{tab:mcs_macro_w105_offpeak}Model Confidence Set $p$-values: $T_{max}$ (Non-Peak, $W=105$)}
\begin{threeparttable}
{\footnotesize\setlength{\tabcolsep}{4pt}
\begin{tabular}[t]{lccccccc}
\toprule
\multicolumn{3}{c}{Configuration} & \multicolumn{5}{c}{Models} \\
\cmidrule(l{3pt}r{3pt}){1-3} \cmidrule(l{3pt}r{3pt}){4-8}
Reduction & Window & Horizon & \shortstack{Post\\StarTime} & \shortstack{Simple\\StarTime} & MIDAS-ML & AR(1) & \shortstack{Random\\Walk}\\
\midrule
\addlinespace[0.3em]
\multicolumn{8}{l}{}\\
\hspace{1em}Full & 105 & 0 & 0.550 & 0.537 & 1.000 & & \\
\hspace{1em}Full & 105 & 1 & 0.272 & 1.000 & 1.000 & & \\
\hspace{1em}Reduced & 105 & 0 & 1.000 & 1.000 & 1.000 & & \\
\hspace{1em}Reduced & 105 & 1 & 1.000 & 1.000 & 1.000 & & \\
\multicolumn{2}{l}{\hspace{1em}Univariate Benchmark} & & & & & 0.999 & 0.942 \\
\bottomrule
\end{tabular}}
\begin{tablenotes}
\item \textit{Note: Values represent $p$-values for the Model Confidence Set (MCS) using the $T_{max}$ statistic. $p$-values $\ge 0.05$ indicate inclusion in the MCS.}
\end{tablenotes}
\end{threeparttable}
\end{table}

\begin{table}[h!]
\centering
\caption{\label{tab:mcs_macro_w66_total}Model Confidence Set P-Values: $T_{max}$ (Total, $W=66$)}
\begin{threeparttable}
{\footnotesize\setlength{\tabcolsep}{4pt}
\begin{tabular}[t]{lccccccc}
\toprule
\multicolumn{3}{c}{Configuration} & \multicolumn{5}{c}{Models} \\
\cmidrule(l{3pt}r{3pt}){1-3} \cmidrule(l{3pt}r{3pt}){4-8}
Reduction & Window & Horizon & \shortstack{Post\\StarTime} & \shortstack{Simple\\StarTime} & MIDAS-ML & AR(1) & \shortstack{Random\\Walk}\\
\midrule
\addlinespace[0.3em]
\multicolumn{8}{l}{}\\
\hspace{1em}Full & 66 & 0 & 1.000 & 1.000 & 1.000 & & \\
\hspace{1em}Full & 66 & 1 & 1.000 & 1.000 & 1.000 & & \\
\hspace{1em}Reduced & 66 & 0 & 1.000 & 1.000 & 1.000 & & \\
\hspace{1em}Reduced & 66 & 1 & 1.000 & 1.000 & 1.000 & & \\
\multicolumn{2}{l}{\hspace{1em}Univariate Benchmark} & & & & & 0.529 & 0.546 \\
\bottomrule
\end{tabular}}
\begin{tablenotes}
\item \textit{Note: Values represent $p$-values for the Model Confidence Set (MCS) using the $T_{max}$ statistic. $p$-values $\ge 0.05$ indicate inclusion in the MCS.}
\end{tablenotes}
\end{threeparttable}
\end{table}

\begin{table}[!ht]
\centering
\caption{\label{tab:mcs_macro_w66_offpeak}Model Confidence Set P-Values: $T_{max}$ (Non-Peak, $W=66$)}
\begin{threeparttable}
{\footnotesize\setlength{\tabcolsep}{4pt}
\begin{tabular}[t]{lccccccc}
\toprule
\multicolumn{3}{c}{Configuration} & \multicolumn{5}{c}{Models} \\
\cmidrule(l{3pt}r{3pt}){1-3} \cmidrule(l{3pt}r{3pt}){4-8}
Reduction & Window & Horizon & \shortstack{Post\\StarTime} & \shortstack{Simple\\StarTime} & MIDAS-ML & AR(1) & \shortstack{Random\\Walk}\\
\midrule
\addlinespace[0.3em]
\multicolumn{8}{l}{}\\
\hspace{1em}Full & 66 & 0 & 1.000 & 1.000 & 1.000 & & \\
\hspace{1em}Full & 66 & 1 & 0.821 & 0.939 & 1.000 & & \\
\hspace{1em}Reduced & 66 & 0 & 1.000 & 1.000 & 1.000 & & \\
\hspace{1em}Reduced & 66 & 1 & 1.000 & 1.000 & 1.000 & & \\
\multicolumn{2}{l}{\hspace{1em}Univariate Benchmark} & & & & & 1.000 & 0.599 \\
\bottomrule
\end{tabular}}
\begin{tablenotes}
\item \textit{Note: Values represent $p$-values for the Model Confidence Set (MCS) using the $T_{max}$ statistic. $p$-values $\ge 0.05$ indicate inclusion in the MCS.}
\end{tablenotes}
\end{threeparttable}
\end{table}

\subsection{Lag Aggregation and Variable Selection}
Only variables and lags reaching the $35\%$ selection threshold are shown. The \emph{aggregation rate} is the share of estimation windows in which all lags of the variable are grouped into a single aggregated term; ``--'' marks variables below the threshold on this measure. The \emph{lag retention} is the share of windows in which the individual lag is selected with a non-zero coefficient. 

\begin{table}[H]
\centering
\caption{\label{tab:macro_coef_nowcast_full_w66}Aggregation and Selection: Full set, Nowcast, $W = 66$}
\begin{threeparttable}
\footnotesize
\setlength{\tabcolsep}{6pt}
\begin{tabular}[t]{lccc}
\toprule
& Aggregation & Lag & Lag \\[-1ex]
Variable & rate &  & retention \\
\midrule
CMRMTSPL & -- & 2 & 43.1\% \\
\addlinespace
BUSINV & -- & 1 & 43.1\% \\
\addlinespace
RSAFS & -- & 2 & 35.4\% \\
\bottomrule
\end{tabular}
\end{threeparttable}
\end{table}

\begin{table}[H]
\centering
\caption{\label{tab:macro_coef_nowcast_full_w105_pt1}Aggregation and Selection: Full set, Nowcast, $W = 105$ (Part 1 of 2)}
\begin{threeparttable}
\footnotesize
\setlength{\tabcolsep}{6pt}
\begin{tabular}[t]{lccc}
\toprule
& Aggregation & Lag & Lag \\[-1ex]
Variable & rate &  & retention \\
\midrule
BUSINV & 80.8\% & 0 & 88.5\% \\
 & & 1 & 96.2\% \\
 & & 2 & 80.8\% \\
\addlinespace
NFCI & 88.5\% & 0--3 & 88.5\% \\
 & & 4--11 & 84.6\% \\
\addlinespace
GDP & -- & 1 & 88.5\% \\
\addlinespace
EPU & 69.2\% & 0--35 & 65.4\% \\
 & & 36 & 69.2\% \\
 & & 37--39 & 65.4\% \\
 & & 40--42 & 69.2\% \\
 & & 43 & 76.9\% \\
 & & 44 & 69.2\% \\
 & & 45--49 & 76.9\% \\
 & & 50--59 & 69.2\% \\
\addlinespace
CCSA & 38.5\% & 0--6 & 61.5\% \\
 & & 7 & 76.9\% \\
 & & 8 & 61.5\% \\
 & & 9 & 76.9\% \\
 & & 10 & 61.5\% \\
 & & 11 & 73.1\% \\
\addlinespace
HOUST & 73.1\% & 0--2 & 73.1\% \\
\bottomrule
\end{tabular}
\end{threeparttable}
\end{table}

\begin{table}[H]
\centering
\caption{\label{tab:macro_coef_nowcast_full_w105_pt2}Aggregation and Selection: Full set, Nowcast, $W = 105$ (Part 2 of 2)}
\begin{threeparttable}
\footnotesize
\setlength{\tabcolsep}{6pt}
\begin{tabular}[t]{lccc}
\toprule
& Aggregation & Lag & Lag \\[-1ex]
Variable & rate &  & retention \\
\midrule
PERMIT & 61.5\% & 0 & 73.1\% \\
 & & 1--2 & 61.5\% \\
\addlinespace
INDPRO & 38.5\% & 0--1 & 38.5\% \\
 & & 2 & 73.1\% \\
\addlinespace
HSN1F & 69.2\% & 0--2 & 69.2\% \\
\addlinespace
ICSA & 57.7\% & 0--4 & 57.7\% \\
 & & 5 & 65.4\% \\
 & & 6--11 & 57.7\% \\
\addlinespace
ULCNFB & -- & 0 & 53.8\% \\
\addlinespace
RSAFS & 50.0\% & 0--2 & 50.0\% \\
\addlinespace
SP500 & -- & 10 & 46.2\% \\
 & & 41--42 & 38.5\% \\
 & & 59 & 38.5\% \\
\addlinespace
CMRMTSPL & -- & 2 & 46.2\% \\
\addlinespace
PPIACO & 42.3\% & 0--2 & 42.3\% \\
\addlinespace
PCEPI & 42.3\% & 0--2 & 42.3\% \\
\addlinespace
UNRATE & -- & 1 & 38.5\% \\
\addlinespace
NASDAQCOM & -- & 59 & 34.6\% \\
\addlinespace
VIXCLS & 34.6\% & 0--59 & 34.6\% \\
\addlinespace
AMTMNO & 34.6\% & 0--2 & 34.6\% \\
\addlinespace
CPILFESL & -- & 2 & 34.6\% \\
\addlinespace
PAYEMS & -- & 2 & 34.6\% \\
\bottomrule
\end{tabular}
\end{threeparttable}
\end{table}

\begin{table}[H]
\centering
\caption{\label{tab:macro_coef_nowcast_reduced_w66}Aggregation and Selection: Reduced set, Nowcast, $W = 66$}
\begin{threeparttable}
\footnotesize
\setlength{\tabcolsep}{6pt}
\begin{tabular}[t]{lccc}
\toprule
& Aggregation & Lag & Lag \\[-1ex]
Variable & rate &  & retention \\
\midrule
NFCI & 90.8\% & 0--11 & 90.8\% \\
\addlinespace
UNRATE & 73.8\% & 0--1 & 80.0\% \\
 & & 2 & 83.1\% \\
\addlinespace
AMTMNO & 72.3\% & 0--2 & 76.9\% \\
\addlinespace
PAYEMS & 46.2\% & 0--1 & 46.2\% \\
 & & 2 & 69.2\% \\
\addlinespace
PERMIT & 58.5\% & 0--2 & 58.5\% \\
\addlinespace
INDPRO & -- & 2 & 49.2\% \\
\addlinespace
ICSA & -- & 5 & 46.2\% \\
 & & 7 & 44.6\% \\
\addlinespace
RSAFS & -- & 2 & 43.1\% \\
\addlinespace
HOUST & 41.5\% & 0--2 & 41.5\% \\
\bottomrule
\end{tabular}
\end{threeparttable}
\end{table}

\begin{table}[H]
\centering
\caption{\label{tab:macro_coef_nowcast_reduced_w105}Aggregation and Selection: Reduced set, Nowcast, $W = 105$}
\begin{threeparttable}
\footnotesize
\setlength{\tabcolsep}{6pt}
\begin{tabular}[t]{lccc}
\toprule
& Aggregation & Lag & Lag \\[-1ex]
Variable & rate &  & retention \\
\midrule
AMTMNO & 96.2\% & 0--2 & 100.0\% \\
\addlinespace
UNRATE & 38.5\% & 0--1 & 96.2\% \\
 & & 2 & 100.0\% \\
\addlinespace
PAYEMS & -- & 0--1 & 84.6\% \\
 & & 2 & 100.0\% \\
\addlinespace
INDPRO & -- & 0--1 & 34.6\% \\
 & & 2 & 100.0\% \\
\addlinespace
ICSA & -- & 0 & 61.5\% \\
 & & 4 & 34.6\% \\
 & & 5 & 80.8\% \\
 & & 6 & 38.5\% \\
 & & 7 & 61.5\% \\
\addlinespace
GDP & -- & 1 & 69.2\% \\
\addlinespace
RSAFS & -- & 2 & 65.4\% \\
\addlinespace
PERMIT & 65.4\% & 0--2 & 65.4\% \\
\addlinespace
NFCI & 46.2\% & 0--11 & 46.2\% \\
\addlinespace
HOUST & 34.6\% & 0--2 & 34.6\% \\
\bottomrule
\end{tabular}
\end{threeparttable}
\end{table}

\begin{table}[H]
\centering
\caption{\label{tab:macro_coef_forecast_full_w66}Aggregation and Selection: Full set, Forecast, $W = 66$}
\begin{threeparttable}
\footnotesize
\setlength{\tabcolsep}{6pt}
\begin{tabular}[t]{lccc}
\toprule
& Aggregation & Lag & Lag \\[-1ex]
Variable & rate &  & retention \\
\midrule
\multicolumn{4}{c}{\textit{No variable reached the 35\% threshold.}} \\
\bottomrule
\end{tabular}
\end{threeparttable}
\end{table}

\begin{table}[H]
\centering
\caption{\label{tab:macro_coef_forecast_full_w105}Aggregation and Selection: Full set, Forecast, $W = 105$}
\begin{threeparttable}
\footnotesize
\setlength{\tabcolsep}{6pt}
\begin{tabular}[t]{lccc}
\toprule
& Aggregation & Lag & Lag \\[-1ex] Variable & rate &  & retention \\
\midrule
CMRMTSPL & -- & 1 & 100.0\% \\
\addlinespace
INDPRO & -- & 1 & 100.0\% \\
\addlinespace
HSN1F & -- & 1 & 96.2\% \\
\addlinespace
RSAFS & -- & 1 & 92.3\% \\
\addlinespace
PAYEMS & -- & 1 & 92.3\% \\
\addlinespace
CCSA & -- & 1 & 73.1\% \\
\addlinespace
DFXARC1M027SBEA & -- & 1 & 69.2\% \\
\addlinespace
SP500 & -- & 12 & 46.2\% \\
\addlinespace
PERMIT & -- & 2 & 34.6\% \\
\bottomrule
\end{tabular}
\end{threeparttable}
\end{table}

\begin{table}[H]
\centering
\caption{\label{tab:macro_coef_forecast_reduced_w66}Aggregation and Selection: Reduced set, Forecast, $W = 66$}
\begin{threeparttable}
\footnotesize
\setlength{\tabcolsep}{6pt}
\begin{tabular}[t]{lccc}
\toprule
& Aggregation & Lag & Lag \\[-1ex] Variable & rate &  & retention \\
\midrule
RSAFS & 44.6\% & 1 & 100.0\% \\
 & & 2--3 & 47.7\% \\
\addlinespace
INDPRO & 58.5\% & 1 & 100.0\% \\
 & & 2--3 & 72.3\% \\
\addlinespace
NFCI & 69.2\% & 1--4 & 72.3\% \\
 & & 5--12 & 50.8\% \\
\addlinespace
PERMIT & 64.6\% & 1 & 64.6\% \\
 & & 2 & 66.2\% \\
 & & 3 & 64.6\% \\
\bottomrule
\end{tabular}
\end{threeparttable}
\end{table}

\begin{table}[H]
\centering
\caption{\label{tab:macro_coef_forecast_reduced_w105}Aggregation and Selection: Reduced set, Forecast, $W = 105$}
\begin{threeparttable}
\footnotesize
\setlength{\tabcolsep}{6pt}
\begin{tabular}[t]{lccc}
\toprule
& Aggregation & Lag & Lag \\[-1ex] Variable & rate &  & retention \\
\midrule
RSAFS & -- & 1 & 100.0\% \\
\addlinespace
PERMIT & 96.2\% & 1 & 96.2\% \\
 & & 2 & 100.0\% \\
 & & 3 & 96.2\% \\
\addlinespace
INDPRO & -- & 1 & 100.0\% \\
 & & 2 & 53.8\% \\
\addlinespace
PAYEMS & -- & 1 & 84.6\% \\
\addlinespace
ICSA & -- & 1 & 34.6\% \\
 & & 2 & 73.1\% \\
 & & 3--4 & 34.6\% \\
 & & 9 & 73.1\% \\
 & & 10 & 42.3\% \\
 & & 11 & 53.8\% \\
 & & 12 & 38.5\% \\
\addlinespace
AMTMNO & -- & 1 & 61.5\% \\
\bottomrule
\end{tabular}
\end{threeparttable}
\end{table}

\end{document}